\newcolumntype{d}[1]{D{,}{.}{#1} }
\newtheorem{claim}{Claim}
\def\polylog{\operatorname{polylog}}
\newcommand{\np}{\mbox{\sf NP}}
\newcommand{\hide}[1]{}
\newcommand{\MaxNCG}{\textsc{MaxNCG}\xspace}
\newcommand{\SumNCG}{\textsc{SumNCG}\xspace}
\title{Locality-based Network Creation Games}
\author{
Davide Bilò
\affil{Università di Sassari}
Luciano Gualà
\affil{Università di Roma ``Tor Vergata''}
Stefano Leucci
\affil{Università degli Studi dell'Aquila}
Guido Proietti
\affil{Università degli Studi dell'Aquila, \& Istituto di Analisi dei Sistemi ed Informatica (CNR)}
}
\begin{abstract}
Network creation games have been extensively studied, both from economists and computer scientists, due to their versatility in modeling individual-based community formation processes, which in turn are the theoretical counterpart of several economics, social, and computational applications on the Internet.
In their several variants, these games model the tension of a player between her two antagonistic goals: to be as close as possible to the other players, and to activate a cheapest possible set of links.
However, the generally adopted assumption is that players have a \emph{common and complete} information about the ongoing network, which is quite unrealistic in practice.
In this paper, we consider a more compelling scenario in which players have only limited information about the network they are embedded in. More precisely, we explore the
game-theoretic and computational implications of assuming that players have a complete knowledge of the network structure only up to a given radius $k$, which is one of the most qualified \emph{local-knowledge models} used in distributed computing. To this respect, we define a suitable equilibrium concept, and we provide a comprehensive set of upper and lower bounds to the price of anarchy for the entire range of values of $k$, and for the two classic variants of the game, namely those in which a player's cost --- besides the activation cost of the owned links --- depends on the maximum/sum of all the distances to the other nodes in the network, respectively. These bounds are finally assessed through an extensive set of experiments.
\end{abstract}
\keywords{Game Theory, Network Creation Games, Price of Anarchy, Local Knowledge}
\begin{document}
\makeatletter
\def\endbottomstuff{\vskip-13pt
\strut
\end@float}
\runningfoot{\relax}
\def\firstfoot{\relax}
\makeatother

\begin{bottomstuff}
\textcopyright ACM, 2016. This is the author's version of the work. It is posted here by permission of ACM for your personal use. Not for redistribution. The definitive version was published in \emph{ACM Transactions on Parallel Computing}, Vol. 3(1), ACM, 6:1-6:26, 2016. http://doi.acm.org/10.1145/2938426

This work was partially supported by the Research Grant PRIN 2010 ``ARS TechnoMedia'', funded by the Italian Ministry of Education, University, and Research.

A preliminary version of this work appeared in the \emph{Proceedings of the 26th ACM Symposium on Parallelism in Algorithms and Architectures (SPAA'14)}, ACM, 277-286, 2014. DOI: http://dx.doi.org/10.1145/2612669.2612680

Author's addresses: D. Bilò, Dipartimento di Scienze Umanistiche e Sociali,
Università di Sassari, Italy; L. Gualà, Dipartimento di Ingegneria dell'Impresa, Università di Roma ``Tor Vergata", Italy; S. Leucci, Dipartimento di Ingegneria e Scienze dell'Informazione e Matematica, Università degli Studi dell'Aquila, Italy; G. Proietti, Dipartimento di Ingegneria e Scienze dell'Informazione e Matematica, Università degli Studi dell'Aquila, Italy, and Istituto di Analisi dei Sistemi ed Informatica, CNR, Rome, Italy.
\end{bottomstuff}

\maketitle

\section{Introduction}
In a \emph{network creation game} (NCG), we are given $n$ players (identified as the nodes of a graph), which attempt to settle an undirected interconnection network. This is realized by letting each player connect herself \emph{directly} to a subset of players, through the activation of the corresponding set of incident links. These links can then be freely used by everyone, and so a player remains connected to non-adjacent players \emph{indirectly}, i.e., by following a \emph{shortest path} in the currently active network. In such a decentralized process, a player has then to strategically balance the sum of two costs: the \emph{building cost}, which is given by the sum of the costs she incurs in activating her links, and the \emph{routing cost}, which is a function of the length of the shortest paths towards the other players.

Due to their generality, it is evident that NCGs can model very different practical situations. Just to mention an example, NCGs are fit to model the decentralized construction of \emph{communication} networks, in which the constituting components (e.g., routers and links) are activated and maintained by different owners, as in the Internet.
In the very first formulation of the game \cite{JW96}, the building and routing costs are defined as follows. Concerning the building cost, each activated link $(i,j)$ has a cost $c_{ij}$ (respectively, $c_{ji}$) for player $i$ (respectively, $j$), and the formation of a link requires the consent of both players involved (since each of the parties pays the corresponding activating cost), while link severance can be done unilaterally. On the other hand, the routing cost is given by a function additively depending on the distances to all the other players.
Later on, Fabrikant \emph{et al.} \cite{FLM03} developed a simplified version of this model, which is also the most popular one in the field of \emph{Algorithmic Game Theory} (AGT), namely that in which the activation of each link has a fixed cost $\alpha > 0$, and this is incurred by the activating player only and without the consent of the adjacent player, while the routing cost is simply the sum of the distances to all the other players (in graph terminology, this is known as the \emph{status} of a node). Besides this simplification, the merit of such a paper was that of emphasizing how the social utility for a (very large) system as a whole is affected by the selfish behavior of the players, which was instead downplayed by the economists, that were more focused on system stability issues. This new perspective inspired then a sequel of papers in the AGT community, as detailed in the following.

\paragraph{Previous work on NCGs}
More formally, the form of a NCG as provided in \cite{FLM03}, which we call \SumNCG, is as follows: we are given a set of $n$ players, say $V$, where the
strategy space of player $u \in V$ is the power set $2^{V \setminus\{u\}}$. Given
a combination of strategies $\sigma=(\sigma_u)_{u \in V}$, let $G(\sigma)$ denote the underlying undirected
graph whose node set is $V$, and whose edge set is
$E(\sigma)=\{(u,v): u \in V \wedge v \in \sigma_u\}$. 
Then, the \emph{cost} incurred by player $u$ in $\sigma$ is
\begin{equation}
\label{sum} C_u(\sigma) = \alpha \cdot |\sigma_u| + \sum_{v \in V}
d_{G(\sigma)}(u,v)
\end{equation}
\noindent where $d_{G(\sigma)}(u,v)$ is the distance between $u$
and $v$ in $G(\sigma)$.
When a player takes an action (i.e., activates a subset of incident edges), she aims to keep this cost as low as possible. Under the assumption of a complete knowledge of $G(\sigma)$, we therefore have that a player $u$ is fully aware that after switching from strategy $\sigma_u$ to strategy $\sigma'_u$, the network will change to $G(\sigma_{-u}, \sigma'_u)$.
Thus, a \emph{Nash Equilibrium}\footnote{In this paper, we only
focus on \emph{pure}-strategy Nash equilibria.} (NE) for the game is a strategy profile $\bar{\sigma}$ such that for every player $u$ and every strategy profile $\sigma_u$, we have that $C_u(\bar{\sigma}) \leq C_u(\bar{\sigma}_{-u}, \sigma_u)$. If we characterize the space of NE in terms of the \emph{Price of Anarchy}
(PoA), i.e., the ratio between the social cost of the costlier NE to the optimal
(centralized) social cost, then it has been shown this is constant for all values of $\alpha$ except for $n^{1-\varepsilon} \leq \alpha < 65 \, n$, for any $\varepsilon \geq 1/\log n$ (see \cite{MMM13,MS10}), while an upper bound of $2^{O(\sqrt{\log n})}$ is known for the remaining values of $\alpha$ \cite{DHM12}. Moreover, very recently, in \cite{GHLS13} it was proven that for all constant non-integral $\alpha \geq 2$, the PoA is bounded by $1+ o(1)$.

A first natural variant of \SumNCG was introduced in \cite{DHM12}, where the \emph{eccentricity} of a player rather than her status was considered as a measure of centrality in the network, and then the player cost function was redefined
as follows:
\begin{equation}
\label{max} C_u(\sigma) = \alpha \cdot |\sigma_u| + \max \{d_{G(\sigma)}(u,v):v \in V\}.
\end{equation}
\noindent This variant, named \MaxNCG, received further
attention in \cite{MS10}, where the authors improved the PoA of
the game on the whole range of values of $\alpha$, obtaining in this case that the PoA is constant for all values of $\alpha$ except for $129 > \alpha = \omega(1/\sqrt{n})$, while for the remaining values of $\alpha$, it is at most $2^{O(\sqrt{\log n})}$.

Besides these two basic models, many variations on the theme have been defined.
We mention those obtained by limiting the modification a player can do on her current strategy
(see \cite{ADH10,L12,MS12}), or by budgeting either the
number of edges a player can activate or her eccentricity (see \cite{LPR08,EFM11,BGP12}), or finally by constraining the set of available links to a host graph (see \cite{BGLP12,DHM09}).
Generally speaking, in all the above models the obtained upper bounds on the PoA are asymptotically worse than those we get in the two basic models. 

\paragraph{Criticisms to the standard model}
Observe that while the general assumption that players have a \emph{common and complete} information about the ongoing network is feasible for small-size instances of the game, this becomes unrealistic for large-size networks. This is rather problematic, since the asymptotic analysis which guides the AGT literature requires instead a growing size of the input. Moreover, quite paradoxically, the full-knowledge assumption is not simplifying at all: it makes computationally unfeasible for a player to select a best-response strategy, or even to check whether she is actually in a NE!

Very recently, the same observation leads Ballester Pla \emph{et al.} to consider in \cite{CPV09} a more compelling scenario for the related class of \emph{network} (or \emph{graphical}) \emph{games}. In a graphical game, players are embedded in a network, and the cost of a player depends on her action and on those of her neighbors, and thus is correlated to the (knowledge of the) entire network. In \cite{CPV09} the authors assume instead that players have a complete knowledge of the network structure up to a given radius $k$, and use this information to make up a belief about the rest of the network. For this model, they provide a closed formula to compute a \emph{Bayes-Nash equilibrium} for the game, and show an interesting relationship with a scenario in which players have a \emph{bounded rationality} (i.e., they take a step by only exploring a subset of the strategy space).

Another work which constrains the available strategies of the players according to a concept of locality is \cite{H13}. In this work the author studies the non-coordinated process of \emph{matching formation} where each player can (potentially) match with any other player having distance at most $k$ in a graph which depends on the current state of the game.

\paragraph{Our new NCG model}
In this paper we concentrate on \MaxNCG and \SumNCG, in this order of presentation and importance, but we deviate from the standard full-knowledge model, and we explore the theoretical implications on the two games induced by the assumption that players have a partial view of the network. More precisely, we consider the players to have only knowledge of their \emph{$k$-neighborhood} (as in \cite{CPV09}), i.e., each player knows $k$ and the entire network up to the nodes at distance at most $k$ from herself. Furthermore, we also assume that players do not even know the size $n$ of the network (in distributed computing terminology, the system is \emph{uniform}).\footnote{According to the spirit of the game, we assume that in our model the players initially sit on a connected network.}

Despite of this partial knowledge of the network structure, the players keep on using the entire network, and so their cost function is still given by Eqs. \eqref{max} and \eqref{sum}, respectively. However, such a cost must now be revised as implicitly incurred by the players --- as a consequence of using the network --- rather than being explicitly known.
On the other hand, consistently with the model, the strategy space of a player is now restricted to selecting a subset of nodes in her $k$-neighborhood.
So, it is in the best interest of a player to reduce an unknown \emph{global} cost, but with the limitation of only knowing and modifying a \emph{local} portion of the network.
This ambitious task must be modeled through a coherent definition of the players' rational behavior, as we explain in the following.
Actually, a player has a partial (defective) view of the network, and thus before taking a step, she has to evaluate whether such a choice is convenient in \emph{every} realizable network which is compatible with her current view. More formally, let $\sigma_u$ be the strategy played by player $u$, and define $\Sigma|_{\sigma_u}$ to be the set of strategy profiles $\sigma = (\sigma_{-u},\sigma_u)$ of the players such that the network $G(\sigma)$ is realizable according to player $u$'s view. Let

\begin{equation}
\label{eq:delta}
\Delta(\sigma_u,\sigma_u') = \max_{\sigma \in \Sigma|_{\sigma_u}} \{C_u( \sigma_{-u},\sigma'_u)-C_u(\sigma)\}
\end{equation}
\noindent

\hide{let ${\cal G}$ denote the set of realizable networks according to $u$'s view, and let $C_u(\sigma,G)$ denote what her cost would be if the actual network was $G \in \cal G$. Then, let
\begin{equation}
\label{eq:delta}
\Delta(\sigma_u,\sigma_u') = \max_{G \in \cal G} \{C_u((\sigma_{-u},\sigma'_u),G)-C_u(\sigma,G)\}
\end{equation}}

\noindent
denote the worst possible cost difference player $u$ would have in switching from $\sigma_u$ to $\sigma'_u$. For our model, we use a suitable equilibrium concept (weaker than NE), that we call \emph{Local Knowledge Equilibrium} (LKE), and which is defined as a strategy profile $\bar{\sigma}$ such that for every player $u$ and every strategy profile $\sigma_u$, we have that $\Delta(\bar{\sigma}_u,\sigma_u) \geq 0$.

As the set of realizable networks $G(\sigma)$ can be infinite, it might appear that a player is not even able to  determine if a strategy is convenient.
However, in Section~\ref{sect:conv_br} we will show that, in contrast with the intuition, this is not the case. In particular, for \MaxNCG we will show that the worst-case scenario for a player is the one in which the network coincides with her view.
Therefore, the player only needs to take into account her view when evaluating a new strategy, ignoring in some sense the portion of the network she cannot see.
This also means that, in our model, a player behaves \emph{exactly} as she would do in the full-knowledge game played on the graph induced by her $k$-neighborhood.
This crucial analogy (which, in contrast, does not hold so tightly for \SumNCG, as we discuss later) also explains why we put first the local-knowledge version of \MaxNCG in our study: it allows us to fairly compare our game with the traditional \MaxNCG, since we are only concerned on how the bounded view will impact on (the dynamics of) the game, given that the behavior of the players will remain the same.
Besides that, we point out another remarkable property of our model:
differently from the standard model, we have that the computational hardness of establishing an improving strategy is now depending not on the size of the entire network, but only on the size of her $k$-neighborhood.
Therefore, although in principle this latter one could be $\Theta(n)$ already for small values of $k$, we believe that in practice the situation may be quite different, as the size of the known network is expected to be constant (or at least very small) compared to $n$.

Regarding \SumNCG, it is instead easy to see that if a player increases the distance of any
vertex $x$ at distance exactly $k$ in her view, then her cost might increase. Indeed, as the rest of the network is unknown to the player, it might be the case that the distance of a large number of non-viewable nodes will increase as well. Nevertheless, for every other strategy, we show that the worst-case network coincides with the player's view.
In conclusion, the above discussion shows that the player can choose and evaluate her strategies as in the classical \SumNCG, but with the exception of the strategies that increase the distances of the nodes at distance $k$, which are forbidden. Thus, there is a dyscrasia between the full- and the local-knowledge version of the game: in this latter one, a player will actually have a more conservative behavior, since an improving strategy w.r.t. her partial view (i.e., ignoring what is outside her view) would not necessarily be an improving strategy w.r.t. the entire network (as it was the case for \MaxNCG)!

\paragraph{Our results}
Having in mind this solution concept and this asymmetric relationship of the two local-knowledge games w.r.t. to their classical counterpart, we characterize the corresponding space of equilibria w.r.t. the social optimum through the study of upper and lower bounds to the PoA. We remark that, as the set of LKEs is broader than the set of NEs, the PoA in our model can only be worse than the PoA in the full-knowledge model.

First, we consider \MaxNCG, and we give three lower bounds to the PoA which are based on different constructions each holding in different ranges of $k$ and $\alpha$. One of these is based on a dense graph, while the other two will have an high social cost due to their large diameter. In this latter case, the difficulty of the construction relies on the fact that, when $\alpha$ is small, we need to guarantee that no player can decrease her cost by buying new edges. We deal with this issue by carefully exploiting the defective views of the players, i.e., we provide a non-trivial construction where every player is not aware that buying a small number of edges would reduce her cost.

We also provide an upper bound to the PoA by considering both the density and the diameter of an equilibrium graph. In order to prove this bound, we take inspiration from techniques successfully used, for example, in \cite{ADH10,DHM12}, which allow to (lower) bound the number of nodes within a certain distance from a player. However, these techniques cannot be directly applied to our model since they require additional work to cope with the concept of locality.

The bounds to the PoA that arise from the various combinations of these results are discussed in detail in Section~\ref{sec:conclusions_max}, and they are essentially tight for many ranges of $\alpha$ and $k$. Here we just outline some prominent implications of our results.
For example, for constant values of $k$ (regardless of $\alpha$) we are able to exhibit stable graphs having diameter $\Omega(n)$. This immediately implies that the PoA is $\Omega(\frac{n}{1+\alpha})$, which is fairly bad. 
However, one might expect the PoA to decrease for large values of $\alpha$. This is not the case, as we can show a tight lower bound of $\Omega(n^\frac{1}{\Theta(k)})$. This is in sharp contrast with the classical full-knowledge version of the game where the PoA is constant as soon as $\alpha \ge 129$.
On the other hand, when $k$ increases, the PoA decreases, although this happens quite slowly. Indeed, even when $k=O(2^{\sqrt{\log n}})$ and $\alpha=O(\log n)$ the PoA is still $\Omega(n^{1-\epsilon})$ for every $\epsilon>0$.
On the bright side, as soon as $k=\Omega(n^\epsilon)$ for any $\epsilon > 0$, we have that, in every LKE, each player has a complete knowledge of the network, and so the PoA coincides with the PoA of the full-knowledge game, hence it is mostly constant.

Then, we consider the sum version of the game, and we show that some of the lower bound schemes used for \MaxNCG can be extended to \SumNCG as well. In particular, a strong lower bound of $\Omega\big(\frac{n}{k}\big)$ to the PoA holds if $k \le c' \cdot \sqrt[3]{\alpha}$ and $\alpha \le n$, for a suitable constant $c'$. Observe that the latter lower bound is at least $\Omega(n^{\frac{2}{3}})$. Moreover, we show that for $\alpha \le n$, the set of LKEs coincides with the set of NEs as soon as $k \ge c \cdot \sqrt{\alpha}$, for a suitable constant $c$. Thus, in this region the PoA is constant, as a consequence of the corresponding result for the full-knowledge version of the game. Unfortunately, we cannot exhibit non-trivial upper bounds for the remaining values of the $(\alpha,k)$-space, and we leave this as a future goal of our research.

\paragraph{Paper organization}
The paper is organized as follows: in Section~\ref{sect:conv_br} we characterize the player's behavior, and we provide some remarks on the complexity of computing a best-response strategy and on the convergence issues of the iterated version of the game. In Sections~\ref{sect:PoA} and ~\ref{sec:sum} we focus on the main results of this paper, namely the study of the PoA for \MaxNCG and \SumNCG, respectively. Then, in Section~\ref{sec:experiments} we provide an extensive set of experiments, which for a twofold reason we restricted to \MaxNCG: on one hand, we have for it a more exhaustive theoretical characterization of the PoA space to compare with, and on the other hand, as we will explain in more detail in the section, for \MaxNCG is computationally feasible to find a best-response strategy of a player for reasonably large values of $n$ and $k$. Finally, Section~\ref{sec:conclusions} concludes the paper and provides few directions of future research.

\section{Preliminary remarks}
\label{sect:conv_br}

We start by showing that, despite the defective knowledge of the network, a player is able to evaluate whether a strategy is convenient in a worst-case scenario. In particular, when a player $u$ changes her strategy from $\sigma_u$ to $\sigma'_u$, she needs to evaluate $\Delta(\sigma_u, \sigma_u')$ by figuring out a realizable network $G(\sigma)$ maximizing \eqref{eq:delta}. In the following propositions we will characterize this worst-case network for both \MaxNCG and \SumNCG.

Let $H$ be the view of $u$ in $G(\sigma)$, i.e. the subgraph of $G(\sigma)$ induced by the $k$-neighborhood of $u$, and let $G \in \mathcal{G}$ be a generic realizable network w.r.t.\ $H$.
Let $G' = G \setminus ( \{u\} \times \sigma_u) \cup ( \{u\} \times \sigma'_u)$ be the network $G$ after the strategy change.
In a similar manner, let $H'=H \setminus ( \{u\} \times \sigma_u) \cup ( \{u\} \times \sigma'_u)$ be the old view of $u$ modified according to the strategy change. Notice that $H'$ might not coincide with the view of $u$ in $G'$.

\begin{proposition}
	\label{prop:maxncg_best_response}
	In \MaxNCG, the worst-case network maximizing \eqref{eq:delta} coincides with $H$.
\end{proposition}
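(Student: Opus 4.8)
The plan is to reduce the statement of Proposition~\ref{prop:maxncg_best_response} to a comparison of eccentricities, and then to show that for \emph{every} realizable $G\in\mathcal{G}$ the quantity maximized in \eqref{eq:delta} is no larger than its value at $G=H$. First I would note that the building-cost term of \eqref{max} contributes exactly $\alpha\bigl(|\sigma'_u|-|\sigma_u|\bigr)$ independently of $G$, so it is irrelevant to the maximization; hence it suffices to control the routing term, i.e.\ the eccentricity $e_G:=\max_{v}d_G(u,v)$, and to prove $e_{G'}-e_G\le e_{H'}-e_H$ for all realizable $G$, with equality at $G=H$.

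Two structural facts would drive the argument. (i) Since $H$ is the subgraph of $G$ induced by the $k$-neighborhood of $u$, every shortest path from $u$ to a view node stays inside the view, so $d_H(u,v)=d_G(u,v)$ for all such $v$; therefore $e_G\ge e_H$, and whenever hidden nodes exist we have $e_H=k$, because any shortest path from $u$ to a hidden node passes through a view node at distance exactly $k$. Furthermore $H'$ is precisely the induced subgraph of $G'$ on the same vertex set (the edge edits are identical and touch only edges of $u$, whose endpoints lie in the view both before and after the change), so $d_{G'}(u,v)\le d_{H'}(u,v)\le e_{H'}$ for every view node $v$. (ii) Because only the edges incident to $u$ are modified, the graphs $G\setminus\{u\}$ and $G'\setminus\{u\}$ coincide; in particular the hidden part of the network and its attachment to the boundary are \emph{identical} in $G$ and $G'$. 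Writing $h(b,v)$ for the length of a shortest path from a boundary node $b$ to $v$ that avoids the interior of the view, this gives $d_G(u,v)=\min_b\{d_G(u,b)+h(b,v)\}$ and $d_{G'}(u,v)=\min_b\{d_{G'}(u,b)+h(b,v)\}$, where the function $h$ is the same in both graphs and the minima range over boundary nodes, each satisfying $d_G(u,b)=k$ (only distance-$k$ view nodes can have hidden neighbors, since a view node at distance $d<k$ has all its neighbors at distance $\le d+1\le k$).

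With these facts in hand I would split on the vertex $z$ realizing $e_{G'}$. If $z$ is a view node, then $e_{G'}\le e_{H'}$ by~(i), and combining this with $e_G\ge e_H$ yields the desired bound immediately. The main obstacle is the case of a \emph{hidden} $z$. Here I would take the boundary node $b'$ attaining $d_G(u,z)=k+h(b',z)$; feeding this same $b'$ into the minimization for $G'$ gives $e_{G'}\le d_{G'}(u,b')+h(b',z)$, and subtracting $e_G\ge d_G(u,z)=k+h(b',z)$ makes the unchanged hidden distance $h(b',z)$ cancel, leaving $e_{G'}-e_G\le d_{G'}(u,b')-k$. Bounding $d_{G'}(u,b')\le e_{H'}$ via~(i) and using $k=e_H$ then closes the chain, giving $e_{G'}-e_G\le e_{H'}-e_H$.

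The delicate point, and the step I expect to require the most care, is exactly this cancellation: the farthest hidden vertex in $G'$ and the boundary node through which it is reached need not coincide with those used in $G$, so the argument must be anchored on the boundary node that is optimal for $G$ while exploiting that the hidden-region distances $h(\cdot,z)$ are literally the same graph quantity in the two networks. Once this is established, every realizable $G$ satisfies $e_{G'}-e_G\le e_{H'}-e_H$ with equality at $G=H$, so the maximum in \eqref{eq:delta} for \MaxNCG is attained when the network coincides with the view $H$, as claimed.
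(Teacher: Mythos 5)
Your proof is correct and follows essentially the same route as the paper's: separate out the building cost, consider the vertex realizing the eccentricity in $G'$, and in the nontrivial case anchor on the distance-$k$ boundary vertex of a shortest path in $G$, cancel the unchanged beyond-the-view distance, and conclude via $d_{G'}(u,\cdot)\le d_{H'}(u,\cdot)$ together with $e_H\le k$. The paper realizes the same cancellation through the surviving subpath from $x$ to $y$ (your statement that $h(b',z)$ is identical in $G$ and $G'$), and splits cases on $d_G(u,y)<k$ versus $d_G(u,y)\ge k$ rather than view versus hidden vertex, but these differences are cosmetic.
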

\begin{proof}
	Consider a generic network $G$. In switching from $\sigma_u$ to $\sigma_u'$ the player $u$ is paying an additional cost of:
	\begin{equation}
	\label{eq:cost_difference_max}
		\alpha(|\sigma_u'|-|\sigma_u|) + \max_{v} d_{G'}(u,v) - \max_{v} d_{G}(u,v).
	\end{equation}
	Let $y$ be the vertex maximizing $\max_{v} d_{G'}(u,v)$. If $d_{G}(u,y)<k$ then $y$ belongs to both $H$ and thus to $H'$ as well, therefore the formula \eqref{eq:cost_difference_max} can be upper-bounded by $\alpha(|\sigma_u'|-|\sigma_u|) + \max_{v \in V(H')} d_{H'}(u,v) - \max_{v \in V(G)} d_{G}(u,v)$, which is attained when $G=H$ (hence $G'=H'$) since any graph in $\mathcal{G}$ is a supergraph of $H$.
	Otherwise, let $x$ be the unique vertex in a shortest path $\pi$ from $u$ to $y$ in $G$ such that $d_G(u,x)=k$. Notice that the subpath of $\pi$ between $x$ and $y$ also lies in $G'$, hence $d_{G'}(x,y) \le d_G(x,y)$. We can rewrite \eqref{eq:cost_difference_max} as follows: $\alpha(|\sigma_u'|-|\sigma_u|) +  d_{G'}(u,y) - \max_{v} d_{G}(u,v) \le \alpha(|\sigma_u'|-|\sigma_u|) +  d_{G'}(u,x) + d_{G'}(x,y) - d_{G}(u,y) \le \alpha(|\sigma_u'|-|\sigma_u|) +  d_{H'}(u,x) + d_{G}(x,y) - d_{G}(u,x) - d_G(x,y) = \alpha(|\sigma_u'|-|\sigma_u|) +  d_{H'}(u,x) - k \le \alpha(|\sigma_u'|-|\sigma_u|) +  \max_{v \in V(H')} d_{H'}(u,v) -  \max_{v \in V(H)} d_H(u,v)$.
\end{proof}

Notice that, according to the above discussion, the players do not even need to know the value of $k$ in order to play the game.
Regarding \SumNCG, let us define as $F$ the set a vertices at distance exactly $k$ from $u$ in $H$.

\begin{proposition}
	\label{prop:sumncg_best_response}
	In \SumNCG, every strategy that in\-creas\-es the distance of some vertex of $F$ in $H'$ is not an improving strategy for $u$. For every other strategy of $u$, the worst-case network maximizing \eqref{eq:delta} coincides with $H$.
\end{proposition}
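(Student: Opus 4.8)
The plan is to prove Proposition~\ref{prop:sumncg_best_response} by mirroring the structure of the proof of Proposition~\ref{prop:maxncg_best_response}, but with the crucial difference that the routing cost is now a \emph{sum} over all vertices rather than a \emph{max}. So instead of reasoning about a single worst-case vertex $y$, I would have to control the total contribution of every vertex in the network to the cost difference. The cost difference analogous to \eqref{eq:cost_difference_max} is
\begin{equation}
\label{eq:cost_difference_sum}
\alpha(|\sigma_u'|-|\sigma_u|) + \sum_v d_{G'}(u,v) - \sum_v d_{G}(u,v),
\end{equation}
and the strategy is improving for $u$ precisely when this quantity is negative in the worst-case realizable network $G$.

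First I would dispose of the first claim, that any strategy increasing the distance of some $x \in F$ in $H'$ is not improving. The idea is that such a vertex $x$ sits at distance exactly $k$ in $u$'s view, so the rest of the network attaches to $u$'s visible region precisely through (the frontier containing) $x$. I would exhibit a realizable network $G \in \mathcal{G}$ in which an arbitrarily large number of invisible nodes hang off $x$, all routed to $u$ through $x$. Increasing $d_{H'}(u,x)$ by even one then increases the distance of \emph{all} these pendant nodes by one, so the sum in \eqref{eq:cost_difference_sum} can be made to grow without bound. Since $\Delta(\sigma_u,\sigma_u')$ is a maximum over $\Sigma|_{\sigma_u}$, this forces $\Delta \ge 0$, establishing that the strategy is not improving.

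For the second claim, restricted to strategies that do \emph{not} increase any distance of a vertex in $F$, I would show the worst case is again $G = H$. The plan is to decompose the sum over $v \in V(G)$ into visible vertices (those in $H$) and invisible ones. For the visible part, exactly as in Proposition~\ref{prop:maxncg_best_response}, setting $G=H$ (hence $G'=H'$) is the worst case, because $H$ is a subgraph of every $G \in \mathcal{G}$ and the distances in $H'$ dominate those in $G'$. The heart of the argument is the invisible part: each invisible vertex $v$ reaches $u$ along a shortest path crossing the frontier $F$, so $d_G(u,v) = \min_{x \in F}\{d_G(u,x) + d_G(x,v)\}$ with $d_G(u,x)=k$, and similarly in $G'$ the invisible node routes through some frontier vertex whose distance from $u$ has \emph{not} increased (by hypothesis) and may only have decreased. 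This guarantees $d_{G'}(u,v) \le d_G(u,v)$ for every invisible $v$, so invisible vertices contribute non-positively to \eqref{eq:cost_difference_sum}; hence no invisible configuration can exceed the contribution obtained when there are no invisible vertices at all, i.e.\ when $G=H$.

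The main obstacle I anticipate is the bookkeeping in the invisible part: one must argue carefully that after the strategy change the shortest path from $u$ to an invisible vertex can still be rerouted through an unaffected frontier vertex, and that the hypothesis ``no vertex of $F$ increases its distance in $H'$'' really suffices to guarantee $d_{G'}(u,v) \le d_G(u,v)$ for \emph{all} invisible $v$ simultaneously, across all realizable attachments. The subtlety is that a strategy change could alter distances \emph{among} frontier vertices or create shortcuts inside $H$ that reroute invisible traffic in unexpected ways; handling this cleanly requires the observation that the frontier $F$ \emph{separates} $u$'s visible region from the invisible part in every $G \in \mathcal{G}$, so every invisible distance is governed entirely by the frontier distances, which are controlled by the hypothesis.
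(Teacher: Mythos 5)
Your proposal is correct and follows essentially the same route as the paper's proof: the same worst-case construction for the first claim (appending arbitrarily many invisible nodes to a frontier vertex whose distance increases in $H'$, making the cost difference grow with $\eta$), and for the second claim the same decomposition of the sum, bounding visible terms via $d_{G'}(u,v) \le d_{H'}(u,v)$ together with $d_G(u,v)=d_H(u,v)$, and showing invisible vertices contribute non-positively because they attach to $u$'s region only through frontier vertices whose distance, by hypothesis, does not increase. The only cosmetic difference is that the paper groups the frontier set $F$ with the far/invisible vertices (its set $P$ consists of vertices at distance strictly less than $k$) rather than with the visible ones, which changes nothing in substance.
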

\begin{proof}
	Consider a generic network $G$. When $u$ switches from $\sigma_u$ to $\sigma_u'$ she pays an additional cost of:
	\begin{equation}
		\label{eq:cost_difference_sum}
		\alpha(|\sigma_u'|-|\sigma_u|) + \sum_v d_{G'}(u,v) - \sum_v d_G(u,v).
	\end{equation}
	We first notice that if there exists a vertex $y$ such that $d_G(u,y)=d_H(u,y)=k$ and $d_{H'}(u,y) > k$ then $u$ is not improving in the worst-case scenario. Indeed we can make \eqref{eq:cost_difference_sum} positive by letting $G$ be equal to the graph $H$ where a large number $\eta$ of nodes has been appended to $y$, as \eqref{eq:cost_difference_sum} becomes at least
	$\alpha(|\sigma_u'|-|\sigma_u|) + \sum_{v \in H} d_{G'}(u,v) - \sum_{v \in H} d_G(u,v) + \eta$.
	Therefore, we restrict to strategies $\sigma_u'$ where if $d_H(u,y)=k$ then $d_{H'}(u,y) \le k$.
	Call $P$ the set of vertices $x$ such that $d_G(u,x) < k$, the formula \eqref{eq:cost_difference_sum} becomes:
	$ \alpha(|\sigma_u'|-|\sigma_u|) + \sum_{v \in P} \big( d_{G'}(u,v) - d_G(u,v) \big) +
	 \sum_{v \not\in P} \big( d_{G'}(u,v) - d_G(u,v) \big) \le \alpha(|\sigma_u'|-|\sigma_u|) + \sum_{v \in P} \! \big( d_{H'}(u,v) - d_H(u,v) \big) + \sum_{v \not\in P} \! \big( d_{G'}(u,v) \linebreak - d_G(u,v) \big) \le \alpha(|\sigma_u'|-|\sigma_u|) + \sum_{v \in P} \big( d_{H'}(u,v) - d_H(u,v) \big)$, since for every $v \not\in P$ we have $d_{G'}(u,v) \le d_G(u,v)$.
	This upper bound to \eqref{eq:cost_difference_sum} is attained when $G=H$.
\end{proof}

We now provide some remarks on the complexity of computing a best-response strategy and on convergence issues.

We start by noticing that the \np-hardness reductions which are known in the full-knowledge model for finding a best response in \SumNCG and \MaxNCG can be extended to our games for every $k \ge 2$ and $k\geq 1$, respectively.
Indeed, the full-knowledge versions of \MaxNCG and \SumNCG are \np-hard for $\alpha=2/n$ (see \cite{MS10}) and for every $1<\alpha<2$ (see \cite{FLM03}), respectively. Both reductions are from the {\sc Minimum Dominating Set} problem, that is the problem of finding a minimum cardinality subset $U$ of vertices of an undirected graph $G$ such that every vertex $v \in V(G)$ has a neighbor in $U$ or is in $U$ itself. In both reductions, any best response of a new single player that joins the network $G$ is buying all the edges towards the vertices of a minimum dominating set of $G$. Since in the full-knowledge versions of \MaxNCG and \SumNCG, the best response of a player is independent from the strategy she is actually playing, both games remain \np-hard even if the new player is initially buying all the edges towards all the other players. Turning back to our games, this is equivalent to say that the new player always sees the entire network. Moreover, for $k\geq 2$, the new player is aware to have the full-knowledge of the network as she knows $k$, and does not see vertices at distance 2 from her. Therefore, by Proposition \ref{prop:maxncg_best_response}, we have that \MaxNCG is \np-hard for every $k\geq 1$ and $\alpha=2/n$. Furthermore, by Proposition \ref{prop:sumncg_best_response}, we have that \SumNCG is \np-hard for every $k\geq 2$ and every $1<\alpha<2$.

Concerning the convergence issue, let us consider the iterated version of the game.\footnote{We assume that the players other than being \emph{myopic} are also \emph{oblivious}, namely at each time they only argue about the current view, without taking care of previous views.} A natural question is whether a better- or best-response dynamics always converges to an equilibrium state.
Unfortunately, a negative answer to this question follows from the divergence results presented in \cite{KL13} on the full-knowledge model for \SumNCG and \MaxNCG, since they are based on an instance having (small) constant diameter. This immediately implies the existence of a cycling best-response dynamics for both games as soon as $k \ge c$ for a constant $c$.

\section{Results for \protect\MaxNCG}
\label{sect:PoA}

For the sake of exposition, but also for all the other reasons we discussed in the introduction, we first analyze \MaxNCG, and then we consider \SumNCG. Moreover, for technical convenience, we will assume $\alpha > 1$ although our constructions can also be extended to the case $\alpha \le 1$. We recall that in the full-knowledge version of the game, for $\alpha > 1$ the spanning star is the social optimum and has a cost of $\Theta(\alpha n + n)$.

\subsection{Lower bounds for \MaxNCG}
\label{sec:lower_bounds_max_m3}

We present three lower bounds to the PoA based on three graphs with high social cost which are in equilibrium for different ranges of $\alpha$ and $k$.
The first is a cycle. We have the following.

\begin{lemma}
	If $k\ge1$ and $\alpha \ge k-1$ then $\mbox{PoA}=\Omega(\frac{n}{1+\alpha})$. 
\end{lemma}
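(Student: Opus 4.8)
The plan is to exhibit the cycle $C_n$ on $n$ vertices and show two things: (i) its social cost is $\Omega(n^2)$, so that against the spanning-star optimum of cost $\Theta(\alpha n + n)=\Theta((1+\alpha)n)$ the ratio is $\Omega\big(\frac{n}{1+\alpha}\big)$; and (ii) $C_n$ (with each player owning, say, one of her two incident edges, so that every edge is bought exactly once) is a Local Knowledge Equilibrium whenever $k\ge 1$ and $\alpha\ge k-1$. Part (i) is immediate: in a cycle the eccentricity of every node is $\lfloor n/2\rfloor$, so every player pays routing cost $\Theta(n)$ plus a building cost of at most $2\alpha$, and the social cost is $\Theta(n^2)$ (dominated by routing as long as $\alpha=O(n)$; for larger $\alpha$ the claimed bound is trivially $O(1)$ and needs no construction). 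The real work is part (ii).

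For the equilibrium verification I would invoke Proposition~\ref{prop:maxncg_best_response}: since the worst-case network maximizing \eqref{eq:delta} coincides with a player's own view $H$, each player $u$ behaves exactly as in the full-knowledge \MaxNCG played on $H$, the subgraph induced by her $k$-neighborhood. In the cycle, $H$ is simply a path of length $2k$ centered at $u$ (with $u$ in the middle), and $u$ owns one incident edge. To rule out profitable deviations I must check that no strategy $\sigma_u'$ chosen from nodes within distance $k$ yields $\Delta(\sigma_u,\sigma_u')<0$, i.e. strictly decreases $u$'s cost in the worst case. The key observation is that $u$'s current eccentricity within her view is exactly $k$ (the two endpoints of the path are at distance $k$). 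Buying a new edge costs $\alpha$ and can reduce the \emph{view} eccentricity by at most one per edge bought towards a useful shortcut; more precisely, adding a single chord from $u$ to a node at distance $k$ shortens the path-eccentricity in $H$ from $k$ to roughly $\lceil k/2\rceil$, a saving of about $k-1$ in routing cost, which does not exceed the incurred $\alpha\ge k-1$. I would formalize this by arguing that any set of $t$ newly bought edges reduces the within-view eccentricity by at most some bound that is still outweighed by the $t\alpha$ building cost precisely when $\alpha\ge k-1$; dropping the owned edge, on the other hand, disconnects $u$'s view and is never improving.

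The main obstacle will be (ii), specifically pinning down the exact worst-case decrease in view-eccentricity as a function of the number and placement of bought edges, and showing it is bounded by $\alpha$ per edge under the hypothesis $\alpha\ge k-1$. The cleanest route is to note that within the path-view $H$ of radius $k$, $u$ can reach any node using its existing edges, so the only benefit of a new edge is to create a shortcut to a far endpoint; since the farthest nodes sit at distance exactly $k$ and a shortcut saves at most $k-1$ hops to such a node while costing $\alpha\ge k-1$, no single purchase is profitable, and a careful accounting (each bought edge can ``cover'' the decrease for only a bounded portion of the eccentricity-defining path) extends this to multiple purchases. I would also remark that the assumption $\alpha>1$ made at the start of the section is consistent with $k\ge 1$, and that the bound is stated as $\Omega\big(\frac{n}{1+\alpha}\big)$ precisely so that it degrades gracefully as $\alpha$ grows; the construction is therefore most interesting for small $k$ and moderate $\alpha$, where it already yields a $\Theta(n)$-diameter stable graph.
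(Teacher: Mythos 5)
Your proposal is correct and takes essentially the same route as the paper: the identical cycle construction (on $n \ge 2k+2$ vertices, each player owning one edge, so every view is a path of length $2k$ centered at the player), the same appeal to Proposition~\ref{prop:maxncg_best_response}, and the same accounting that any deviation reducing the view eccentricity saves at most $k-1$ in usage cost while incurring at least $\alpha \ge k-1$ per newly bought edge, giving $\mbox{PoA} = \Omega\left(\frac{\alpha n + n^2}{\alpha n + n}\right) = \Omega\left(\frac{n}{1+\alpha}\right)$. One slip worth fixing in your write-up: a single chord from $u$ to a node at distance $k$ does \emph{not} reduce the view eccentricity to roughly $\lceil k/2\rceil$ (the opposite endpoint of the path remains at distance $k$, so the eccentricity stays $k$); this only helps you, since the argument you actually formalize --- total saving at most $k-1$ versus building cost at least $\alpha$ per added edge --- is exactly the paper's and does not depend on that intermediate claim.
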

\begin{proof}
	Consider a cycle on $n \ge 2k+2$ vertices where each player owns exactly one edge.
	The view of each player $u$ is a path of length $2k$ with $u$ as the center vertex. In order to decrease her eccentricity $u$ has to buy at least one edge. This will decrease the usage cost of $u$ by at most $k-1$ and increase the building cost of $u$ by at least $\alpha$. Then,	$\mbox{PoA} = \Omega\left( \frac{\alpha n + n^2}{\alpha n+ n} \right) = \Omega\left( \frac{n}{1+\alpha} \right)$.
\end{proof}

The next lower bound is based on a dense graph of large girth.

\begin{lemma}
\label{lemma:lb_poa_max_dense}
For each $2 \le k = o(\log n)$ and $\alpha \ge 1$ the PoA is $\Omega(n^{\frac{1}{2k-2}})$. 
\end{lemma}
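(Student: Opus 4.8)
The plan is to construct a dense graph of large girth in which no player has an incentive to modify her strategy, and then to argue that both the social cost of this equilibrium and its ratio to the optimum yield the claimed bound. First I would invoke a known graph-theoretic construction: for the target girth $g \approx 2k$, there exist graphs on $n$ vertices with $\Omega(n^{1+\frac{1}{k-1}})$ edges whose girth exceeds $2k-1$ (this is the classical Moore-bound-type / incidence-graph construction). The key point of large girth is that, within the $k$-neighborhood of any player $u$, the graph looks like a tree: since the girth exceeds $2k$, there are no short cycles, so the view $H$ of $u$ is a tree rooted at $u$ (or close to it), which makes it transparent to count the number of vertices $u$ can reach at each distance and to reason about what buying or dropping an edge accomplishes locally.

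Next I would set up the equilibrium verification. By Proposition~\ref{prop:maxncg_best_response}, the worst-case network maximizing \eqref{eq:delta} coincides with $H$, so a player behaves exactly as in the full-knowledge game played on her view $H$. Thus it suffices to show that on the tree-like view, (i) no player can strictly decrease her eccentricity-plus-building cost by \emph{buying} one or more edges to nodes in her $k$-neighborhood, and (ii) no player can decrease her cost by \emph{dropping} or \emph{swapping} an owned edge. For the buying direction, since $\alpha \ge 1$, any purchased edge costs at least $\alpha$, while the eccentricity of $u$ in her own view is already at most $k$ (she sees nothing beyond distance $k$, and in the worst case the far frontier vertices sit exactly at distance $k$); buying an edge can reduce the eccentricity by only a bounded amount, and the large-girth/tree structure forces the saving in the $\max$-distance to be strictly smaller than $\alpha$. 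For the dropping direction, the girth condition guarantees that the graph is (roughly) $2$-edge-connected enough that removing an owned edge does not disconnect $u$ but may push some frontier vertex to distance exactly $k$, which by Proposition~\ref{prop:maxncg_best_response} does not help; I would make the incentive argument quantitative by comparing the at-most-$\alpha$ building saving against the eccentricity increase.

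The final ingredient is the cost accounting. The optimum in the full-knowledge model for $\alpha>1$ is the spanning star, of cost $\Theta(\alpha n + n)$. In the equilibrium graph, the building cost is $\alpha$ times the number of edges, which is $\Omega(\alpha \, n^{1+\frac{1}{k-1}})$; even just the ratio of edge counts gives a factor $\Omega(n^{\frac{1}{k-1}})$, but to land the stated $\Omega(n^{\frac{1}{2k-2}})$ I would instead balance the construction so that the number of edges is $\Theta(n^{1+\frac{1}{2k-2}})$ (choosing the girth and degree so the per-player building cost dominates), giving $\mbox{PoA} = \Omega\!\big(\frac{\alpha \, n^{1+\frac{1}{2k-2}}}{\alpha n + n}\big) = \Omega\big(n^{\frac{1}{2k-2}}\big)$. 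The main obstacle I anticipate is \emph{ensuring equilibrium for small $\alpha$} (i.e. $\alpha$ close to $1$): with a cheap building cost, a player is most tempted to buy shortcut edges to collapse her eccentricity, so I must verify carefully — using the tree-like local view and the exact frontier structure — that no single purchase (nor any bundle of purchases to the $k$-neighborhood) reduces the $\max$-distance by as much as $\alpha$, which is exactly the place where the large-girth regularity and the restriction $k = o(\log n)$ (keeping the construction feasible and the neighborhoods genuinely tree-like) must be exploited to close the argument.
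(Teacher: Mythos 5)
Your overall plan coincides with the paper's: take a dense graph of girth just above $2k$ (the paper uses the construction of \cite{LUW95}, a $q$-regular graph of girth $g=2k+2$ with $\Omega(n^{1+\frac{1}{g-4}})=\Omega(n^{1+\frac{1}{2k-2}})$ edges), observe via Proposition~\ref{prop:maxncg_best_response} that each player evaluates strategies on her view, which is then a tree of height $k$, and compare the building cost $\Omega(\alpha n^{1+\frac{1}{2k-2}})$ against the optimum $\Theta(\alpha n+n)$. However, the step you explicitly defer as ``the main obstacle'' is precisely the heart of the proof, and the way you frame it would not close: you propose to show that no purchase (nor bundle of purchases) ``reduces the max-distance by as much as $\alpha$.'' For $\alpha=1$ this would require that no bundle of bought edges reduces the eccentricity at all, which is false --- buying edges to every vertex at distance $2$ certainly reduces it. The correct inequality is a trade-off between the \emph{size} of the bundle and the reduction achieved: in the tree view, the frontier has $q(q-1)^{k-1}$ vertices that are pairwise far apart, and each new edge incident to $u$ can bring within distance $k-i$ only the leaves of a single subtree of bounded size; a counting argument then shows that reducing the eccentricity by $i$ requires buying at least $q(q-1)^i-q$ edges. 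Taking $q\ge 3$, the building-cost increase $\alpha\bigl(q(q-1)^i-q\bigr)\ge 3\cdot 2^i-3$ exceeds the saving $i$ for every $i\ge 1$ and every $\alpha\ge 1$. Without this exponential-in-$i$ lower bound on the number of required edges, the equilibrium claim for $\alpha$ near $1$ is simply asserted, not proved.

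Two further points. First, your girth bookkeeping is inconsistent: you ask for girth exceeding $2k-1$ but then use that ``the girth exceeds $2k$''; for the $k$-ball to be a tree you genuinely need girth $>2k$ (the paper takes $2k+2$), and at girth $2k$ your density claim of $\Omega(n^{1+\frac{1}{k-1}})$ edges is only the Moore-bound ceiling, not something known constructions achieve --- your fallback to $\Theta(n^{1+\frac{1}{2k-2}})$ is in fact what the available graphs give, and is what the paper uses. Second, the edge-dropping direction needs no connectivity argument about the global graph: by Proposition~\ref{prop:maxncg_best_response} the player evaluates removal on her view, which is a tree, so deleting any owned edge disconnects the (worst-case) network and yields unbounded cost; reasoning about $2$-edge-connectivity of the true graph is beside the point in this model, since the player's decision is governed by her view alone.
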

\begin{proof}
For each even integer $g \ge 6$ and prime power $q$ there exists a $q$-regular graph of girth at least $g$ with $n$ vertices and $\Omega(n^{1+\frac{1}{g-4}})$ edges \cite{LUW95}.

We choose $g=2k+2$ and construct such a graph.
The view of each player $u$ is a tree of height $k$ with $q(q-1)^{i-1}$ vertices on level $i$. Moreover, the player $u$ owns at most $q$ edges.

In order to reduce her usage cost by $i$, player $u$ must buy at least $q(q-1)^i-q$ additional edges. If we choose $q \ge 3$ then the increase in the building cost will exceed the decrease in the usage cost. Hence, we have that the PoA is at least $\Omega(\frac{\alpha n^{1+\frac{1}{g-4}}}{ (1+\alpha) n})= \Omega(n^{\frac{1}{2k-2}})$.

It can be shown that the previous construction holds for $k = o(\log n)$.
\end{proof}

The last lower bound is based on a sparse graph with large diameter.
The construction is non-trivial and it is a generalization of the graph shown in \cite{ADH10}. Although the precise definition is critical, we now give some intuition on how the graph is built.
Roughly speaking, the original graph resembles a $2$-dimensional square grid that was rotated by $45^\circ$ and had the vertices on the opposite sides identified in order to form a toroidal shape.
This graph has several useful properties: it is vertex-transitive and the diameter is about the length of a ``side'' of the grid. Moreover, if the value of $k$ is small, each player $u$ is not aware of the toroidal shape as she only sees a ``square'' subgraph. This subgraph has $4$ vertices at distance $k$ from $u$ whose pairwise distance are $2k$. This fact can be used to show that, actually, this graph is stable for small values of $\alpha$ and $k$, e.g. $\alpha=k=1$. Unfortunately, this is no longer true for larger values of $k$ since, for example, the addition of $4$ edges suffices to reduce the eccentricity of $\Omega(k)$. Moreover, if $\alpha$ is large, a player has convenience in removing an edge as this results in a constant increase in her eccentricity.
To deal with these issues we generalize this construction in three ways. First, we increase the number of dimensions from $2$ to a parameter $d$ so that the graph now resembles a rotated $d$-dimensional cube grid where each face has been identified with the opposite one. For each vertex $u$ we are now able to find $2^d$ other vertices that are at a distance of $k$ from $u$ and whose pairwise distances are at least $2k$.
Second, in order to get a graph with large diameter, we no longer restrict the dimensions to be equal to each other. Intuitively, instead of starting with a $d$-dimensional ``cube'', we start with a $d$-dimensional hyper-rectangle.
Finally, we ``stretch'' the graph by replacing each edge with a path of length $1 \le \ell = \Theta(\alpha)$ between its endpoints. This causes the addition of $\ell-1$ new vertices per edge. We call these new vertices ``non-intersection vertices'' to distinguish them from the already existing ``intersection vertices''. Non-intersection vertices will buy all the links of the graph and we will show that they cannot remove edges as this would result in an increase of at least $\Omega(\ell)$ in their eccentricity.

\begin{figure}[!t]
	\begin{center}
		\includegraphics[scale=0.45]{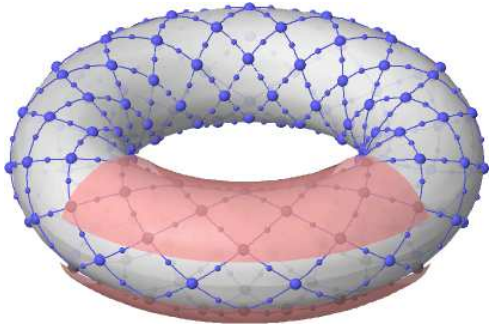}
	\end{center}
	\caption{Graph with $d=2$ dimensions of sizes $\delta_1=15$ and $\delta_2=5$, and $\ell=2$. Intersection vertices have a bigger size than non-intersection vertices. In this example the view of the intersection vertex $(k^{*},k^{*})$ for $k=4$ is in red. Notice that the vertex $(k^{*},k^{*})$ lies on an invisible portion of the torus.}
	\label{fig:3d-torus}
\end{figure}

We now give the details of the construction, which will depend on  $\ell \ge 1$, $d \ge 2$, and $\delta_1,\dots,\delta_d \ge 1$, where $\delta_i$ is the ``length'' of the $i$-th dimension. Vertices will be named after their coordinates and we will interpret the $i$-th coordinate modulo $2 \delta_i$, that is $2\delta_i \equiv 0$.

The graph is built by starting from an empty graph, proceeding in the following way:
add the set of \emph{intersection vertices}, each of these vertices is a $d$-tuple $(\ell \cdot a_1, \ell \cdot a_2, \dots, \ell \cdot a_d)$ such that $a_1 \equiv a_2 \equiv \dots \equiv a_d \pmod{2}$ where each $a_i$ with $1\le i \le d$ is an integer between $0$ and $2\delta_i-1$.
Then, connect each intersection vertex $(x_1, \dots, x_d)$ to $2^d$ other vertices, using a path of length $\ell$ (so if $\ell=1$ we only need to add edges). More precisely, we connect such a vertex to $(x_1 \pm \ell, x_2 \pm \ell, \dots, x_d \pm \ell)$ for every possible choice of the $\pm$ signs. We label the $\ell-1$ non-intersection vertices on the paths by varying the coordinates of the endpoints according to the choice of $\pm$ signs, that is we traverse the path from one endpoint to another and label each non-intersection vertex by adding or subtracting $1$ from the coordinates of the previous vertex. In the following, for convenience, when we choose a vertex, we will assume that the $i$-th coordinate is between $0$ and $2\delta_i-1$.

We will consider graphs where $\delta_1 = \dots = \delta_{d-1} = \left\lceil \frac{k}{\ell} \right\rceil + 1$ and $\delta_d \ge \left\lceil \frac{k}{\ell} \right\rceil + 1$.
Let $k^*=\ell(\delta_1-1)$, $u$ be the vertex $(k^*,\dots,k^*)$, and $u^\prime$ be the vertex $(k^*+\ell, \dots, k^*+\ell)$.

Two examples of this construction, with different parameters, are shown in Figure~\ref{fig:3d-torus} and in Figure~\ref{fig:torus}.

The following result is not hard to prove:
\begin{lemma}
	\label{lemma:torus_distance}
	Let $x=(x_1, \dots, x_d)$ and $y=(y_1, \dots, y_d)$ be two vertices. We have $d(x, y) \ge \max_{1 \le i \le d}\min\{|x_{i}-y_{i}|,2\delta_i\ell-|x_{i}-y_{i}|\}$. If at least one of $x$ and $y$ is an intersection vertex, then the previous inequality is strict.
\end{lemma}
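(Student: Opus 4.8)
The plan is to reduce everything to a single structural fact: the graph is a subdivided ``rotated grid'', and consequently \emph{every} edge joins two vertices whose coordinates differ by exactly $\pm 1$ in \emph{every} component. Indeed, each intersection vertex is joined by a length-$\ell$ path to each of the $2^d$ vertices $(x_1\pm\ell,\dots,x_d\pm\ell)$, and the labelling of the non-intersection vertices along such a path adds or subtracts $1$ from every coordinate at each step. Hence traversing one edge changes coordinate $i$ by some sign $\epsilon_i\in\{-1,+1\}$, simultaneously for all $i$. I would record once and for all that coordinates live on a cycle of length $2\delta_i\ell$ in dimension $i$, so that positions are compared modulo $2\delta_i\ell$.

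First I would fix a shortest path $\pi$ from $x$ to $y$ of length $L=d(x,y)$ and write its $t$-th edge as a sign vector $\epsilon^{(t)}=(\epsilon^{(t)}_1,\dots,\epsilon^{(t)}_d)\in\{-1,+1\}^d$. Setting $s_i=\sum_{t=1}^{L}\epsilon^{(t)}_i$, the net displacement along $\pi$ in coordinate $i$ is $s_i$, so $s_i$ is a representative of $y_i-x_i$ modulo $2\delta_i\ell$; as a sum of $L$ terms each equal to $\pm1$ it satisfies $|s_i|\le L$. Since $\min\{|x_i-y_i|,\,2\delta_i\ell-|x_i-y_i|\}$ is by definition the smallest absolute value among all representatives of $y_i-x_i$ modulo $2\delta_i\ell$, I get $L\ge|s_i|\ge\min\{|x_i-y_i|,\,2\delta_i\ell-|x_i-y_i|\}$ for every $i$, and taking the maximum over $i$ yields the non-strict inequality.

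For the strict version the additional ingredient is a parity observation: each $s_i$ is a sum of $L$ odd numbers, so $s_i\equiv L\pmod 2$ for every $i$, and therefore $y_i-x_i\equiv L\pmod 2$ holds \emph{simultaneously} in all coordinates. I would combine this with the arithmetic signature of an intersection vertex, all of whose coordinates are multiples of $\ell$ and hence share a common residue. Arguing by contradiction, assume $L$ equals the bound $M=\max_i\min\{\dots\}$ and let $i^{*}$ be a coordinate attaining it: then $L=|s_{i^{*}}|$ forces every edge of $\pi$ to move coordinate $i^{*}$ in the same direction. Tracking a suitably chosen second coordinate and comparing the congruence forced on its displacement by the path length $L$ against the residue imposed by the intersection vertex sitting at a multiple of $\ell$ should expose an incompatibility, producing the missing unit $L\ge M+1$.

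I expect this final step to be the main obstacle, because the non-strict bound is completely agnostic to which endpoints are intersection vertices, so the $+1$ gap must be extracted entirely from the extra congruence constraints. The delicate points are choosing the right auxiliary coordinate (and very likely working modulo $2\ell$, rather than merely modulo $2$, to capture the ``multiple of $\ell$'' property of intersection vertices), and treating the wrap-around representative $2\delta_i\ell-|x_i-y_i|$ on exactly the same footing as the direct representative $|x_i-y_i|$ throughout the parity bookkeeping. Once the correct invariant is isolated, the remaining manipulations are routine modular arithmetic.
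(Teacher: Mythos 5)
Your argument for the first (non-strict) inequality is correct. The paper gives no proof at all (it only remarks that the lemma ``is not hard to prove''), and the intended argument is surely the one you wrote: by construction every edge of the graph changes \emph{every} coordinate by exactly $\pm 1$ (with a sign pattern that is constant along each subdivided path), so a walk of length $L$ from $x$ to $y$ produces in each coordinate $i$ a sum $s_i$ of $L$ signs with $|s_i|\le L$ and $s_i\equiv y_i-x_i\pmod{2\delta_i\ell}$, whence $L\ge\min\{|x_i-y_i|,\,2\delta_i\ell-|x_i-y_i|\}$ for every $i$, and one takes the maximum over $i$. That part of your proposal is complete and needs no change.

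The strict part is where you have a genuine gap, and it is not a gap you can close: the second sentence of the lemma is false as literally stated, so no ``auxiliary coordinate'' and no congruence modulo $2\ell$ will produce the missing unit. Concretely, let $x=(0,\dots,0)$, an intersection vertex, and let $y=(h,\dots,h)$ be the vertex reached from $x$ by walking $h$ steps in the all-plus direction ($1\le h\le k$; for $h=1$, $y$ is simply a neighbour of $x$). Every coordinate difference equals $h$, and $\min\{h,\,2\delta_i\ell-h\}=h$ (this holds trivially for $h=1$, and for all $h\le k$ under the paper's choice $\delta_i\ge\lceil k/\ell\rceil+1$), so the right-hand side equals $h$; on the other hand $d(x,y)\le h$ along the subdivided paths, hence $d(x,y)=h$ and equality holds although $x$ is an intersection vertex. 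Your parity invariant ($s_i\equiv L\pmod 2$ for all $i$) is true but yields no contradiction here: all signs are $+1$ in all coordinates, which is perfectly consistent, so the ``incompatibility'' you hope to expose simply does not exist. Note that the paper itself relies on equality being attained from an intersection vertex: immediately after the lemma it asserts, citing Lemma~\ref{lemma:torus_distance}, that every vertex of $\mathcal{F}^h(v)$ (with $v$ an intersection vertex) is at distance \emph{exactly} $h$ from $v$, which would be impossible were the strict inequality true. Every subsequent application of Lemma~\ref{lemma:torus_distance} and of its open analogue (Lemma~\ref{lemma:open_torus_distance}) uses only the non-strict bound, so the right outcome is to prove the first sentence, as you did, and to drop or re-qualify the second: whether equality can hold is governed by whether the sign changes required by the displacement vector can be scheduled at intersection vertices (it can fail for two non-intersection vertices and hold for two intersection vertices), so ``at least one endpoint is an intersection vertex'' is not the correct hypothesis for strictness in any case.
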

\begin{corollary}
	\label{cor:torus_diameter}
	The diameter of the graph is at least $\ell \cdot \delta_{d}$.
\end{corollary}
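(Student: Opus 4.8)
The plan is to invoke Lemma~\ref{lemma:torus_distance} and simply exhibit one pair of vertices whose distance is already at least $\ell\cdot\delta_d$. Recall that the lemma lower-bounds the distance between any two vertices $x,y$ by the largest, over all dimensions $i$, of the cyclic coordinate distance $\min\{|x_i-y_i|,\,2\delta_i\ell-|x_i-y_i|\}$, i.e.\ the distance measured on a cycle of length $2\delta_i\ell$ in dimension $i$. On such a cycle the cyclic distance never exceeds $\delta_i\ell$, and it attains this maximum value $\delta_i\ell$ exactly when the two coordinates are antipodal. Hence it suffices to find two genuine vertices of the graph whose $d$-th coordinates differ by exactly $\delta_d\ell$: for any such pair the $i=d$ term of the bound already equals $\delta_d\ell=\ell\cdot\delta_d$.

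Concretely, I would take the origin $x=(0,\dots,0)$, which is an intersection vertex (all index-coordinates $a_i=0$, hence trivially congruent modulo $2$), and as its partner an intersection vertex $y$ with $y_d=\ell\cdot\delta_d$, i.e.\ with index-coordinate $a_d=\delta_d$; this index is admissible since $0\le\delta_d\le 2\delta_d-1$. For these choices $|x_d-y_d|=\ell\delta_d$ and $2\delta_d\ell-|x_d-y_d|=\ell\delta_d$, so the dimension-$d$ term equals $\ell\delta_d$, and Lemma~\ref{lemma:torus_distance} yields $d(x,y)\ge\ell\cdot\delta_d$. Since the diameter is the maximum distance over all vertex pairs, this immediately gives the claimed bound.

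The only point requiring care, and the one I regard as the crux, is verifying that the partner $y$ is a legitimate intersection vertex, namely that its index-coordinates satisfy the parity condition $a_1\equiv\cdots\equiv a_d\pmod 2$. Having fixed $a_d=\delta_d$, I would set the remaining $a_i$ (for $i<d$) to $0$ if $\delta_d$ is even and to $1$ if $\delta_d$ is odd, thereby matching the parity of $\delta_d$. These values always lie in the allowed range $\{0,\dots,2\delta_i-1\}$, because $\delta_i=\delta_1=\lceil k/\ell\rceil+1\ge 2$ and so $2\delta_i-1\ge 1$. This confirms that $y$ exists and completes the argument. As a fallback that avoids the parity bookkeeping entirely, I note that every integer coordinate value is realized by some vertex lying on one of the connecting paths, so one could instead take $y$ to be \emph{any} vertex with $d$-th coordinate $\ell\delta_d$ and invoke only the non-strict form of Lemma~\ref{lemma:torus_distance}.
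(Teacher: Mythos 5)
Your proof is correct and follows essentially the same route as the paper's: the paper also applies Lemma~\ref{lemma:torus_distance} to the vertex $(0,\dots,0)$ and a vertex whose last coordinate equals $\ell\delta_d$, for which the dimension-$d$ term $\min\{|x_d-y_d|,\,2\delta_d\ell-|x_d-y_d|\}$ equals $\ell\delta_d$. Your parity bookkeeping (and the fallback using any vertex with last coordinate $\ell\delta_d$) merely makes explicit the existence check that the paper leaves implicit.
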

\begin{proof}
	By Lemma \ref{lemma:torus_distance} the distance between the vertex $(0, \dots, 0)$ and any vertex whose last coordinate is $\ell \delta_d$ is at least $\ell \delta_d$.
\end{proof}

We also consider an ``open'' version of the previous graph, that is built in a similar way except that we do not treat the coordinates in a modular fashion, so $a_i$ is now between $1$ and $\delta_i$, and we connect intersection vertices (with paths) only when all their coordinates differ by exactly $\ell$. It is not hard to see that the view of each player is isomorphic to a subgraph of this ``open'' graph, and that Lemma \ref{lemma:torus_distance} becomes:
\begin{lemma}
	\label{lemma:open_torus_distance}
	Let $x=(x_1, \dots, x_d)$ and $y=(y_1, \dots, y_d)$ be two vertices in the ``open'' graph.
	We have $d(x, y ) \ge \linebreak \max_{1 \le i \le d} |x_{i}-y_{i}|$,
	If at least one of $x$ and $y$ is an intersection vertex, then the previous inequality is strict.
\end{lemma}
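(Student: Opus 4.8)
The plan is to treat the ``open'' graph as the torus construction with the modular identification switched off, and simply re-run the argument behind Lemma~\ref{lemma:torus_distance} in this simpler setting. The only structural fact I need is that every edge of the graph --- whether it lies strictly inside a length-$\ell$ path or is incident to an intersection vertex --- joins two vertices all of whose coordinates differ by exactly $1$; this is immediate from the labelling rule, which adds or subtracts $1$ to every coordinate at each step along a path. Consequently, if $\pi$ is any $x$--$y$ path of length $L$, then projecting $\pi$ onto the $i$-th coordinate changes it by at most $1$ per edge, so $|x_i-y_i|\le L$ for every $i$. Taking $\pi$ to be a shortest path and maximising over $i$ yields the first claim $d(x,y)\ge \max_{1\le i\le d}|x_i-y_i|$. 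The term $2\delta_i\ell-|x_i-y_i|$ present in Lemma~\ref{lemma:torus_distance} disappears precisely because, without the modular identification, there is no ``wrap-around'' alternative route available in any single coordinate, so the outer $\min$ collapses to $|x_i-y_i|$.

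For the strict inequality I would reproduce the intersection-vertex part of the (omitted) argument of Lemma~\ref{lemma:torus_distance}. Two structural observations drive it. First, each edge flips the parity of all $d$ coordinates simultaneously, so any two vertices of the graph have coordinate differences that are \emph{all} of the same parity. Second --- and this is what makes an intersection endpoint special --- a shortest path may change its diagonal direction only at an intersection vertex: while traversing the interior of a length-$\ell$ path every vertex has degree $2$, so the walk is forced to continue straight until it reaches the next intersection vertex. Hence a shortest path leaving an intersection vertex $x$ decomposes into maximal straight diagonal segments whose non-final pieces have length exactly $\ell$. Let $i^{*}$ attain $M:=\max_i|x_i-y_i|$. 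The plan is to assume for contradiction $d(x,y)=M$: then coordinate $i^{*}$ must move in the same direction at every one of the $M$ edges, and I would check that this monotonicity cannot be reconciled simultaneously with the forced segment lengths and with landing exactly on $y$, so that in fact $d(x,y)\ge M+1$.

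The non-strict bound is routine --- it is essentially the observation already invoked for Corollary~\ref{cor:torus_diameter}. The hard part will be the strict inequality, and I expect it to cost the most care: it is exactly here that the ``turn only at an intersection vertex'' constraint and the coordinate-parity invariant must be combined, and it is the step that Lemma~\ref{lemma:torus_distance} quotes as ``not hard to prove'' without actually carrying out. I would therefore isolate it as a short case analysis on the residue of $M$ modulo $\ell$ and on whether $y$ is itself an intersection vertex, reusing verbatim the parity bookkeeping from the torus case and simply remarking that, in the open graph, no wrap-around route can be substituted to recover the missing step.
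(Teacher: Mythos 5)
Your first paragraph is correct, and it is surely the intended (omitted) argument: by the labelling rule every edge of the construction joins two vertices whose coordinates \emph{all} differ by exactly one, so any $x$--$y$ walk of length $L$ satisfies $|x_i-y_i|\le L$ for every $i$, giving $d(x,y)\ge\max_{1\le i\le d}|x_i-y_i|$. Note that there is nothing in the paper to compare against here: neither Lemma~\ref{lemma:torus_distance} nor Lemma~\ref{lemma:open_torus_distance} is given a proof, so your argument genuinely fills that omission for the non-strict claim.

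The strict-inequality half, however, is a genuine gap, and not one you can close with more care: the claim is \emph{false} as stated, so the contradiction you plan to extract from ``monotonicity in coordinate $i^*$'' does not exist. Take the paper's own vertices $u=(k^*,\dots,k^*)$ and $u'=(k^*+\ell,\dots,k^*+\ell)$: both are intersection vertices, they are joined by a connecting path of length $\ell$, and every coordinate differs by exactly $\ell$, so $d(u,u')=\ell=\max_i|u_i-u'_i|$ --- equality. The same happens for an intersection vertex and any vertex at distance $j$ along one of its incident diagonal paths (distance $j$, maximum coordinate difference $j$), and for pairs of intersection vertices such as $x$ and $x+(2\ell,0,\dots,0)$, reached by a monotone-in-the-first-coordinate route that turns at the intermediate intersection vertex: the ``turn only at intersection vertices'' constraint and the all-coordinates-flip-parity invariant you invoke are both true, but they are perfectly compatible with a length-$M$ path that lands exactly on $y$. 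The correct resolution is to prove only the non-strict bound (your first paragraph) and to drop the strictness assertion entirely: if you inspect every place the paper uses Lemma~\ref{lemma:open_torus_distance} or Lemma~\ref{lemma:torus_distance} (Corollary~\ref{cor:torus_diameter}, the discussion of $\mathcal{F}^h(v)$, and Lemmas~\ref{lemma:intersection-equilibrium}, \ref{lemma:non-intersection_far_vertices}, \ref{lemma:non-intersection-decrease}, \ref{lemma:non-intersection-equilibrium}), only the non-strict inequality is ever invoked; the ``distance exactly $h$'' statements follow from that lower bound combined with an explicit path of length $h$, not from strictness. So the strict part is an unused (and incorrect) rider on the statement, and any proof attempt that tries to deliver it --- including the case analysis you sketch on $M\bmod\ell$ --- is doomed from the start.
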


\begin{figure}[!t]
\centering
	\includegraphics[scale=0.7]{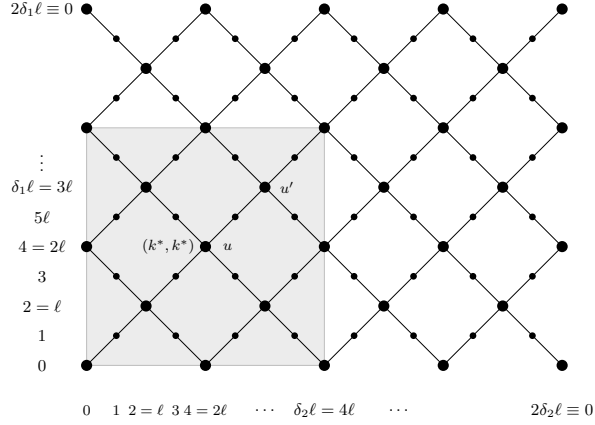}
	\caption{Graph with $d=2$ dimensions of sizes $\delta_1=3$ and $\delta_2=4$, and $\ell=2$. Intersection vertices have a bigger size than non-intersection vertices. The vertices on the first row (resp. column) coincide with the corresponding vertices on last row (resp. column). In this example the view of the intersection vertex $(k^{*},k^{*})$ for $k=4$ is highlighted in gray.}
	\label{fig:torus}
\end{figure}

We now prove a general lemma that will be very useful in the following proofs:
\begin{lemma}
	\label{lemma:shorten_distance}
	Let $H$ be a graph, $u \in V(H)$, and $L = \{v_1, \dots, \linebreak v_{|L|}\} \subseteq V(H) \setminus \{ u \}$.
	If $d_H(u,v_i) \ge h $ for every $1 \le i \le |L|$, $d_H(v_i, v_j) \ge 2h-2$ for every $1\le i,j \le |L|$ with $i \not= j$, and $F$ is a set of edges such that: (i) each edge in $F$ has $u$ as an endpoint, and (ii) $d_{H+F}(u,v_i) < h$ for every $1 \le i \le |L|$, then it holds $|F| \ge |L|$.
\end{lemma}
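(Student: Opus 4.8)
The plan is to exhibit an injective map $\phi\colon L \to F$, from which $|L| \le |F|$ follows immediately by counting. The whole argument hinges on the observation that, because of condition~(i), every edge of $F$ is incident to $u$, so any \emph{shortest} path from $u$ to $v_i$ in $H+F$ can contain at most one edge of $F$ (a shortest path is simple and visits $u$ only at its start, while every new edge touches $u$). Moreover, such a path must contain \emph{at least} one edge of $F$: otherwise it would lie entirely in $H$ and have length at least $d_H(u,v_i) \ge h$, contradicting condition~(ii) that $d_{H+F}(u,v_i) < h$. Hence each shortest $u$--$v_i$ path in $H+F$ begins with exactly one edge of $F$, say $(u,w_i)$, and then continues inside $H$ from $w_i$ to $v_i$.

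First I would formalize this to get a quantitative bound on $w_i$. Since the path starts with $(u,w_i)$ and then stays in $H$, we have $d_{H+F}(u,v_i) = 1 + d_H(w_i,v_i)$, and combining with $d_{H+F}(u,v_i) < h$ gives $d_H(w_i,v_i) \le h-2$. I then define $\phi(v_i) = (u,w_i)$, choosing one such edge for each $i$.

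The crux is proving that $\phi$ is injective, and this is exactly where the pairwise-separation hypothesis is consumed. Suppose $\phi(v_i) = \phi(v_j) = (u,w)$ for some $i \ne j$. Then both $d_H(w,v_i) \le h-2$ and $d_H(w,v_j) \le h-2$, so by the triangle inequality $d_H(v_i,v_j) \le (h-2)+(h-2) = 2h-4 < 2h-2$, contradicting the assumption $d_H(v_i,v_j) \ge 2h-2$. Thus $\phi$ is injective and $|F| \ge |L|$.

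I expect the only real (though modest) obstacle to be the careful justification of the structural first step, namely that a shortest path in $H+F$ uses exactly one edge of $F$, since this is where both hypotheses --- the common endpoint $u$ for all edges of $F$ and the strict drop below $h$ --- must be used simultaneously. Once that is established, the injection and the triangle-inequality contradiction are routine, and the argument also makes transparent that the threshold $2h-2$ is precisely the smallest pairwise separation for which the final inequality $2h-4 < 2h-2$ holds, so the hypothesis is tight.
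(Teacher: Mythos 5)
Your proof is correct and follows essentially the same route as the paper's: both arguments establish that every shortest $u$--$v_i$ path in $H+F$ must use exactly one edge $(u,w_i)\in F$ with $d_H(w_i,v_i)\le h-2$, and then use the triangle inequality together with the separation hypothesis $d_H(v_i,v_j)\ge 2h-2$ to show no such edge can serve two vertices of $L$. Your packaging as an explicit injection $\phi\colon L\to F$ is just a cleaner phrasing of the paper's counting step, and your tightness remark is a nice (if inessential) bonus.
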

\begin{proof}
	Every shortest path from $u$ to a vertex $v \in L$ in $H+F$ must use exactly one edge $(u,y) \in F$, and $d_H(v,y) \le h-2$ must hold.
	
	For every other vertex $v^\prime \in L \setminus \{u\}$, the shortest path between $u$ and $v^\prime$ in $H+F$ cannot use the edge $(u,y)$, otherwise we would have:
	\begin{multline*}
		d_{H+F}(u,v^\prime) = 1+d_H(y,v^\prime) = 1+d_H(y,v^\prime)+d_H(y,v) \\ -d_H(y,v)
		\ge d_H(v^\prime, v) - d_H(y,v) \ge 2h-2 - (h-2) \ge h.
	\end{multline*}
	This implies that $F$ must contain at least one edge for each vertex of $L$.
\end{proof}

We now define the ownership of the edges. Consider the path $\langle u=x_0,x_1,\dots, x_{\ell -1},x_\ell=u' \rangle$ from $u$ to $u'$. For $i=1,\dots,\ell-1$, vertex $x_i$ buys the edge towards $x_{i-1}$, and $x_{\ell -1}$ also buys the edge towards $u'$. The ownership of the edges of the other paths are defined symmetrically. Observe that the intersection vertices buy no edges.

Given an intersection vertex $v=(x_1, x_2, \dots, x_d)$ we define $\mathcal{F}^h(v)$ as the set of vertices reachable by traversing an edge incident to $v$, and then proceeding in the same direction for a total of $h$ steps, i.e., $\mathcal{F}^h(v)=\{ (x_1 \pm h, x_2 \pm h, \dots, x_d \pm h)\mbox{, for every possible choice of $\pm$ signs} \}$. If $h \le k$ then $|\mathcal{F}^h(v)|=2^d$ and, by Lemma \ref{lemma:torus_distance}, the distance between $v$ and any of those vertices is exactly $h$.

The following lemmas are instrumental to prove the lower bound to the PoA for \MaxNCG as they will provide sufficient conditions for intersection and non-intersection vertices to be in equilibrium.

\begin{lemma}
	\label{lemma:intersection-equilibrium}
	If $d \ge \log \frac{k-1}{\alpha}$ then the intersection vertices are in equilibrium.
\end{lemma}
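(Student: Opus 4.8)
The plan is to exploit Proposition~\ref{prop:maxncg_best_response}, which guarantees that an intersection vertex $u$ may be analyzed exactly as a player in the full-knowledge game played on her own view $H$. Since intersection vertices own no edges, the only available deviations consist of \emph{buying} a set $F$ of $m=|F|$ edges incident to $u$ and leading to vertices of $H$; removals and swaps are impossible. By Proposition~\ref{prop:maxncg_best_response} the worst-case cost difference $\Delta$ equals the cost difference evaluated in $H$ itself, namely $\alpha m + \big(\max_v d_{H+F}(u,v) - \max_v d_H(u,v)\big)$. Hence it suffices to show that, within $H$, no set of purchased edges lets $u$ recoup its building cost through a smaller eccentricity.

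First I would pin down the eccentricity of $u$ in $H$. Since $H$ is the $k$-neighborhood of $u$, every vertex of $H$ lies within distance $k$, and the $2^d$ ``corners'' $\mathcal{F}^k(u)$ are at distance exactly $k$ (using $|\mathcal{F}^k(u)|=2^d$ for $h=k$, together with Lemma~\ref{lemma:torus_distance}); thus the eccentricity of $u$ is exactly $k$. The crucial observation is that these corners form a hard-to-shorten set: taking $L=\mathcal{F}^k(u)$, each corner is at distance $k$ from $u$, and any two distinct corners differ by $2k$ in the sign of at least one coordinate, so by Lemma~\ref{lemma:open_torus_distance} (and since $H$ is a subgraph of the ``open'' graph, distances in $H$ are no smaller) their pairwise distance is at least $2k \ge 2k-2$.

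Next I would invoke Lemma~\ref{lemma:shorten_distance} with $h=k$. Suppose a deviation reduces the eccentricity by $t\ge 1$, i.e.\ to at most $k-1$; then in particular every corner must end up at distance strictly less than $k$ in $H+F$. The hypotheses of Lemma~\ref{lemma:shorten_distance} are then satisfied, so $m=|F|\ge |L|=2^d$. Combining this with the trivial bound $t\le k-1$ on the eccentricity saving, the cost difference in $H$ satisfies $\alpha m - t \ge \alpha 2^d - (k-1)$. The hypothesis $d\ge \log\frac{k-1}{\alpha}$ is exactly $2^d \ge \frac{k-1}{\alpha}$, i.e.\ $\alpha 2^d \ge k-1$, whence $\alpha m - t \ge 0$. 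Deviations with $t=0$ only raise the building cost and are trivially non-improving, so $\Delta(\bar{\sigma}_u,\sigma_u')\ge 0$ for every deviation $\sigma_u'$, and the intersection vertices are in equilibrium.

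I expect the main obstacle to be the careful justification that the $2^d$ corners are genuinely the binding constraint: one must verify that any eccentricity improvement forces all of them below distance $k$ simultaneously (so that a single application of Lemma~\ref{lemma:shorten_distance} yields the $m\ge 2^d$ bound), and that the pairwise-distance hypothesis $\ge 2h-2$ holds with $h=k$. Bounding the maximal possible saving by $k-1$ is precisely what lets one application of the locality lemma suffice, rather than forcing an iterated argument over decreasing eccentricities.
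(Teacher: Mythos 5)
Your proof is correct and takes essentially the same approach as the paper's: both use the set $\mathcal{F}^k(u)$ together with Lemma~\ref{lemma:open_torus_distance} and Lemma~\ref{lemma:shorten_distance} to show that any eccentricity-reducing deviation must buy at least $2^d$ edges, and then conclude by comparing the building cost $\alpha 2^d$ with the maximal saving $k-1$, which is nonnegative under the hypothesis $d \ge \log\frac{k-1}{\alpha}$. The only differences are presentational: you spell out the appeal to Proposition~\ref{prop:maxncg_best_response} and verify that the eccentricity of $u$ in $H$ is exactly $k$, both of which the paper leaves implicit.
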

\begin{proof}
	By symmetry, let us consider only the intersection vertex $u$.
	As $u$ has not bought any edge, she can modify her strategy only by buying new edges.
	Every vertex in the set $\mathcal{F}^k(u)$ is at distance $k$ from $u$, moreover, by Lemma \ref{lemma:open_torus_distance}, any two distinct vertices in $\mathcal{F}^k(u)$ have a distance of at least $2k$ in the view of $u$.
	
	By Lemma \ref{lemma:shorten_distance}, $u$ needs to buy at least $2^d$ edges in order to reduce her eccentricity.
	If she does so, she saves at most $k-1$ on the usage cost while paying at least $\alpha 2^d$, but we have $\alpha 2^d \ge k-1$.
\end{proof}

\begin{lemma}
	\label{lemma:non-intersection_far_vertices}
	If $k\ge \ell$, for every non-intersection vertex $v$, there is a set $L$ of $2^d$ vertices at distance $k$ from $v$ and at distance at least $2k-\ell$ between each other in the view of $v$.
\end{lemma}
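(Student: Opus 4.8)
The plan is to reproduce, for a non-intersection vertex, the explicit ``diagonal cone'' construction $\mathcal{F}^k$ that worked for intersection vertices in Lemma~\ref{lemma:intersection-equilibrium}. As in that lemma I would work inside the ``open'' graph (into which the view of $v$ embeds), bounding distances from above by exhibiting walks and from below through Lemma~\ref{lemma:open_torus_distance}. First I would fix convenient coordinates: a non-intersection vertex lies on a length-$\ell$ path joining two intersection vertices, and writing $p$ for the \emph{nearer} of the two endpoints I may assume $v = p + j\epsilon$ for a sign vector $\epsilon \in \{-1,+1\}^d$ (the direction of $v$'s path) and an integer $j$ with $1 \le j \le \ell/2$. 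Choosing $p$ to be the closer endpoint is exactly what forces $2j \le \ell$, and this is precisely the inequality that will make the final separation reach $2k-\ell$ rather than merely $2k-2\ell$.

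Next I would build the candidate set $L = \{w_s : s \in \{-1,+1\}^d\}$. For every direction $s \ne \epsilon$ I set $w_s = p + (k-j)s$, reached from $v$ by first walking the $j$ steps back to $p$ and then proceeding diagonally in direction $s$ for $k-j$ steps (re-branching in direction $s$ at each intermediate intersection vertex); this is a walk of length exactly $k$, so $d(v,w_s) \le k$. For the one remaining direction $s=\epsilon$ I instead set $w_\epsilon = v + k\epsilon$, reached by walking straight forward through $v$'s own path (and past the far endpoint) for $k$ steps. To match the lower bound I would check that $\max_i |v_i-(w_s)_i| = k$ in every case: for $s \ne \epsilon$ any coordinate with $s_i = -\epsilon_i$ (one always exists) contributes a difference of $k$, and for $s=\epsilon$ every coordinate contributes $k$. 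Lemma~\ref{lemma:open_torus_distance} then yields $d(v,w_s) = k$ exactly; since these vertices lie within the $k$-neighborhood, the realizing walks stay inside the view and the value $k$ holds in the view as well.

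Finally I would verify the pairwise separation. For two targets $w_s,w_{s'}$ with $s,s' \ne \epsilon$ and $s\ne s'$, any coordinate where $s$ and $s'$ disagree gives $|(k-j)s_i-(k-j)s'_i| = 2(k-j) \ge 2k-\ell$, using $j \le \ell/2$; and comparing $w_\epsilon$ with any $w_s$, $s\ne\epsilon$, a coordinate with $s_i=-\epsilon_i$ contributes $2k$. By Lemma~\ref{lemma:open_torus_distance} all pairwise distances in the view are therefore at least $2k-\ell$, which is positive since $k \ge \ell$, so the $2^d$ vertices are in particular distinct.

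The only genuinely non-routine point, and the reason this is not a verbatim copy of $\mathcal{F}^k$, will be the direction $s=\epsilon$. Routing it through $p$ like all the others would produce $p+(k-j)\epsilon = v+(k-2j)\epsilon$, which sits at distance only $|k-2j|<k$ from $v$ because the return trip to $p$ partly cancels the outward diagonal; hence $s=\epsilon$ must be sent straight forward to $v+k\epsilon$ (equivalently, routed through the far intersection endpoint). Reconciling this asymmetry, i.e.\ obtaining distance exactly $k$ simultaneously for all $2^d$ directions while keeping the $2k-\ell$ separation tight via the nearest-endpoint choice $j\le\ell/2$, is the crux; everything else is the coordinate bookkeeping sketched above.
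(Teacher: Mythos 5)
Your construction coincides with the paper's own: your $\{w_s : s \ne \epsilon\}$ is exactly $\mathcal{F}^{k-j}(p)$ of the nearest intersection endpoint minus the one vertex pointing back toward $v$ (the paper's $L'$), your $w_\epsilon = v + k\epsilon$ is the paper's extra vertex $y$, and both proofs rest on the same tools (explicit diagonal walks for upper bounds, Lemma~\ref{lemma:open_torus_distance} for lower bounds, and the nearest-endpoint choice $j \le \ell/2$ to get $2(k-j)\ge 2k-\ell$). The proposal is correct; the only cosmetic difference is that you obtain $d(v,w_s)\ge k$ directly from a coordinate gap of $k$, whereas the paper argues by cases on whether a shortest path from $v$ passes through $u$ or $u'$.
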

\begin{proof}
	By symmetry, let $v=(k^*+\ell-z, \dots, k^*+\ell-z)$ with $1<z\le \left\lfloor \frac{\ell}{2} \right\rfloor$. The nearest intersecting vertex from $v$ is $u^\prime$.
	Let $L^\prime = \mathcal{F}^{k-z}(u^\prime) \setminus \{ (k^*+\ell-(k-z), \dots, k^*+\ell-(k-z) ) \}$. Any pair of vertices in $L^\prime$ differs by at least one coordinate and, by Lemma \ref{lemma:open_torus_distance}, is at distance at least $2k-2z \ge 2k-\ell$ in the view of $v$.
	
	Let $v^\prime$ be a vertex of $L^\prime$, the shortest path from $v$ to $v^\prime$ must contain either $u$ or $u^\prime$. By Lemma \ref{lemma:open_torus_distance}, if it contains $u$ then we have: $d(v,v^\prime)=d(v,u)+d(u,v^\prime) \ge \ell - z + \ell + k - z = 2\ell +k - 2z \ge k$, otherwise it contains $u^\prime$, and we have: $d(v,v^\prime)=d(v,u^\prime)+d(u^\prime,v^\prime) \ge z+k-z = k$. Therefore $d(u^\prime, v^\prime) \ge k$.
	
	Now consider the vertex $y=(k^*+\ell-z-k, \dots, k^*+\ell-z-k)$. We have that $y \not\in L^\prime$ and, by Lemma \ref{lemma:open_torus_distance}, $d(v, y) \ge k$ and $d(y,v^\prime) \ge 2k$.
	
	The claim follows as we can define $L=L^\prime \cup \{ y \}$.
\end{proof}

\begin{lemma}
	\label{lemma:non-intersection-increase}
	If $\frac{\alpha+1}{2} \le \ell \le \alpha + 2$ and $d \ge\log (\frac{k-1}{\alpha} + 2)$, then every non-intersection vertex $v$ is in equilibrium w.r.t.\ all the strategies that increase the number of bought edges.
\end{lemma}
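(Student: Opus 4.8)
The plan is to mirror the argument used for the intersection vertices in Lemma~\ref{lemma:intersection-equilibrium}, but to account for the fact that the $2^d$ far vertices furnished by Lemma~\ref{lemma:non-intersection_far_vertices} are now only pairwise $2k-\ell$ apart, rather than $2k$, which prevents a direct application of Lemma~\ref{lemma:shorten_distance} at threshold $h=k$. I fix an arbitrary non-intersection vertex $v$; by Proposition~\ref{prop:maxncg_best_response} I may evaluate any deviation of $v$ on its own view $H$, where $v$ has eccentricity exactly $k$ (the radius of the view) and where Lemma~\ref{lemma:non-intersection_far_vertices} (which needs $k\ge\ell$) provides a set $L$ with $|L|=2^d$, $d_H(v,v_i)=k$, and pairwise distances at least $2k-\ell$. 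Since the strategy space only allows buying edges towards nodes of $H$, a deviation that increases the number of bought edges amounts to adding a set $F$ of edges, each incident to $v$ and with its other endpoint in $H$, with $|F|\ge 1$. Writing $j\ge 0$ for the achieved reduction of the eccentricity (and noting $j\le k-1$, as the eccentricity cannot drop below $1$), it suffices to prove $\alpha\,|F|\ge j$.

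I would then split on the size of $j$ relative to $\ell/2$. If $2j \ge \ell$, I apply Lemma~\ref{lemma:shorten_distance} inside $H$ with $h = k-j+1$: the hypotheses $d_H(v,v_i)=k \ge h$ and pairwise distance $\ge 2k-\ell \ge 2k-2j = 2h-2$ are met, so forcing every vertex of $L$ within distance $k-j$ of $v$ requires $|F|\ge 2^d$. The building-cost increase is then at least $\alpha\,2^d \ge (k-1)+2\alpha$, using $2^d \ge \frac{k-1}{\alpha}+2$ from the hypothesis $d \ge \log\!\big(\frac{k-1}{\alpha}+2\big)$; since $j \le k-1$, this already exceeds $j$ and the deviation is not improving.

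For the complementary range $2j < \ell$, no structural input is needed: the deviation buys at least one edge, so the building cost grows by at least $\alpha$, whereas the saving obeys $j < \ell/2 \le \frac{\alpha+2}{2}$. Using $\alpha>1$ and the integrality of $j$, one checks that $\frac{\alpha+2}{2}\le \alpha$ whenever $\alpha\ge 2$, and that $j\le 1<\alpha$ when $1<\alpha<2$; hence $j<\alpha\le \alpha\,|F|$, and again the cost does not decrease. This case is precisely where the upper bound $\ell \le \alpha+2$ on the stretch is exploited (the lower bound $\frac{\alpha+1}{2}\le\ell$ is reserved for the edge-removal analysis).

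The delicate point, and the one I would treat most carefully, is the first case: I must verify that the slack $2k-\ell$ is exactly enough when $2j\ge\ell$, and that the pairwise distances invoked are those measured inside $v$'s view $H$ (so Lemma~\ref{lemma:shorten_distance} is applied to $H$, not to the global torus, where Lemma~\ref{lemma:open_torus_distance} is the relevant distance estimate). A secondary subtlety is that a deviation ``increasing the number of bought edges'' could also drop some of $v$'s currently owned incident edges; since deleting incident edges can only increase distances in $H$, the added edges $F$ must still shorten the $L$-distances on their own, so the bound $|F|\ge 2^d$ and the net accounting $\alpha(\text{added}-\text{removed})\ge\alpha$ remain valid.
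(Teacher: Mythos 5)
Your argument for the regime $k \ge \ell$ is correct and is essentially the paper's own proof read contrapositively: where the paper splits on whether the deviating strategy ends up with fewer than $2^d$ or at least $2^d$ incident edges, you split on whether the eccentricity saving $j$ is at least $\ell/2$ or not; both reductions rest on Lemma~\ref{lemma:non-intersection_far_vertices} plus Lemma~\ref{lemma:shorten_distance}, and your accounting is sound (the bound $\alpha 2^d \ge (k-1)+2\alpha$, with the $2\alpha$ of slack absorbing the at most two owned edges a deviation might simultaneously delete, correctly handles mixed add/remove deviations, as does the elementary bound $j < \alpha$ in the case $2j<\ell$).

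However, there is a genuine gap: the lemma's hypotheses do not imply $k \ge \ell$, and your proof does not fully cover the regime $k \le \ell - 1$. You flag parenthetically that Lemma~\ref{lemma:non-intersection_far_vertices} requires $k \ge \ell$, but your case split is on $j$ alone, and the branch $2j \ge \ell$ is the one that needs the set $L$. When $k \le \ell-1$ that set is unavailable, yet the saving can still reach $j = k-1 \le \ell-2$, and $\ell - 2 \ge \ell/2$ as soon as $\ell \ge 4$; concretely, $\alpha = 10$, $\ell = 12$, $k = 8$, $d = 2$ satisfies every hypothesis of the lemma ($\frac{\alpha+1}{2} = 5.5 \le \ell \le \alpha+2$ and $2^d \ge \frac{k-1}{\alpha}+2$), and a hypothetical deviation achieving $j = 7 > \ell/2$ lands in your first case, where the argument cannot run. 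So this subcase falls through both of your branches. The paper disposes of the entire regime $k \le \ell - 1$ up front in one line: any strategy that increases the number of bought edges raises the building cost by at least $\alpha$, while the usage cost can decrease by at most $k-1 \le \ell - 2 \le \alpha$ (this is where $\ell \le \alpha+2$ enters), so no such deviation is improving. Prepending this case to your proof makes it complete; without it, the proof establishes a strictly weaker statement than the one claimed.
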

\begin{proof}
	Notice that each non-intersection vertex has \linebreak bought at most $2$ edges, and consider a strategy that increases the number of bought edges.
	
	If $k \le \ell - 1$ then the building cost increases by at least $\alpha$ while the usage cost decreases at most by $k-1$, but we have $\alpha \ge \ell-2 \ge k-1$.
	
	Otherwise, $k \ge \ell$ and, by Lemma \ref{lemma:non-intersection_far_vertices}, there exists a set of at least $2^d$ vertices at distance at least $k \ge k-\frac{\ell}{2}$ from $v$, and $2k-\ell$ between each other. By Lemma \ref{lemma:shorten_distance}, $v$ needs to have at least $2^d$ incident edges in order to reduce her eccentricity by at least $\frac{\ell}{2}$.
	
	If, in the new strategy, $v$ has less than $2^d$ incident edges then the building cost increases by at least $\alpha$, and the usage cost decreases by at most $\frac{\ell}{2}-1$, but we have $\alpha \ge \frac{\ell}{2}-1$.
	
	If, in the new strategy, $v$ has at least $2^d$ incident edges, then the building cost increases by at least $\alpha(2^d-2)$, and the eccentricity decreases by at most $k-1$, but we have $\alpha(2^d-2) \ge k-1$.
\end{proof}

\begin{lemma}
	\label{lemma:non-intersection-decrease}
	If $\ell \ge \alpha$, every non-intersection vertex $v$ is in equilibrium w.r.t.\ all the strategies that decrease the number of bought edges.
\end{lemma}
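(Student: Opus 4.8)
The plan is to invoke Proposition~\ref{prop:maxncg_best_response}, so that $v$ evaluates the deviation in her own view $H$: since for \MaxNCG the worst-case realizable network always coincides with the view, a strategy that deletes an owned edge is non-improving exactly when the eccentricity of $v$ in the modified view $H'$ exceeds her eccentricity in $H$ by at least $\alpha$ per deleted edge. Recall that a non-intersection vertex $v$ has degree exactly $2$ --- its two neighbours are the predecessor and the successor on the length-$\ell$ path joining the two flanking intersection vertices $p$ and $p'$ --- and, by the ownership rule, every edge it owns lies on this path (one edge for an interior vertex, two for the vertex adjacent to $p'$). Hence any edge-decreasing deviation deletes at least one such path-edge, lowering the building cost by exactly $\alpha$ for each deletion, and it suffices to show the eccentricity rises by at least $\alpha$.

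First I would treat the deletion of a single owned edge $e=(v,w)$, say with $w$ towards $p$ (the second edge owned by the vertex adjacent to $p'$ is symmetric). Deleting $e$ leaves $v$ with its unique remaining neighbour towards $p'$, so in $H'$ every walk leaving $v$ must start in the direction of $p'$; in particular $e$ severs the only length-$\ell$ connection between $p$ and $p'$. The crucial geometric fact, read off from Lemma~\ref{lemma:open_torus_distance}, is that any $p$--$p'$ walk other than this direct path has length at least $3\ell$: each hop between adjacent intersection vertices changes every coordinate by $\pm\ell$, and realising the net displacement $(+\ell,\dots,+\ell)$ without the single direct hop forces at least three hops. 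Thus bypassing $e$ is genuinely expensive.

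I would then exhibit a witness vertex. Since the diameter is $\Omega(\ell\,\delta_d)$ by Corollary~\ref{cor:torus_diameter}, there is a vertex $b$ at distance exactly $k$ from $v$ whose shortest path in $H$ traverses $e$ (take $b$ on the far side of $p$, along the large $d$-th dimension). In $H'$ the path to $b$ must first move towards $p'$ and then bypass the broken link, so by the previous paragraph $d_{H'}(v,b)\ge k+2\ell$; and if this detour does not fit inside the radius-$k$ view, then $b$ is simply disconnected from $v$ in $H'$ and the eccentricity becomes unbounded. Either way $v$'s eccentricity grows by at least $2\ell\ge 2\alpha>\alpha$, so deleting $e$ is not improving. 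Finally, deleting both owned edges of the vertex adjacent to $p'$ isolates it (degree $0$), again giving unbounded eccentricity; hence $v$ is in equilibrium with respect to every strategy that decreases the number of bought edges.

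The main obstacle is the locality bookkeeping in the witness step: the clean $\Omega(\ell)$ bypass bound is transparent between two intersection vertices, but $b$ is a non-intersection vertex lying beyond $p$, so one must verify that a single broken path-edge still cannot be circumvented more cheaply than $\Theta(\ell)$ within the worst-case view (a naive single-coordinate potential only yields an increase of $2$, far too weak). This is precisely where the strict coordinate inequality of Lemma~\ref{lemma:open_torus_distance} must be leveraged, and where the hypothesis $\ell\ge\alpha$ enters to guarantee that the resulting eccentricity increase dominates the $\alpha$ saved by each deletion.
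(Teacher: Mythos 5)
Your strategy shares the paper's spirit (evaluate the deviation in the view via Proposition~\ref{prop:maxncg_best_response}, then exhibit far-away witnesses whose distance explodes when a path-edge is severed), but it has two gaps, one of them structural. The structural one: a ``strategy that decreases the number of bought edges'' need not be a subset of the current strategy. The vertex adjacent to $p'$ owns two edges, so a count-decreasing deviation may delete \emph{both} owned edges and buy one \emph{new} edge towards an arbitrary vertex $w$ of the view (net saving $\alpha$). Your case analysis (delete one owned edge / delete both) misses this, and no single-witness argument can repair it: for any fixed witness $b$ at distance $k$, the deviator can choose $w$ adjacent to $b$, making $d_{H'}(v,b)=2$. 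This is precisely why the paper works with \emph{two} witnesses $x=(k^*+z+k,\dots,k^*+z+k)$ and $x'=(k^*+z-k,\dots,k^*+z-k)$ on opposite sides of $v$: it shows that in $G'$ (the view with \emph{all} edges incident to $v$ removed) one has $d_{G'}(v,x)=d_{G'}(v,x')=+\infty$ and $d_{G'}(x,x')\ge 2k+2\ell$, and then invokes the counting Lemma~\ref{lemma:shorten_distance} with $h=k+\ell$: at least $2$ incident edges are needed to keep both $x$ and $x'$ within distance $k+\ell$. Since every count-decreasing deviation leaves $v$ with at most one incident edge --- wherever that edge points --- the eccentricity rises to at least $k+\ell$, i.e.\ the usage cost increases by at least $\ell\ge\alpha$, and this covers all cases at once.

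The second gap you flag yourself but do not close: the inequality $d_{H'}(v,b)\ge k+2\ell$ does not follow from the fact that every $p$--$p'$ walk other than the direct one has length at least $3\ell$, because a path from $p'$ to $b$ need not pass through $p$ at all (it can re-enter the diagonal below $p$, or reach $b$ through the intersection vertex flanking $b$ on its other side). The paper closes exactly this hole with a decomposition rather than a $p$--$p'$ detour bound: on a shortest $x$--$x'$ path in $G'$, take the first vertex $y$ whose successor leaves the diagonal (it must be an intersection vertex, since non-intersection vertices on the diagonal have both neighbours on it), and let $y'$ be the next intersection vertex after $y$; because the step out of $y$ has mixed signs, some coordinate of $y'$ exceeds that of $y$ by $\ell$, and Lemma~\ref{lemma:open_torus_distance} yields $d_{G'}(x,x')\ge (k-h)+\ell+(k+h+\ell)=2k+2\ell$. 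Your claimed bound on $d_{H'}(v,b)$ is in fact true and provable by this same decomposition, so that gap is fixable; the missing two-witness counting argument of the first paragraph is the idea your proof cannot do without.
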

\begin{proof}
	By symmetry, let $v=(k^*+z, \dots, k^*+z)$ with $1 \le z < \ell$. The vertex $v$ can decrease the number of bought edges by at most $1$, thus saving $\alpha$ on the building cost.
	
	Let $G^\prime$ be the view of $v$ where the edges incident to the vertex $v$ have been removed, $x=(k^*+z+k, \dots, k^*+z+k)$, and $x^\prime = (k^*+z-k, \dots, k^*+z-k)$.
	If $x$ and $x^\prime$ are not connected in $G^\prime$, then $v$ cannot decrease the number of bought edges.
	
	Otherwise, let $\pi$ be a shortest path between $x$ and $x^\prime$ in $G^\prime$. Let $y=(k^*+z+h, \dots, k^*+z+h)$ be the first vertex in $\pi$ such that the following vertex in $\pi$ is different from $(k^*+i, \dots, k^*+i)$ for all values of $i \in \mathbb{Z}$. Notice that $y$ must be an intersection vertex, and let $y^\prime$ be the first intersection vertex following $y$ in $\pi$. At least one of the coordinates of $y^\prime$ must be $k^*+z+h+\ell$.
	
	We have $d_{G^\prime}(v,x) = d_{G^\prime}(v,x^\prime)=+\infty$ and, by Lemma \ref{lemma:open_torus_distance}:
	\begin{multline*}
		d_{G^\prime}(x,x^\prime) \ge d_{G^\prime}(x,y) + d_{G^\prime}(y, y^\prime) + d_{G^\prime}(y^\prime, x^\prime) \\ \ge k-h + \ell + k + h + \ell \ge 2k +2\ell.
	\end{multline*}
	
	By Lemma \ref{lemma:shorten_distance}, $u$ needs at least $2$ incident edges for her eccentricity to be under $k+\ell$. If $v$ decreases the number of bought edges then $v$ has at most $1$ incident edge, and her usage cost increases by at least $\ell$, but we have $\alpha \le \ell$.
\end{proof}

\begin{lemma}
	\label{lemma:non-intersection-equilibrium}
	If $\alpha \le \ell \le \alpha + 2$, $k \ge \ell$, and $d \ge\log (\frac{k-1}{\alpha} + 2)$, then every non-intersection vertex  $v$ is in equilibrium.
\end{lemma}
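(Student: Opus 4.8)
My plan is to reduce the statement to the two preceding lemmas plus one extra case. By Proposition~\ref{prop:maxncg_best_response} the vertex $v$ may evaluate every candidate deviation as if the real network were exactly her view $H$, and since shortest paths out of $v$ never leave the $k$-neighbourhood, the eccentricity of $v$ in $H$ equals $k$ (there is a vertex at distance $k$ by Lemma~\ref{lemma:non-intersection_far_vertices}, and none farther since the view has radius $k$). Hence her current cost is $\alpha\,|\sigma_v|+k$ and I only need to rule out any $\sigma_v'$ with $\alpha\,|\sigma_v'|+\max_w d_{H'}(v,w) < \alpha\,|\sigma_v|+k$. I would classify the deviations of $v$ according to whether they buy more, fewer, or the same number of edges.

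The first two classes follow at once from the previous lemmas, once the hypotheses are matched. Because $\alpha>1$ we have $\frac{\alpha+1}{2}\le\alpha$, so from $\alpha\le\ell\le\alpha+2$ we get $\frac{\alpha+1}{2}\le\ell\le\alpha+2$; together with $d\ge\log\!\big(\frac{k-1}{\alpha}+2\big)$ this is precisely the hypothesis of Lemma~\ref{lemma:non-intersection-increase}, which therefore forbids any improving deviation that increases the number of bought edges. Symmetrically, $\ell\ge\alpha$ is the hypothesis of Lemma~\ref{lemma:non-intersection-decrease}, ruling out every improving deviation that decreases the number of bought edges.

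What remains is a deviation that keeps the number of bought edges fixed; then the building cost is unchanged and I must show that no reshuffle pushes the eccentricity of $v$ below $k$. Here I would use that every non-intersection vertex is incident to exactly two edges of the equilibrium graph (the two edges of the path it lies on), together with the set $L$ of $2^d$ vertices at distance exactly $k$ from $v$ given by Lemma~\ref{lemma:non-intersection_far_vertices}. All but one vertex of $L$ are reached through the intersection vertex $u'$ and are kept at distance exactly $k$ by the edge incident to $v$ on the $u'$ side, while the remaining vertex $y$ is reached through the opposite side and needs the other incident edge. Thus $v$'s two incident edges are exactly what is required to keep all of $L$ at distance $k$: reallocating an edge to shortcut some far vertex below $k$ necessarily pushes another far vertex above $k$, so the eccentricity cannot strictly decrease. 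Combining the three classes proves that $v$ is in equilibrium.

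I expect this last case to be the main obstacle. Unlike the intersection vertices of Lemma~\ref{lemma:intersection-equilibrium}, whose far vertices are pairwise $2k$ apart, the far vertices of a non-intersection vertex are only pairwise $2k-\ell$ apart, so Lemma~\ref{lemma:shorten_distance} cannot be applied with threshold $k$ to conclude outright that $2^d$ edges are needed to break the eccentricity. The reshuffle case therefore cannot be settled by a direct edge count and instead requires the more delicate book-keeping above, tracking which far vertices each of $v$'s two incident edges is responsible for and checking that a single reallocation always leaves some vertex of $L$ at distance $k$.
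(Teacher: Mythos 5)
Your reduction of the increase/decrease cases to Lemmas~\ref{lemma:non-intersection-increase} and~\ref{lemma:non-intersection-decrease} matches the paper, and your hypothesis-matching (using $\alpha>1$ to get $\frac{\alpha+1}{2}\le\alpha\le\ell$) is correct. The gap is in the remaining case --- deviations that preserve the number of bought edges --- which is exactly the case you yourself flag as the main obstacle. You correctly observe that the set $L$ of Lemma~\ref{lemma:non-intersection_far_vertices} only guarantees pairwise distances $2k-\ell$, too weak for Lemma~\ref{lemma:shorten_distance} with threshold $k$; but you then replace the missing argument with the bare assertion that ``reallocating an edge to shortcut some far vertex below $k$ necessarily pushes another far vertex above $k$.'' This is never proved, and nothing in your sketch rules out that a single new edge, bought towards a cleverly chosen vertex of the view, simultaneously brings \emph{several} vertices of $L$ below distance $k$ --- precisely the possibility that the weak $2k-\ell$ separation leaves open. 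Your book-keeping also treats $v$'s two \emph{incident} edges as if both were at $v$'s disposal, but $v$ can only relocate the edges she \emph{owns} (one edge for interior path vertices, two for the vertex adjacent to $u'$), so the case split must be on owned edges, not incident ones.

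The paper closes this case with two ideas your proposal is missing. First, a swap is dominated by a pure addition: the distances after replacing the owned edges by new ones are no smaller than in the graph where the new edges are added \emph{on top of} the old ones, so it suffices to show that adding $1$ (resp.\ $2$) edges to the current view cannot push the eccentricity below $k$. Second, instead of the set $L$ of Lemma~\ref{lemma:non-intersection_far_vertices}, the paper uses far-vertex sets with pairwise separation at least $2k-2$, for which Lemma~\ref{lemma:shorten_distance} with $h=k$ \emph{does} apply: when $v=(k^*+z,\dots,k^*+z)$ owns one edge, the two antipodal vertices $(k^*+z+k,\dots,k^*+z+k)$ and $(k^*+z-k,\dots,k^*+z-k)$ at mutual distance $2k$ force at least $2$ added edges, while only one is available; when $v$ owns two edges (so $v$ is adjacent to $u'$), the $2^d-1\ge 3$ vertices of $\mathcal{F}^{k-1}(u')$ minus the one nearest to $v$, pairwise at distance at least $2k-2$, force at least $2^d-1$ added edges, while only two are available. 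This quantitative contradiction is what your sketch lacks, and obtaining it requires choosing these alternative far-vertex sets rather than the set $L$ you work with.
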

\begin{proof}
By Lemma \ref{lemma:non-intersection-increase}	and Lemma \ref{lemma:non-intersection-decrease} $v$ is in equilibrium w.r.t.\ all the strategies that either increase or decrease the number of bought edges.
	
	We now show that $v$ is also in equilibrium w.r.t.\ the strategies that do not change the  number of bought edges. As the building cost of $v$ remains the same, $v$ must save on her usage cost in order to change her strategy.
	
	We will show that $v$ cannot decrease her usage even when she buys the new edges in addition to the ones already bought.

	If $v$ owns only one edge then let, by symmetry, $v=(k^*+z, \dots, k^*+z)$ with $1 \le z < \ell$, and let the bought edge be towards $(k^*+z+1, \dots, k^*+z+1)$. The vertices $(k^*+z+k, \dots, k^*+z+k)$ and $(k^*+z-k, \dots, k^*+z-k)$ are at distance $k$ from $v$ and $2k$ between each other. By Lemma \ref{lemma:shorten_distance}, $v$ needs at least $2$ new incident edges  to decrease her eccentricity under $k$, but she can only add one.

	If $v$ owns two edges then, by symmetry, let $v=(k^* + \ell -1, \dots, k^* + \ell -1)$ so she has an edge towards $u^\prime$.
	
	The vertices in $\mathcal{F}^{k-1}(u^\prime) \setminus { (k+\ell-k+1, \dots, k+\ell-k+1) }$ are at distance $k$ from $v$ and $2k-2$ between each other, in the view of $v$. By Lemma \ref{lemma:shorten_distance}, $v$ needs at least $2^d-1 \ge 3$ new incident edges in order to decrease her eccentricity under $k$, but she can only add two.
\end{proof}

We are now ready to prove the following:

\begin{theorem}
	\label{thm:torus_poa}
	If $1 < \alpha \le k \le 2^{\sqrt{\log n} - 3}$, then the PoA of \MaxNCG is $\Omega\bigg( \frac{n }{ \alpha \cdot 2^{ (\log \frac{k}{\ell} + 3 ) \log \frac{k}{\ell}}}  \bigg)$.
\end{theorem}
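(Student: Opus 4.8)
The plan is to instantiate the torus construction with parameters tuned so that all the equilibrium lemmas apply simultaneously, exhibit one LKE of large diameter, and then read off the price of anarchy from the diameter and the vertex count. Concretely, I would set $\ell = \lceil \alpha \rceil$ (so that $\alpha \le \ell \le \alpha+2$ and, since $k \ge \alpha$ and $k$ is an integer, $\ell \le k$, i.e.\ $k \ge \ell$), take $d = \lceil \log(\frac{k-1}{\alpha}+2) \rceil$ (which is at least $2$ because $k\ge 2$), and fix $\delta := \delta_1 = \dots = \delta_{d-1} = \lceil k/\ell\rceil + 1$, leaving $\delta_d \ge \delta$ to be determined by the desired number of vertices. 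With these choices the hypotheses $\alpha \le \ell \le \alpha+2$, $k \ge \ell$ and $d \ge \log(\frac{k-1}{\alpha}+2)$ of Lemma~\ref{lemma:non-intersection-equilibrium} all hold, and $d \ge \log\frac{k-1}{\alpha}$ holds as well, so by Lemmas~\ref{lemma:intersection-equilibrium} and~\ref{lemma:non-intersection-equilibrium} every intersection and every non-intersection vertex is in equilibrium; hence the constructed graph is an LKE.

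Next I would estimate the social cost of this equilibrium against the optimum. The number of intersection vertices is $2\prod_i \delta_i = 2\delta^{d-1}\delta_d$, each carrying $2^d$ incident length-$\ell$ paths, so there are $2^d\delta^{d-1}\delta_d$ paths, each contributing $\ell-1$ non-intersection vertices; since $\ell \ge 2$ and $d \ge 2$ the non-intersection vertices dominate and $n = \Theta(\ell\,2^d\,\delta^{d-1}\delta_d)$, whence $\delta_d = \Theta\big(n/(\ell\,2^d\,\delta^{d-1})\big)$. By Corollary~\ref{cor:torus_diameter} the diameter is at least $\ell\delta_d$, so the radius, and hence every eccentricity, is at least $\ell\delta_d/2$, and the routing part of the social cost alone is $\Omega(n\,\ell\,\delta_d)$. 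Since for $\alpha>1$ the social optimum is the star of cost $\Theta(\alpha n)$, this yields $\mathrm{PoA} = \Omega\big(\tfrac{\ell\,\delta_d}{\alpha}\big) = \Omega\big(\tfrac{n}{\alpha\,2^d\,\delta^{d-1}}\big)$, where the factor $\ell$ cancels and the explicit $\alpha$ in the denominator comes from the optimum.

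It then remains to show $2^d\delta^{d-1} = O\big(2^{(\log\frac{k}{\ell}+3)\log\frac{k}{\ell}}\big)$. Writing $r = k/\ell$ and $t = \log r$, the choice $\ell=\lceil\alpha\rceil$ gives $\alpha > \ell-1$, so together with $\ell\ge 2$ one gets $\frac{k-1}{\alpha} < \frac{k}{\ell-1} \le 2r$; hence $d = t + O(1)$ and $\delta = r + O(1)$, with $\delta^{d-1}$ carrying the dominant factor $r^{t} = 2^{t^2}$. Tracking the lower-order additive terms shows the exponent is $t^2 + O(t)$, and the slack $+3t$ is designed to absorb it (for bounded $r$ the whole quantity is $O(1)$ and disappears into the $\Omega$). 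Finally I would check realizability, i.e.\ $\delta_d \ge \delta$, which amounts to $n = \Omega(\ell\,2^d\delta^{d})$; taking logarithms the right-hand exponent is $t^2 + O(t) + O(\log k)$, and this stays below $\log n$ precisely because the range hypothesis forces $t \le \log k \le \sqrt{\log n}-3$, so that $t^2 + O(t) < \log n$.

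The main obstacle is the last paragraph's constant-chasing: the three quantities $\ell$, $d$, $\delta$ are linked through three interlocking requirements — the equilibrium hypotheses of Lemmas~\ref{lemma:intersection-equilibrium} and~\ref{lemma:non-intersection-equilibrium}, the target exponent $(\log\frac{k}{\ell}+3)\log\frac{k}{\ell}$, and the feasibility bound $\delta_d \ge \delta$ under $k \le 2^{\sqrt{\log n}-3}$ — and one must verify that a single parameter choice meets all three at once. In particular the cross term of the product $(d-1)\log\delta$ must be controlled tightly enough to fit inside the $+3t$ allowance, which is exactly what dictates the precise form of the bound and the $-3$ appearing in the admissible range of $k$.
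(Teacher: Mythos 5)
Your proposal is correct and takes essentially the same route as the paper's own proof: the same torus construction with $\ell=\lceil\alpha\rceil$, $d=\Theta(\log\frac{k}{\alpha})$, $\delta_1=\dots=\delta_{d-1}=\lceil k/\ell\rceil+1$, equilibrium established via Lemmas~\ref{lemma:intersection-equilibrium} and~\ref{lemma:non-intersection-equilibrium}, the diameter bound from Corollary~\ref{cor:torus_diameter}, and $\mathrm{PoA}=\Omega(\ell\delta_d/\alpha)$ against the star optimum of cost $\Theta(\alpha n)$. The constant-chasing you flag as the remaining obstacle (showing the exponent is at most $\log^2\frac{k}{\ell}+3\log\frac{k}{\ell}+O(1)$ and that $\delta_d\ge\delta_1$ is feasible when $k\le 2^{\sqrt{\log n}-3}$) is exactly what the paper's explicit chain of implications carries out, and your parameter choices make it go through.
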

\begin{proof}
	Fix $\ell = \left\lceil \alpha \right\rceil$, and notice that $k \ge \ell \ge 2$ holds as $k$ must be an integer.
	Fix $d=\left\lceil \log\left( \frac{k}{\ell} + 2 \right) \right\rceil$, this implies $d \ge 2$. We will use the following inequalities: $\log\left(\frac{k}{\ell}\right) \le d \le \log\left(\frac{k}{\ell}\right) + 3$. Finally, as already said, we set $\delta_1,  \dots, \delta_{d-1}$ to $\left\lceil \frac{k}{\ell} \right\rceil + 1$.

	In order to be $\delta_d \ge \delta_1$ it suffices for $k$ to be at most $2^{\sqrt{\log n} - 3}$, as shown by the following calculations.

	The number of intersection vertices of the graph is $N=2\prod_{i=1}^d \delta_i = 2(\left\lceil \frac{k}{\ell} \right\rceil + 1)^{d-1} \delta_d $ while the total number of vertices is: $n=N+2^{d-1}N(\ell-1)=N(2^{d-1}(\ell-1)+1)$ therefore $N=\frac{n}{2^{d-1}(\ell-1)+1}\ge\frac{n}{2^{d-1}\ell}$.
	\begin{multline*}
		\hfill \delta_d \ge \delta_1 \iff
		\frac{N}{2(\left\lceil \frac{k}{\ell} \right\rceil + 1)^{d-1}} \ge \left\lceil \frac{k}{\ell} \right\rceil + 1 \hfill \\
		 \hfill \Longleftarrow \frac{n}{2^d\ell} \ge \left( \left\lceil \frac{k}{\ell} \right\rceil + 1 \right)^d  \Longleftarrow  k \le  \frac{n^{\frac{1}{d}}}{2} \ell^{1-\frac{1}{d}} - 2\ell \hfill  \\
		 \hfill \Longleftarrow  6 k \le n^\frac{1}{d}  \iff d \log 6k \le \log n \hfill \\
		 \hfill \Longleftarrow (\log k + 3)^2 \le \log n \Longleftarrow k \le 2^{\sqrt{\log n} - 3}. \hfill
	\end{multline*}		
	By Corollary \ref{cor:torus_diameter}, the diameter of the graph is at least:
	\begin{multline*}
	 \ell\,\delta_d = \Omega\bigg( \frac{N \ell^{d}}{2k^{d-1}} \bigg) = \Omega\bigg(\frac{n \ell^{d-1}}{2^d k^{d-1}}\bigg) = \Omega\bigg( \frac{n \ell^{d}}{ k^{d}}  \bigg) \\
	 = \Omega\bigg( \frac{n }{ (\frac{k}{\ell})^{d}}  \bigg) = \Omega\bigg( \frac{n }{ 2^{d \log \frac{k}{\ell}}}  \bigg) = \Omega\bigg( \frac{n }{ 2^{ (\log \frac{k}{\ell} + 3 ) \log \frac{k}{\ell}}}  \bigg).
	\end{multline*}

	By Lemma \ref{lemma:intersection-equilibrium} and Proposition \ref{lemma:non-intersection-equilibrium} the graph is in equilibrium.
	As every vertex in the graph owns at most $2$ edges, the total number of edges is at most $2n$, and the PoA is:
	\[
	\Omega\left( \frac{\alpha n + n \ell \delta_d }{\alpha n} \right) = \Omega\left( \frac{\ell \delta_d }{\alpha} \right) = \Omega\bigg( \frac{n }{ \alpha \cdot 2^{ (\log \frac{k}{\alpha} + 3 ) \log \frac{k}{\alpha}}}  \bigg).  \qed \]
\end{proof}

\subsection{Upper bounds for \MaxNCG}

Given a graph $H$, we denote by $\beta_{H,h}(v)$ the \emph{ball} of radius $h$ centered at node $v$ in $H$, namely the set of vertices whose distance from $v$ in $H$ is at most $h$. When the graph $H$ is clear from the context we will drop the corresponding subscript. The following lemma shows a relation between $k$ and the number of nodes that a player sees in an equilibrium graph $G$. A similar result is shown in \cite{DHM12} for the original game. 

\begin{lemma}\label{lemma:lb_balls}
Let $G$ be an equilibrium graph whose radius is greater than or equal to $k/2$, and let $N=|\beta_{G,k}(u)|$ be the number of nodes that $u$ sees in $G$. If $\alpha \le k-1$ we have that $k=O( \min \{ \sqrt[3]{N \alpha^2}, \alpha \, 4^{\sqrt{\log N}} \})$.
\end{lemma}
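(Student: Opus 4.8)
The plan is to reduce the whole statement to full-knowledge reasoning on the view $H$, and then to run a single self-improving ball-growth recursion, reading off the two terms of the $\min$ by stopping the recursion at two different depths.

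First I would invoke Proposition~\ref{prop:maxncg_best_response}: since the worst-case network for a player coincides with her view, every player in an LKE behaves exactly as a full-knowledge \MaxNCG player on the subgraph induced by her own $k$-neighborhood. Hence I may treat $u$ --- and, crucially, every \emph{other} player $y$ I meet along the way --- as sitting in a Nash equilibrium of the full-knowledge game on her view, and every edge-buying deviation I consider is admissible because its endpoints lie within distance $k$. I would then record the two consequences of the hypotheses: the radius bound gives $\text{ecc}(v)\ge k/2$ for every vertex $v$ (so far vertices always exist), and, writing $R=\text{ecc}_H(u)$, all distances from $u$ are preserved in the view, so $k/2\le R\le k$, i.e.\ $R=\Theta(k)$; the assumption $\alpha\le k-1$ guarantees that the multiplicative factors below exceed $1$ at the top scale and that there is room for $\Theta(\log(k/\alpha))$ levels of recursion.

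The engine is a single deviation, applied at an arbitrary equilibrium player $v$ with $R_v=\text{ecc}(v)$ and a target radius $h\le \frac{2}{3}R_v$. To lower her eccentricity to $h$, $v$ can buy edges to any set $S$ that covers the far region $\{x: d(v,x)\ge h\}$ with radius $h-1$; the cheapest such deviation costs $\alpha\gamma$ (where $\gamma$ is the covering number) and saves $R_v-h$, so by Proposition~\ref{prop:maxncg_best_response} the LKE condition evaluates this on the view and forces $\gamma\ge (R_v-h)/\alpha$. Now a maximal set $P$ of far vertices pairwise at distance at least $h$ is simultaneously a radius-$(h-1)$ cover, hence $|P|\ge\gamma\ge (R_v-h)/\alpha$, and since its points are pairwise at distance $\ge h$ the balls of radius $\lfloor (h-1)/2\rfloor$ around them are pairwise disjoint. (This is the place where the paper's lower-bound constructions use Lemma~\ref{lemma:shorten_distance}; here I instead need to lower-bound the packing, which the covering--packing duality supplies directly.) Taking $h=\Theta(R_v)$ and keeping only the packing points at distance $\le R_v-h/2$ from $v$, every one of these disjoint balls lies inside $\beta_{G,R_v}(v)$, giving a bound of the form $|\beta_{G,\rho}(v)|\gtrsim \frac{\rho}{\alpha}\cdot\big(\text{a lower bound on a radius-}\Theta(\rho)\text{ ball around a far player}\big)$.

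Define $f_i(\rho)$ to be the resulting lower bound on $|\beta_{G,\rho}(v)|$ for any equilibrium player $v$ with $\text{ecc}(v)\ge\rho$ and $\rho\le\Theta(k)$. Starting from the trivial base $f_0(\rho)=\Omega(\rho)$ (a shortest path witnessing $\text{ecc}(v)\ge\rho$) and feeding the previous bound into the displayed inequality gives $f_{i+1}(\rho)\gtrsim\frac{\rho}{\alpha}f_i(\Theta(\rho))$, whence $f_i(\rho)=\Omega\big(\rho^{\,i+1}/(\alpha^{i}2^{\Theta(i^2)})\big)$; the radius hypothesis is exactly what lets me apply the inner bound at the far players, and the distance bookkeeping of the previous paragraph keeps the entire recursively counted structure inside $\beta_{G,k}(u)$. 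Applying this at $v=u$ with $\rho=R=\Theta(k)$ yields $N\ge f_i(R)$ for every admissible depth $i$. Taking $i=2$ gives $N=\Omega(k^3/\alpha^2)$, i.e.\ $k=O(\sqrt[3]{N\alpha^2})$; optimizing the exponent, $i=\Theta(\log(k/\alpha))$, gives $\log N=\Omega\big((\log(k/\alpha))^2\big)$, i.e.\ $k=O(\alpha\,4^{\sqrt{\log N}})$. The two terms of the stated $\min$ are precisely these two stopping points.

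The delicate part is the locality bookkeeping rather than the recursion itself. I must ensure that (a) every shortcut edge in a deviation has both endpoints inside the deviating player's $k$-neighborhood, so the deviation is admissible and Proposition~\ref{prop:maxncg_best_response} legitimately turns the LKE condition into $\gamma\ge(R_v-h)/\alpha$ with no hidden penalty from unseen vertices (this is where the \MaxNCG-specific fact that the worst case is the view is indispensable); and (b) when I recurse at a far player $y\ne u$, the whole sub-structure counted around $y$ still lies within distance $k$ of $u$, so it genuinely contributes to $N=|\beta_{G,k}(u)|$ and is never double-counted across distinct packing points. Both are handled by the choice $h=\Theta(R_v)$ with packing points at distance $\le R_v-h/2$, which confines each radius-$\Theta(\rho)$ ball (and everything counted inside it) to $\beta_{G,R_v}(v)\subseteq\beta_{G,k}(u)$. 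Verifying that this confinement, together with the disjointness of the balls and the alignment of the covering and packing radii, survives simultaneously across all $\Theta(\log(k/\alpha))$ levels is where the real care is needed.
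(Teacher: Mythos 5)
Your overall strategy is the same as the paper's: use equilibrium-forced covering bounds (buying edges to a set that covers the far region must cost more than the eccentricity it saves), convert coverings into packings of pairwise-far points, count disjoint balls around the packing points, and recurse; the paper runs this as three separate stages (layer counting for $N=\Omega(k^2/\alpha)$, one level of packing for $N=\Omega(k^3/\alpha^2)$, and an iterated claim with the recurrence $a_{i+1}=4a_i+1$ for $N=\Omega\big(2^{\log_4^2 (k/\alpha)}\big)$), while you phrase it as a single recursion read off at depths $2$ and $\Theta(\log(k/\alpha))$. The unification is fine in principle, but your write-up has a genuine gap exactly at the step you describe as ``handled'': the confinement of the disjoint balls to the visible region. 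You build a maximal packing $P$ of \emph{all} far vertices (distance $\ge h$ from $v$, possibly as far out as the view boundary), obtain $|P|\ge (R_v-h)/\alpha$ from the covering bound, and then ``keep only the packing points at distance $\le R_v-h/2$ from $v$'' so that their balls fit inside $\beta_{G,R_v}(v)$. But the inequality $|P|\ge (R_v-h)/\alpha$ bounds the \emph{total} packing, not the retained subset: all but $O(1)$ of the packing points could sit at distance greater than $R_v-h/2$ (for instance, if the region between distance $h$ and $R_v-h/2$ is a single path that only fans out into many branches near the outer boundary), in which case the retained count --- the only quantity that enters your recursion --- need not be $\Omega(\rho/\alpha)$, and the recursion collapses.

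The fix is the paper's, and it must be applied \emph{before} invoking the equilibrium condition, not after: restrict the covering/packing to a shrunken ball (the paper uses $\beta_{G,k-2h}(u)$ with $h=\lfloor k/8\rfloor -1$). Covering the shrunken ball with radius $2h$ automatically covers the whole view with radius $4h$, because every vertex of $\beta_{G,k}(u)$ reaches the shrunken ball within $2h$ steps along its shortest path to $u$; hence the equilibrium condition lower-bounds the covering number of the \emph{shrunken} ball itself, and its packing points are by construction at distance $\le k-2h$ from $u$, so their radius-$h$ balls stay inside the view with no discarding needed. The same restriction also repairs a second defect in your bookkeeping: by Proposition~\ref{prop:maxncg_best_response} a deviation is evaluated on the induced subgraph $H$, where distances can exceed those in $G$, so it is not enough that ``every shortcut edge has both endpoints inside the view'' --- the covering paths themselves must stay inside the view, which the paper guarantees by observing that any path of length $\le 2h$ between vertices of the shrunken ball cannot leave $\beta_{G,k}(u)$. (A minor further point: compounding the $\Theta(\rho)$ radius losses over $\Theta(\log(k/\alpha))$ levels yields $k=O\big(\alpha\,4^{c\sqrt{\log N}}\big)$ for some constant $c\ge 1$ rather than the stated $k=O\big(\alpha\,4^{\sqrt{\log N}}\big)$; the paper's bottom-up recurrence is tuned to control this constant, though downstream uses absorb it into a $\Theta$ in the exponent.)
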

\begin{proof}
First, we need to prove that $k= O(\sqrt{N\alpha})$, and do so by showing that $N=\Omega(k^2/\alpha)$. For every $1\leq i\leq k/2$, let $L_i$ be the vertices of $\beta_{G,k}(u)$ whose distance from $u$ is equal to $i$. We show that $|L_i|=\Omega(i/\alpha)$. If $u$ bought the edges towards all the vertices in $L_i$ she would decrease her eccentricity by at least $k-\max\{k-i,i\}=k-k+i=i$ and increase her building cost by $\alpha|L_i|$. As $G$ is an equilibrium graph, we have that $\alpha|L_i|\geq i$, i.e., $|L_i|\geq i/\alpha$. Therefore, $N\geq \sum_{i=1}^{\lfloor k/2\rfloor}|L_i|\geq\sum_{i=1}^{\lfloor k/2\rfloor}i/\alpha=\Omega(k^2/\alpha)$.

Now we prove that $k=O(\sqrt[3]{N \alpha^2})$ by showing that $N=\Omega(k^3/\alpha^2)$. Let $h=\lfloor k/8\rfloor-1$, and let $\bar h=k-2h=k-2\lfloor k/8 \rfloor+2$. By the choice of $h$ and $\bar h$, every path of length less than or equal to $2h$ between two vertices of  $\beta_{G,\bar h}(u)$ is entirely contained in  $\beta_{G,k}(u)$. We select a subset of vertices in $\beta_{G,\bar h}(u)$ as {\em center points} by the following greedy algorithm. First, we unmark all vertices in $\beta_{G,\bar h}(u)$. Then we repeatedly select an unmarked vertex $x$ in $\beta_{G,\bar h}(u)$ as center point, and mark all unmarked vertices in $\beta_{G,\bar h}(u)$ whose distances in the graph induced by $\beta_{G,k}(u)$ are at most $2h$ from $x$.

Suppose that we select $l$ vertices $x_1,x_2,\dots,x_l$ as center points. By construction, every vertex in $\beta_{G,k}(u)$ has distance of at most $4h$ to some center point. If player $u$ bought the $l$ edges towards the $l$ vertices $x_1,x_2,\dots,x_l$, she would decrease her eccentricity w.r.t.\ all the vertices in $\beta_{G,k}(u)$ by at least $k-(4h+1)\geq k-4\lfloor k/8\rfloor +3>k/2$ and increase her building cost by $\alpha l$. Because $G$ is an equilibrium graph, we have $\alpha l\geq k/2$, thus $l\geq k/(2\alpha)$. By the choice of $h$ and $\bar h$, the distance in $G$ between any pair of center points is greater than or equal to $2h+1$, furthermore $\beta_{G,h}(x_i)\subseteq \beta_{G,k}(u)$ for every $i=1,\dots,l$. As a consequence, the balls of radius $h$ centered at the center points are pairwise disjoint, and thus
\begin{equation*}
N=|\beta_{G,k}(u)|\geq \sum_{i=1}^{l}|\beta_{G,h}(x_i)|= l \cdot \Omega(k^2/\alpha)=\Omega(k^3/\alpha^2). 
\end{equation*}
Finally we prove that $k=O\Big(\alpha \,4^{\sqrt{\log N}}\Big)$ by showing that $N=\Omega\Big(2^{\log_4^2(k/\alpha)}\Big)$ for every value of $\alpha \le k-1$.
First, we prove the following useful claim:
\begin{claim}\label{lemma:the_growth_of_balls}
Let $i< k/5$, and let $\bar N=\min_{v \in V}|\beta_{G,i}(v)|$. Either there exists a vertex having eccentricity strictly less than $5i$ or $|\beta_{G,4i+1}(v)|\geq (\bar N i)/\alpha$ for every vertex $v$.
\end{claim}
\begin{proof}
If there is a vertex having eccentricity strictly less than $5i$, then the claim is obvious. Otherwise, for every vertex $v$, we have that the eccentricity of $v$ is greater than or equal to $5i$. Let $S$ be the set of vertices whose distance from $u$ is $3i+1$. By the choice of $i$, every path of length less than or equal to $2i$ between a pair of vertices in $\beta_{G,3i+1}(u)$ is entirely contained in $\beta_{G,k}(u)$. We select a subset of $S$, called {\em center points}, by the following greedy algorithm. First we unmark all vertices in $S$. Then we select an unmarked vertex $x \in S$ as a center point, mark all unmarked vertices in $S$ whose distance from $x$ is less than or equal to $2i$, and assign these vertices to $x$.

Suppose that we select $l$ vertices $x_1,x_2,\dots,x_l$ as center points. We prove that $l\geq i/\alpha$. If player $u$ bought the $l$ edges towards the vertices $x_1,x_2,\dots,x_l$, she would decrease her eccentricity w.r.t.\ all the vertices in $\beta_{G,k}(u)$, by at least $k-\max\{k-i,3i+1\}=k-(k-i)=i$. Because $u$ has not bought these edges, we must have $l\alpha\geq i$.

According to the greedy algorithm, the distance between any pair of center points is greater than or equal to $2i+1$; hence the balls of radius $i$ centered at the vertices $x_j$ are pairwise disjoint. Therefore,
\begin{equation*}
\left |\bigcup_{j=1}^{l}\beta_{G,i}(x_j)\right|=\sum_{j=1}^{l}|\beta_{G,i}(x_j)|\geq l \bar N\geq (\bar N i)\alpha.
\end{equation*}
For every $j=1,\dots, l$, we have $d(u,x_j)=3k+1$, so $\beta_{G,i}(x_j)\subseteq \beta_{G,4i+1}(u)$. Therefore, $|\beta_{G,4i+1}(u)|\geq (\bar N i)/\alpha$.
\end{proof}

Let $\bar N_i=\min_{v \in V}|\beta_{G,i}(v)|$. Because $G$ is a connected equilibrium and $\lceil \alpha \rceil\leq k$, $\bar N_{\lceil\alpha\rceil} \geq \lceil\alpha\rceil$. By Lemma \ref{lemma:the_growth_of_balls}, for every $i<k/5$, either there exists a vertex having eccentricity strictly less than $5i$ or $\bar N_{4i+1}\geq \bar N_{i}$. Define the numbers $a_0,a_1,\dots$ using the recurrence relation $a_i=4a_{i-1}+1$ with $a_0=\lfloor\alpha\rfloor$. By induction, $a_i\geq \alpha 4^i$. If the radius of $G$ is strictly less than $k$, then let $j$ be the least number such that the radius of $G$ is less than or equal to $5a_j$; otherwise, let $j$ be the least number such that $a_j\geq k/5$. As the radius of $G$ is greater than or equal to $k/2$, we have that $a_j=\Theta(k)$. By definition of $j$, $\bar N_{a_{i+1}}\geq (a_i\bar N_{a_i})/\alpha\geq 4^i\bar N_{a_i}$, for every $i<j$. From these inequalities we derive that $\bar N_{a_j}\geq 4^{\sum_{i=0}^{j-1}i}$. But $\bar N_{a_j}\leq N$, so $\sum_{i=1}^{j-1}i=j(j-1)/2\leq \log_4 N$. This inequality implies that $j\leq 1+\sqrt{2\log_4 N}=1+\sqrt{\log N}$. Solving the recurrence relation, $a_j=O(\alpha \, 4^j)=O(\alpha 4^{\sqrt{\log N}})$. As $a_j=\Theta(k)$, $k=O(\alpha 4^{\sqrt{\log N}})$. \qed
\end{proof}

From this, it immediately follows:

\begin{corollary}
\label{cor:complete_knowledge}
If $\alpha \le k-1$ and $k > c \cdot \min \{ n, \sqrt[3]{n \alpha^2}, \linebreak \alpha \, 4^{\sqrt{\log n}}\}$, for a suitable constant $c$, then in every equilibrium graph each player sees the whole graph, thus the set of LKEs coincides with the set of NEs.
\end{corollary}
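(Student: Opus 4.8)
The plan is to derive the corollary from Lemma~\ref{lemma:lb_balls} by a contrapositive argument, after first disposing of the case in which that lemma cannot be applied because its hypothesis on the radius fails.

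First I would fix an arbitrary equilibrium graph $G$ and split on its radius. If the radius of $G$ is strictly less than $k/2$, then there is a center vertex $w$ with eccentricity less than $k/2$, and for every player $u$ and every vertex $v$ we have $d_G(u,v) \le d_G(u,w) + d_G(w,v) < k$; hence $\beta_{G,k}(u) = V$, i.e.\ every player already sees the whole graph. This case needs no appeal to Lemma~\ref{lemma:lb_balls}, and isolating it is precisely what makes the remaining case amenable to that lemma.

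In the complementary case the radius of $G$ is at least $k/2$, so Lemma~\ref{lemma:lb_balls} applies (we are assuming $\alpha \le k-1$). Here I would argue by contradiction: suppose some player $u$ does not see all of $G$, i.e.\ $N := |\beta_{G,k}(u)| < n$. Since $G$ is connected on $n$ vertices, there is a vertex at distance greater than $k$ from $u$ while every distance is at most $n-1$, which already forces $k < n$; this is the term responsible for the $n$ inside the minimum. At the same time Lemma~\ref{lemma:lb_balls} supplies a single absolute constant $c_0$ with $k \le c_0 \sqrt[3]{N\alpha^2}$ and $k \le c_0\,\alpha\,4^{\sqrt{\log N}}$, and by monotonicity of both right-hand sides in $N$ together with $N < n$ these upgrade to $k \le c_0 \sqrt[3]{n\alpha^2}$ and $k \le c_0\,\alpha\,4^{\sqrt{\log n}}$. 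Enlarging $c_0$ to be at least $1$ lets me fold all three inequalities into $k \le c_0 \min\{\, n,\ \sqrt[3]{n\alpha^2},\ \alpha\,4^{\sqrt{\log n}} \,\}$. Choosing the constant $c$ of the statement to be $c_0$ then contradicts the hypothesis $k > c \cdot \min\{\dots\}$, so in fact $N = n$ for every player.

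Combining the two cases, under the stated hypothesis every player sees the whole graph in every equilibrium $G$. The final step is to turn this into the asserted identity of equilibrium sets: once the $k$-neighborhood of each player is all of $V$, her strategy space is the full power set $2^{V\setminus\{u\}}$ and, by Proposition~\ref{prop:maxncg_best_response}, the worst-case realizable network in \eqref{eq:delta} coincides with her view, which is now $G$ itself; hence $\Delta(\bar\sigma_u,\sigma_u)$ equals the true cost difference $C_u(\bar\sigma_{-u},\sigma_u)-C_u(\bar\sigma)$, so the LKE condition becomes literally the NE condition. Together with the fact, already noted in the introduction, that every NE is an LKE, this yields the claimed coincidence of the two sets in this regime. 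I expect the only delicate points to be the separate treatment of the small-radius case (needed so that Lemma~\ref{lemma:lb_balls} is legitimately invoked) and the bookkeeping that merges the three bounds, under a common constant, into the single minimum appearing in the statement.
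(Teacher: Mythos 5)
Your proof is correct and takes essentially the same route as the paper: the paper offers no written argument at all (it introduces the corollary with ``From this, it immediately follows''), meaning it is to be read off directly from Lemma~\ref{lemma:lb_balls}, and your write-up simply makes explicit the steps the paper treats as immediate --- the small-radius case (where every player trivially sees the whole graph), the contrapositive application of the lemma with monotonicity in $N$ and the $k<n$ observation supplying the $n$ term, and the identification of the LKE and NE conditions via Proposition~\ref{prop:maxncg_best_response} once every view equals $G$. All of these steps are sound, so the proposal is a correct, fully detailed version of the paper's intended argument.
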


Now, we provide an upper bound to the diameter of an equilibrium graph.

\begin{lemma}
	\label{lemma:diameter-balls}
	Let $G$ be an equilibrium graph of diameter $d$. If $|\beta_{G,k}(v)| \ge \gamma$ for every vertex $v$, then $ d \le \frac{3kn}{\gamma}$.
\end{lemma}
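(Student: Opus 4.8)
The plan is to prove this by a standard ball-packing (volume) argument along a diametral shortest path. Note that the equilibrium hypothesis is used here only to guarantee that $G$ is connected, so that the diameter $d$ is finite and is realized by some shortest path $P=\langle v_0,v_1,\dots,v_d\rangle$ with $d_G(v_0,v_d)=d$; the remainder of the argument relies solely on the hypothesis $|\beta_{G,k}(v)|\ge\gamma$ for every vertex $v$, together with elementary properties of shortest paths.

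First I would place a set of ``centers'' along $P$ spaced $3k$ apart, namely $c_j:=v_{3kj}$ for $j=0,1,\dots,\lfloor d/(3k)\rfloor$, giving $m:=\lfloor d/(3k)\rfloor+1$ centers. Since $P$ is a shortest path, each of its subpaths is shortest as well, so for any two of its vertices $v_a,v_b$ we have $d_G(v_a,v_b)=|a-b|$; in particular $d_G(c_i,c_j)=3k\,|i-j|\ge 3k$ whenever $i\ne j$. I would then argue that the balls $\beta_{G,k}(c_0),\dots,\beta_{G,k}(c_{m-1})$ are pairwise disjoint: if some $w$ lay in both $\beta_{G,k}(c_i)$ and $\beta_{G,k}(c_j)$ with $i\ne j$, the triangle inequality would give $3k\le d_G(c_i,c_j)\le d_G(c_i,w)+d_G(w,c_j)\le 2k$, a contradiction since $k\ge 1$.

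Finally, the disjoint balls each contain at least $\gamma$ vertices and all live inside the $n$-vertex graph, so $m\,\gamma\le n$, i.e.\ $m\le n/\gamma$. Combining this with $d/(3k)<\lfloor d/(3k)\rfloor+1=m$ yields $d<3kn/\gamma$, hence $d\le 3kn/\gamma$, as required. I do not expect a genuine obstacle here, as the argument is routine; the only point demanding care is the bookkeeping, namely choosing the spacing large enough (here $3k$, comfortably exceeding the $2k$ strictly needed for disjointness) so that the packing constant matches the $3k$ appearing in the claimed bound, and handling the off-by-one in $m=\lfloor d/(3k)\rfloor+1$ so that the final inequality comes out as a clean $\le 3kn/\gamma$.
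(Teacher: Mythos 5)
Your proof is correct and takes essentially the same approach as the paper: both arguments pack pairwise-disjoint radius-$k$ balls centered at vertices of a diametral path and compare their total volume against $n$. The only (immaterial) difference is bookkeeping: you space the centers exactly $3k$ apart and count them directly, whereas the paper selects a set $C$ of centers whose balls are disjoint and also cover the path, bounding the diameter by $|C|(2k+1)\le 3k\,|C|$.
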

\begin{proof}
	Let $\pi=(x_0,\dots,x_d)$ be a diametral path of $G$.
	We select a set of vertices $C$ such that the $k$-neighborhoods of the vertices of $C$ are pairwise disjoint and cover all the nodes of $\pi$. 	
	We must have $|C| \gamma \le n$, which implies that $|C| \le \frac{n}{\gamma}$, and thus the diameter of $G$ is at most $|C|(2k+1) = \frac{(2k+1)n}{\gamma} \le \frac{3kn}{\gamma}$.
\end{proof}

\begin{lemma}\label{lemma:UB to diameter}
If $\alpha \le k-1$ the diameter of an equilibrium graph $G$ is $O\Big( \min \{ \frac{n\alpha^2}{k^2},  \frac{k n}{2^{\log_4^2 \frac{k}{\alpha}}} \} \Big)$.
\end{lemma}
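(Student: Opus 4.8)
The plan is to obtain the bound by composing the two preceding lemmas. Lemma~\ref{lemma:lb_balls} lower-bounds the size of every player's $k$-ball in an equilibrium, while Lemma~\ref{lemma:diameter-balls} turns a uniform lower bound $\gamma$ on $k$-ball sizes into the diameter estimate $d \le 3kn/\gamma$. Thus the whole argument reduces to inverting the two estimates packed inside the $\min$ of Lemma~\ref{lemma:lb_balls} and feeding the resulting values of $\gamma$ into Lemma~\ref{lemma:diameter-balls}, one term of the final minimum coming from each.

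Concretely, I would first rearrange the two inequalities of Lemma~\ref{lemma:lb_balls}. The estimate $k = O(\sqrt[3]{N\alpha^2})$ gives $N = \Omega(k^3/\alpha^2)$, and $k = O(\alpha\,4^{\sqrt{\log N}})$ gives, after taking $\log_4$ twice, $N = \Omega\big(2^{\log_4^2(k/\alpha)}\big)$. Both hold simultaneously, and since the centre $u$ in Lemma~\ref{lemma:lb_balls} is arbitrary, both lower bounds are uniform over all vertices, which is exactly what Lemma~\ref{lemma:diameter-balls} requires. Plugging $\gamma = \Omega(k^3/\alpha^2)$ into Lemma~\ref{lemma:diameter-balls} yields $d = O(kn\cdot\alpha^2/k^3) = O(n\alpha^2/k^2)$, while $\gamma = \Omega\big(2^{\log_4^2(k/\alpha)}\big)$ yields $d = O\big(kn/2^{\log_4^2(k/\alpha)}\big)$. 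As both bounds hold, the diameter is at most their minimum, which is the claimed expression.

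The one genuine subtlety is that Lemma~\ref{lemma:lb_balls} assumes the radius of $G$ is at least $k/2$, so the clean composition above is not literally licensed for every graph. I would settle this by a dichotomy on the diameter $d$. If $d \ge k$, then since radius $\ge d/2$ always holds, the radius is automatically $\ge k/2$, and the two-step argument goes through verbatim. If instead $d < k$, the bound is immediate in the regime where this lemma carries information, namely the complement of the hypothesis of Corollary~\ref{cor:complete_knowledge}, where $k = O(\sqrt[3]{n\alpha^2})$ and $k = O(\alpha\,4^{\sqrt{\log n}})$: there both $n\alpha^2/k^2$ and $kn/2^{\log_4^2(k/\alpha)}$ are $\Omega(k)$, so $d < k = O(\min\{\cdot\})$; and in the remaining range $G$ already has complete knowledge by Corollary~\ref{cor:complete_knowledge}, reducing the statement to the full-knowledge game. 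The main obstacle is precisely this boundary bookkeeping rather than the core ball-counting: one must check both that the radius hypothesis of Lemma~\ref{lemma:lb_balls} is met (the $d \ge k$ branch) and that the small-diameter regime cannot escape the claimed bound (verifying that the two terms of the minimum are $\Omega(k)$ in the partial-knowledge range).
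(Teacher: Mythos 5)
Your proposal matches the paper's proof: the paper's argument is exactly the composition you describe --- take a generic vertex $v$, invert Lemma~\ref{lemma:lb_balls} to get $N=\Omega\big(\max\{\tfrac{k^3}{\alpha^2},\, 2^{\log_4^2 \frac{k}{\alpha}}\}\big)$, and feed this $\gamma$ into Lemma~\ref{lemma:diameter-balls} --- and nothing more. The radius-$\ge k/2$ dichotomy you add is extra bookkeeping that the paper's three-line proof silently omits (it never checks that hypothesis of Lemma~\ref{lemma:lb_balls}), so your version is, if anything, more careful than the published one.
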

\begin{proof}
Consider a generic vertex $v$, and let $N=|\beta_{G,k}(v)|$. From Lemma \ref{lemma:lb_balls}, we have that $k \! = \! O( \min\{\sqrt[3]{N\alpha^2}, \alpha 4^{\sqrt{\log N}} \})$, which implies that $N=\Omega(\max \{ \frac{k^3}{\alpha^2}, 2^{\log_4^2 \frac{k}{\alpha}} \} )$. Now, using Lemma \ref{lemma:diameter-balls}, we have that the diameter of $G$ must be $O\Big( \min \{ \frac{n\alpha^2}{k^2},  \frac{k n}{2^{\log_4^2 \frac{k}{\alpha}}} \} \Big)$. 
\end{proof}

We now derive an upper bound to the density of an equilibrium graph. We argue on the girth of the graph in a way similar to \cite{DHM12}.

\begin{lemma}\label{lemma:density}
The number of the edges of an equilibrium graph $G$ is $O(n^{1+\frac{2}{\min\{\alpha,2k\}}})$.
\end{lemma}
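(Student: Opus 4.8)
The plan is to bound the \emph{girth} of an equilibrium graph from below and then invoke the standard Moore-type density bound, exactly as is done for the full-knowledge game in \cite{DHM12}, taking care to localize the edge-removal argument so that it respects the $k$-neighborhood constraint. Recall the classical fact that a graph on $n$ vertices with girth $g$ has $O\big(n^{1+1/\lfloor (g-1)/2\rfloor}\big)$ edges; it therefore suffices to show that the girth of $G$ exceeds $\min\{\alpha,2k\}$ by a constant amount, since then the exponent becomes $1+\frac{2}{\min\{\alpha,2k\}}$.

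First I would fix a shortest cycle $C$ of $G$, say of length $g$, and pick one of its edges $(u,v)$, choosing $u$ to be an endpoint that owns it (at least one endpoint does). The crucial observation is that if $g\le 2k$ then the whole of $C$ lies inside the view $H$ of $u$: along a shortest cycle every vertex is within distance $\lfloor g/2\rfloor\le k$ of $u$ in $G$, so all vertices of $C$ belong to $\beta_{G,k}(u)$, and hence $C$ together with its edges survives in the induced view $H$.

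Next I would analyze the deviation in which $u$ drops $v$ from her strategy. By Proposition~\ref{prop:maxncg_best_response} the worst-case realizable network for evaluating this deviation is $H$ itself, so the equilibrium condition reduces to $C_u(H')\ge C_u(H)$, where $H'=H-(u,v)$. Since the path $C-(u,v)$ of length $g-1$ still connects $u$ to $v$ in $H'$, any shortest path from $u$ that used the edge $(u,v)$ can be rerouted around $C$, so every distance from $u$ grows by at most $g-2$; hence the eccentricity of $u$ in the view increases by at most $g-2$ while her building cost decreases by $\alpha$. Non-improvement forces $\alpha\le g-2$, i.e.\ $g\ge \alpha+2$. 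Combining with the complementary case, the girth satisfies $g\ge\min\{2k+1,\alpha+2\}$, which strictly exceeds $\min\{\alpha,2k\}$, and feeding this into the Moore-type bound yields $O\big(n^{1+2/\min\{\alpha,2k\}}\big)$ edges.

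The main obstacle is precisely the interaction with locality: in the full-knowledge setting one may freely reason about removing an edge on a short cycle, but here the deviation must be non-improving in \emph{every} network compatible with $u$'s view, and this is exactly what Proposition~\ref{prop:maxncg_best_response} supplies once we have guaranteed that the cycle is contained in the view (the case $g\le 2k$). Cycles longer than $2k$ escape the view and cannot be attacked this way, but they are harmless because their mere length already beats the target girth coming from the $2k$ side; the remaining care is only in checking that the floor in the Moore exponent still delivers the clean $2/\min\{\alpha,2k\}$ in both the $\alpha\le 2k$ regime (where $g\ge\alpha+2$ gives exponent $\le 2/\alpha$) and the $\alpha>2k$ regime (where $g\ge 2k+1$ gives exponent $\le 1/k=2/(2k)$).
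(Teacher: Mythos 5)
Your proof is correct and takes essentially the same route as the paper's: both bound the girth from below by roughly $\min\{\alpha+2,\,2k+2\}$ via the edge-removal deviation, made legitimate by the observation that any cycle of length at most $2k+1$ lies entirely inside the owning player's view (so the deviation can be evaluated on $H$ via Proposition~\ref{prop:maxncg_best_response}), and then invoke the standard girth--density bound. If anything, your write-up is slightly more explicit than the paper's about the appeal to Proposition~\ref{prop:maxncg_best_response} and the floor in the Moore exponent, but the underlying argument is identical.
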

\begin{proof}
Let $g$ be the girth of $G$. We first show that $g \ge 2+ \min\{\alpha, 2k\}$, and then the claim follows from the fact that a graph with girth $g'$ must have at most $O(n^{1+\frac{2}{g'-2}})$ edges \cite{Boll78}. Assume by contradiction that there is a cycle $C$ of length strictly less than $2+\min\{\alpha, 2k\}$. Then consider a player $u$ that owns an edge of the cycle. Since $u$ can see the cycle, she can remove the edge. The deletion would increase the distance to any other node by at most $|C|-2$ while $u$ would save $\alpha > |C|-2$, and hence $G$ cannot be an equilibrium.
\end{proof}

We can now prove the following:
\begin{theorem}
\label{thm:UB_PoA_max}
The PoA of \MaxNCG is
$O\big(n^{\frac{2}{\min\{\alpha,2k\}}} + \frac{n}{1+\alpha} \big)$ if $\alpha \ge k-1$ and
$O\big( n^{\frac{2}{\alpha}} + \min \{ \frac{n\alpha}{k^2}, \frac{nk}{\alpha \cdot 2^{\frac{1}{4}\log^2 \frac{k}{\alpha}}} \} \big)$
if $\alpha \le k-1$.
\end{theorem}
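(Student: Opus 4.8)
The plan is to bound the social cost of an arbitrary equilibrium graph $G$ from above and divide by the cost of the social optimum. Since $\alpha>1$, the spanning star is the optimum and its cost is $\Theta(\alpha n + n)=\Theta(\alpha n)$, so this is the denominator in every estimate. The social cost of $G$ splits into a building part and a routing part: it equals $\alpha\,|E(G)| + \sum_{u}\mathrm{ecc}_G(u)$, where each edge is charged once to its owner and $\mathrm{ecc}_G(u)=\max_v d_G(u,v)$. I would bound these two contributions separately and then recombine.

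For the building part I would invoke Lemma \ref{lemma:density}, which gives $|E(G)|=O(n^{1+\frac{2}{\min\{\alpha,2k\}}})$; dividing $\alpha\,|E(G)|$ by the optimum $\Theta(\alpha n)$ yields a term $O(n^{\frac{2}{\min\{\alpha,2k\}}})$. For the routing part I would use the crude but universally valid estimate $\sum_u \mathrm{ecc}_G(u)\le n\cdot\mathrm{diam}(G)$, so the whole analysis reduces to bounding $\mathrm{diam}(G)$ in each of the two regimes.

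In the regime $\alpha\ge k-1$ there is no nontrivial diameter bound available from the earlier lemmas (they require $\alpha\le k-1$), so I would simply use the trivial bound $\mathrm{diam}(G)=O(n)$ valid for any connected $n$-vertex graph. This makes the routing contribution $O(n^2)$, which after division by $\Theta(\alpha n)$ becomes $O(\frac{n}{1+\alpha})$, and together with the building term gives the claimed $O(n^{\frac{2}{\min\{\alpha,2k\}}}+\frac{n}{1+\alpha})$. In the regime $\alpha\le k-1$ I would instead plug in Lemma \ref{lemma:UB to diameter}, which gives $\mathrm{diam}(G)=O(\min\{\frac{n\alpha^2}{k^2},\frac{kn}{2^{\log_4^2(k/\alpha)}}\})$; multiplying by $n$ and dividing by $\Theta(\alpha n)$ turns this into $O(\min\{\frac{n\alpha}{k^2},\frac{kn}{\alpha\cdot 2^{\log_4^2(k/\alpha)}}\})$. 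Since $\alpha\le k-1<2k$ forces $\min\{\alpha,2k\}=\alpha$, the building term here is exactly $O(n^{\frac{2}{\alpha}})$, and adding the two contributions reproduces the second claimed bound once one notes $\log_4^2(k/\alpha)=\tfrac14\log^2(k/\alpha)$.

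I do not expect a genuine obstacle: the statement is essentially an arithmetic assembly of Lemma \ref{lemma:density} and Lemma \ref{lemma:UB to diameter} on top of the decomposition of the social cost and the known optimum. The only points requiring a little care are (i) recalling that the optimum for the local-knowledge game is still the centralized star cost $\Theta(\alpha n)$, since the optimum is unaffected by the strategy-space restriction; (ii) handling the $\min\{\alpha,2k\}$ in the exponent correctly across the two cases; and (iii) observing that in the first regime one must fall back on the trivial diameter bound, which is precisely what produces the $\frac{n}{1+\alpha}$ summand.
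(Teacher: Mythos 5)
Your proposal is correct and follows essentially the same route as the paper: the paper's (very terse) proof likewise combines Lemma~\ref{lemma:density} with the trivial diameter bound $n-1$ when $\alpha \ge k-1$, and with Lemma~\ref{lemma:UB to diameter} when $\alpha \le k-1$, measured against the spanning-star optimum of cost $\Theta(\alpha n)$. Your write-up just makes explicit the cost decomposition and the arithmetic that the paper leaves implicit.
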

\begin{proof}
If $\alpha \ge k-1$ then the claim follows from Lemma~\ref{lemma:density} and from the fact that the diameter of an equilibrium graph is at most $n-1$.
Otherwise, $\alpha \le k-1$ and the claim immediately follows from Lemma~\ref{lemma:density} and Lemma \ref{lemma:UB to diameter}.
\end{proof}

\subsection{Putting all together}
\label{sec:conclusions_max}

Here we summarize our lower and upper bounds to the PoA for \MaxNCG by showing how they combine depending on the values of $\alpha$ and $k$.

First, recall that whenever the view of the players is sufficiently large, then in every LKE, players actually have a \emph{full knowledge} of the network, and so LKEs coincides with NEs (hence the PoA is the same as in the full knowledge version of the game) as shown in Corollary~\ref{cor:complete_knowledge}. The corresponding region is shown in gray in Figure~\ref{fig:results-max}.
Concerning our three lower bounds, the first one of $\Omega\big(\frac{n}{1+\alpha}\big)$ holds for $\alpha \ge k-1$, i.e. in the regions numbered \ding{173}, \ding{174}, and \ding{177} in Figure \ref{fig:results-max}.
For $1 <  \alpha \le k$ and $k \le 2^{\sqrt{\log n}-3}$ we provided a strong lower bound of $\Omega( \frac{n}{\alpha 2^{\Theta(\log^2\frac{k}{\alpha})}} )$ (regions \ding{172}, \ding{175}, \ding{176} in Figure \ref{fig:results-max}). Notice that when $k=\Theta(\alpha)$ this lower bound boils down to $\Omega(\frac{n}{\alpha})$, which is tight. 
Unfortunately, if $\alpha > k$, the previous lower bound is no longer valid, instead we provided a third lower bound of $\Omega( n^\frac{1}{\Theta(k)} )$ holding for $k=o(\log n)$ (regions \ding{172}, \ding{173}, \ding{174} in Figure~\ref{fig:results-max}).

\begin{figure}[!tb]
\centering
\includegraphics[scale=1]{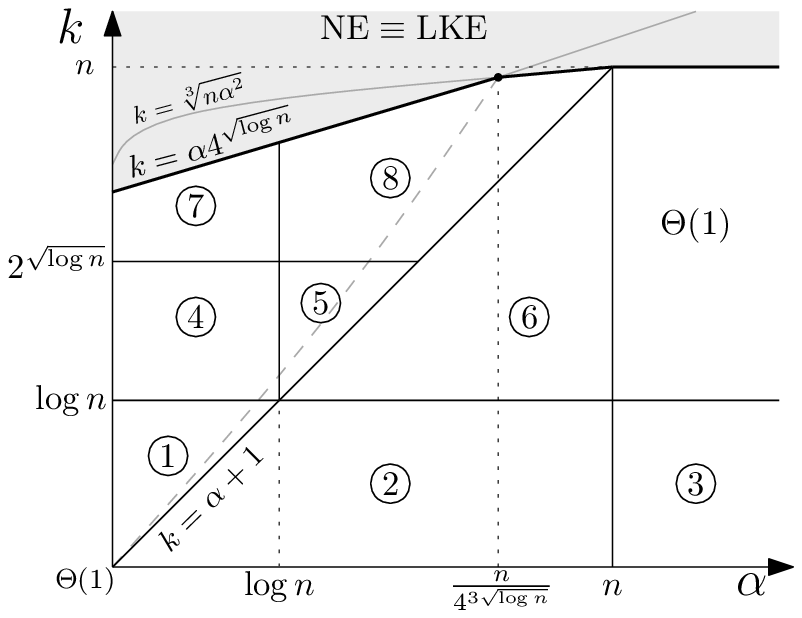}
{\tiny$\mbox{}$\\}
\scalebox{1}{
\begin{tabular}{|c|c|c|}
\hline
\# & Lower Bound & Upper Bound \\ \hline
{\large\ding{172}}  & $\!\!\Omega\bigg( \max\{ \frac{n}{\alpha 2^{\Theta(\log^2 \frac{k}{\alpha})}}, n^\frac{1}{\Theta( k)} \}\bigg)$ & $O\bigg( n^{\frac{2}{\alpha}} + \min \{\frac{n\alpha}{k^2},  \frac{nk}{\alpha 2^{\Theta(\log^2 \frac{k}{\alpha})}} \} \bigg) \!\!$ \\ \hline
{\large\ding{173}} & \multicolumn{2}{c|}{ $\Theta\bigg(\max \{\frac{n}{1+\alpha},n^{\frac{1}{\Theta(k)}} \} \bigg)$ } \\ \hline
{\large\ding{174}}  & \multicolumn{2}{c|}{  $\Theta\bigg(n^{\frac{1}{\Theta(k)}} \bigg)$ } \\ \hline
{\large\ding{175}}  & $\Omega\bigg(\frac{n}{\alpha 2^{\Theta(\log^2 \frac{k}{\alpha})}}\bigg)$ & $O\bigg( n^{\frac{2}{\alpha}} + \min \{\frac{n\alpha}{k^2},  \frac{nk}{\alpha 2^{\Theta(\log^2 \frac{k}{\alpha})}} \} \bigg) \!\!$ \\ \hline
{\large\ding{176}} & $\Omega\bigg(\frac{n}{\alpha 2^{\Theta(\log^2 \frac{k}{\alpha})}}\bigg)$ & $O\bigg( \min \{\frac{n\alpha}{k^2},  \frac{nk}{\alpha 2^{\Theta(\log^2 \frac{k}{\alpha})}} \} \bigg)$ \\ \hline
{\large\ding{177}} & \multicolumn{2}{c|}{ $\Theta\bigg( \frac{n}{1+\alpha} \bigg)$ } \\ \hline
{\large\ding{178}} & \multicolumn{2}{c|}{ $O\bigg( n^{\frac{2}{\alpha}} + \frac{nk}{\alpha 2^{\Theta(\log^2 \frac{k}{\alpha})}} \bigg)$ } \\ \hline
{\large\ding{179}} & \multicolumn{2}{c|}{ $O\bigg( \min\{ \frac{n\alpha}{k^2}, \frac{nk}{\alpha 2^{\Theta(\log^2 \frac{k}{\alpha})}} \} \bigg)$ } \\ \hline
\end{tabular}}

\caption{Lower and upper bounds to the PoA for \MaxNCG. The partition in regions comes up by the combination of the various bounds that we give in Section \ref{sect:PoA}. Notice that the gray region is the set of pairs $(\alpha, k)$ such that, in every LKE, players actually have a full knowledge of the network, and so LKEs coincides with NEs.}
\label{fig:results-max}
\end{figure}

Turning to the upper bounds to the PoA, in Theorem~\ref{thm:UB_PoA_max}, we proved them by considering both the density and the diameter of an equilibrium graph. We showed that the number of edges can be at most $O(n^{1+\frac{2}{\min\{\alpha,2k\}}})$. Regarding the diameter, since for $\alpha \ge k-1$ it can be shown to be $\Omega(n)$, we considered the case $\alpha \le k-1$ where we gave an upper bound of  $O(\min\{ \frac{n \alpha^2}{k^2}, \frac{nk}{2^{\Theta(\log^2 \frac{k}{\alpha})}}) \}$.
Notice that this upper bound is a minimum of two terms. Intuitively, the first one is better when $k$ is not too big w.r.t.\ $\alpha$, e.g.\ when $k=O(\alpha \polylog(\alpha))$. The corresponding region lies between the dashed gray curve and the line of equation $\alpha=k-1$ shown in Figure~\ref{fig:results-max}.

The bounds to the PoA that arise from the various combinations of these results are summarized in Figure~\ref{fig:results-max}. Notice that the bounds for the regions under the line $k=\alpha+1$ are essentially tight.

\section{Results for \protect\SumNCG}
\label{sec:sum}

In this section we provide our results for \SumNCG.  Recall that in the full-knowledge version of the game the spanning star is the social optimum and has a cost of $\Theta(\alpha n + n^2)$. We start with a quite strong lower bound to the PoA.

Consider a graph similar to the one shown in Figure \ref{fig:torus}, whose construction has been described in Section \ref{sec:lower_bounds_max_m3}. In particular, we build such a graph using the following parameters: $d=2$, $\ell=2$, $\delta_1= \left\lceil \frac{k}{2} \right\rceil +1$ and $\delta_2 \ge \delta_1$ which will be specified later.
	We know that $N=2 \delta_1 \delta_2$ and that $n=6 \delta_1 \delta_2$.
	We want $\delta_2 \ge \delta_1 \iff \frac{n}{6\delta_1} \ge \delta_1 \iff \delta_1 \le \sqrt{\frac{n}{6}} \Longleftarrow k \le \sqrt{\frac{2n}{3}} - 4$.
	We have the following:
	
\begin{lemma}\label{lemma:LB_sum_m3}
For $\alpha \ge 4k^3$, the previous graph is an equilibrium.
\end{lemma}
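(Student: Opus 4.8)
The plan is to use Proposition~\ref{prop:sumncg_best_response} to reduce the whole analysis to the view $H$ of each player, and then to partition the deviations according to whether they keep or drop an owned edge. Since here $d=2$ and $\ell=2$, every path joining two adjacent intersection vertices carries exactly one non-intersection vertex, which by the ownership rule buys \emph{both} of its incident edges, while the intersection vertices buy nothing. Hence an intersection vertex owns $0$ edges (and can only add edges), whereas a non-intersection vertex $v$ owns exactly two edges, towards the two intersection vertices $u,u'$ bounding its path. For every player it thus suffices to verify: (a) that no deviation keeping all owned edges is improving, and (b) that no deviation dropping at least one owned edge is improving. Because the view is (isomorphic to a subgraph of) the open grid of Lemma~\ref{lemma:open_torus_distance}, in which wrap-around is unavailable and distances obey $d(x,y)\ge\max_i|x_i-y_i|$, both parts can be analysed purely on $H$, exactly as done for \MaxNCG in Lemmas~\ref{lemma:intersection-equilibrium}--\ref{lemma:non-intersection-equilibrium}.

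For part (a) I would first bound the status of the player in her view. The open-grid view contains $O(k^2)$ vertices, each at distance at most $k$, and a level-by-level count (there are $O(j)$ vertices at distance $j$) yields $\sum_{v\in V(H)} d_H(u,v)\le 4k^3$. Since buying any set of edges can lower the status by at most its entire value, the total saving of any pure addition is at most $4k^3\le\alpha$, whereas adding even a single edge raises the building cost by at least $\alpha$. Consequently no deviation that keeps all owned edges and adds at least one edge can be improving; adding nothing leaves the strategy unchanged. This simultaneously disposes of all intersection vertices and of the ``addition part'' for the non-intersection vertices.

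Part (b) is the core. Here I would show that deleting an owned edge of a non-intersection vertex $v$ pushes several pairwise-distant boundary vertices beyond distance $k$. The edge from $v$ towards $u$ is, inside the open view, the unique first step of every shortest path leaving $v$ in that diagonal direction, so removing it forces a detour around $u$ of length $\Theta(\ell)$ by Lemma~\ref{lemma:open_torus_distance}; consequently a set $L$ of boundary vertices reached ``through $u$'' all move to distance $>k$ in $H'$. As in Lemma~\ref{lemma:non-intersection_far_vertices}, the vertices of $L$ are at distance at least $k$ from $v$ and pairwise at distance at least $2k-O(\ell)$, so by Lemma~\ref{lemma:shorten_distance} one needs at least $|L|$ fresh incident edges to bring all of them back within distance $k$. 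Since $|L|$ strictly exceeds the number of removed owned edges, any deviation that restores all boundary distances must strictly increase the edge count, hence the building cost, and is non-improving by part (a); conversely, a deviation that adds too few edges leaves some vertex of $F$ at distance $>k$ and is non-improving directly by Proposition~\ref{prop:sumncg_best_response}. Combining (a) and (b) shows that no player can profitably deviate, so the graph is an equilibrium.

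The step I expect to be the main obstacle is the geometric claim in part (b): exhibiting, for each of the two owned edges, an explicit family $L$ of pairwise-distant boundary vertices whose distances strictly exceed $k$ after the deletion, and certifying via Lemma~\ref{lemma:open_torus_distance} that every alternative route in the open (non-toroidal) view is genuinely longer, together with a careful treatment of the borderline case in which \emph{both} owned edges are dropped. A secondary point requiring care is pinning down the constant in the status estimate $\sum_{v\in V(H)} d_H(u,v)\le 4k^3$, which is exactly what fixes the threshold $\alpha\ge 4k^3$.
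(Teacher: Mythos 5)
Your overall architecture mirrors the paper's: part (a) is exactly the paper's first step (the paper bounds the view by $4k^2$ vertices and the per-vertex saving by $k-1$, giving the same threshold $4k^3\le\alpha$, and this also disposes of the intersection vertices, which own no edges), and your use of Proposition~\ref{prop:sumncg_best_response} to forbid any deviation leaving a distance-$k$ vertex beyond distance $k$ in $H'$ is also how the paper proceeds. The gap is in part (b), at precisely the step you flagged as the main obstacle: Lemma~\ref{lemma:shorten_distance} cannot deliver the counting claim you need. The boundary vertices pushed beyond distance $k$ by deleting the edge $(v,u)$ are (up to one exception) the vertices of $\mathcal{F}^{k-1}(u)$, and these all lie at distance exactly $k-1$ from the common vertex $u$, so their pairwise distances are at most $2k-2$ (a path through $u$ certifies this). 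To conclude via Lemma~\ref{lemma:shorten_distance} that many edges are needed to bring them back within distance $k$, you must take $h=k+1$ (Proposition~\ref{prop:sumncg_best_response} only forbids distances \emph{exceeding} $k$), and the lemma then requires pairwise distances at least $2h-2=2k$; the hypothesis fails by an additive $2$, and this is not a technicality: the conclusion you want is actually false, since the \emph{single} incident edge $(v,u)$ restores all of these vertices to distance $k$ simultaneously. So no lower bound of $|L|$ on the number of incident edges can hold, and your dichotomy does not exclude the swap deviations (drop one owned edge, buy one edge to a different vertex).

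What is genuinely needed --- and what the paper proves instead --- is a covering claim that distinguishes neighbors from non-neighbors. Taking $L=\big(\mathcal{F}^{k-1}(u)\cup\mathcal{F}^{k-1}(u')\big)$ minus the two vertices lying on the main diagonal between $u$ and $u'$ (so $|L|=6$), the paper shows: (i) every vertex in the view of $v$ has at least one vertex of $L$ at distance at least $k$, and (ii) every vertex that is \emph{not} a neighbor of $v$ is within distance $k-1$ of at most $2$ vertices of $L$, whereas the neighbors $u$ and $u'$ each cover $3$, disjointly. Combined with part (a), which caps any candidate deviation at two edges, the only strategy keeping all six vertices of $L$ within distance $k$ --- as Proposition~\ref{prop:sumncg_best_response} demands --- is to buy exactly the edges to $u$ and $u'$, i.e., the current strategy. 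Your proposal is repaired by replacing the appeal to Lemma~\ref{lemma:shorten_distance} with this neighbor/non-neighbor covering count, proved directly from Lemma~\ref{lemma:open_torus_distance}; as written, part (b) does not go through.
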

\begin{proof}
	First notice that each vertex $v$ sees at most $4k^2$ other vertices. As $\alpha \ge 4k^3$ and $v$ can reduce the distance to each vertex by at most $(k-1)$, $v$ cannot increase the number of bought edges.
	
	Suppose that the player $v$ changes her strategy from $\sigma_v$ to $\sigma^\prime_v$, let $L$ be the set of vertices that are at distance $k$ from $v$. Consider the view of $v$ before the strategy changes, where the edges of $\sigma_v$ have been removed and replaced with the edges of $\sigma^\prime_v$. Each vertex of $L$ must be at distance at most $k$ from $v$ in this new graph.
	Otherwise, suppose the existence of a vertex $x \in L$ that is at distance at least $k+1$ in this new graph. When $v$ computes $\Delta(\sigma_v, \sigma^\prime_v)$ it will also consider the case where a certain number of (new) vertices $\eta$ are adjacent to $x$, therefore her usage will be at least $k \eta$ which, for a suitable value of $\eta$, is greater than the cost of $v$ in $\sigma$.
	
	Let $v=(k^*+1, k^*+1)$, and consider the set of vertices $L= \left( \mathcal{F}^{k-1}(u) \cup \mathcal{F}^{k-1}(u^\prime) \right)  \setminus \{ (k^*+3-k, k^*+3-k), (k^*+k-1,k^*+k-1) \}$.

	It is easy to see that all the vertices in $L$ are at distance $k$ from $v$, and that
	every vertex $x$ in the view of $v$, that is not a neighbor of $v$, has at most $2$ vertices of $L$ that are at distance at most $k-1$. Finally, every vertex $x$ in the view of $v$ has at least one vertex of $L$ at a distance at least $k$.
	
	This suffices to conclude that every vertex $v$ is currently playing a best response and, therefore, the graph is in equilibrium.
\end{proof}

Using the above lemma we can prove the following:
\begin{theorem}
	Let $\alpha \ge 4 k^3$. For any $k \le \sqrt{\frac{2n}{3}} - 4$, the PoA of \SumNCG is $\Omega(n/k)$ if $\alpha \le n$, and $\Omega(1+\frac{n^2}{k \alpha})$ otherwise.
\label{th:LB_PoA_sum_m3}
\end{theorem}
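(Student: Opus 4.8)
The plan is to exhibit the torus graph described just before the statement as an explicit LKE witnessing the claimed price of anarchy; since PoA is a \emph{sup} over equilibria, one bad equilibrium suffices for a lower bound. First I would instantiate the parameters as $d=2$, $\ell=2$, $\delta_1=\lceil k/2\rceil+1=\Theta(k)$ and $\delta_2=\Theta(n/k)$, which is feasible precisely because the hypothesis $k \le \sqrt{2n/3}-4$ guarantees $\delta_2 \ge \delta_1$; this produces a graph on $n=6\delta_1\delta_2$ vertices. By Lemma~\ref{lemma:LB_sum_m3}, the hypothesis $\alpha \ge 4k^3$ makes this graph an equilibrium, hence a legitimate LKE. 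To estimate its social cost I would split it into building and routing cost. For the building cost, every intersection vertex owns no edges and every non-intersection vertex owns at most $2$ edges, so $|E|=\Theta(n)$ and the total building cost is $\Theta(\alpha n)$.

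The heart of the argument is a lower bound of $\Omega(n^3/k)$ on the routing cost $\sum_{u,v}d(u,v)$, and this is where I expect the main difficulty. The idea is to exploit that the graph is \emph{thin}: it has length $\Theta(\delta_2)=\Theta(n/k)$ in its long dimension and only $\Theta(\delta_1)=\Theta(k)$ in the short one. Applying Lemma~\ref{lemma:torus_distance} to the last coordinate, $d(x,y)$ is at least the cyclic distance $\min\{|x_2-y_2|,\,4\delta_2-|x_2-y_2|\}$ between the second coordinates, a quantity that ranges up to $2\delta_2$. The toroidal structure spreads the second coordinate essentially uniformly over its cyclic range of length $4\delta_2$: slab-by-slab counting shows that each value of the second coordinate is shared by $\Theta(k)$ vertices. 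Consequently, for every fixed vertex $u$, the vertices $v$ lying in the ``far half'' of the long cycle — a set of $\Theta(\delta_2\cdot k)=\Theta(n)$ vertices — all satisfy $d(u,v)=\Omega(\delta_2)=\Omega(n/k)$. Summing over all $\Theta(n^2)$ ordered pairs then yields $\sum_{u,v}d(u,v)=\Omega(n^2\cdot n/k)=\Omega(n^3/k)$. Making the ``essentially uniform'' claim precise — verifying that a constant fraction of vertices really do fall into the far half — is the technical crux, and I would handle it through the explicit slab counts just sketched.

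Combining the two contributions, the social cost of the equilibrium is $\Omega(\alpha n + n^3/k)$. Recalling that in \SumNCG the spanning star is the social optimum with cost $\Theta(\alpha n + n^2)$, I would finish by taking the ratio in the two regimes. When $\alpha \le n$ the optimum is $\Theta(n^2)$, so $\mbox{PoA}\ge\Omega(n^3/k)/\Theta(n^2)=\Omega(n/k)$. When $\alpha > n$ the optimum is $\Theta(\alpha n)$, so $\mbox{PoA}\ge\Omega\big((\alpha n + n^3/k)/(\alpha n)\big)=\Omega\big(1+\tfrac{n^2}{k\alpha}\big)$, exactly as claimed.
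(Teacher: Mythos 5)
Your proof is correct and follows essentially the same route as the paper's: the same torus instantiation ($d=2$, $\ell=2$, $\delta_1=\lceil k/2\rceil+1$, $\delta_2=\Theta(n/k)$), Lemma~\ref{lemma:LB_sum_m3} for stability, $\Theta(n)$ edges for the building cost, the observation that every vertex has $\Omega(n)$ vertices at distance $\Omega(n/k)$ for the routing cost, and the final comparison against the spanning-star optimum $\Theta(\alpha n + n^2)$ in the two regimes. Your slab-counting argument merely makes explicit what the paper dismisses as ``easy to see,'' and your total cost $\Omega(\alpha n + n^3/k)$ is the correct reading of the paper's stated $\Omega(\alpha n + n^2/k)$, which appears to be a typo since only the cubic routing term yields the claimed $\Omega(n/k)$ ratio.
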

\begin{proof}
	By the above lemma, the graph is in equilibrium and has diameter $\Omega(\delta_2)=\Omega(\frac{n}{k})$. Moreover, it is easy to see that each vertex has $\Omega(n)$ vertices at distance $\Omega(\frac{n}{k})$. Since the graph has $\Theta(n)$ edges, we have that the cost of the graph is $\Omega(\alpha n +n^2/k)$ while the social optimum is a star with cost $O(\alpha n+n^2)$. The claim follows.
\end{proof}

The following theorem provides a lower bound for a different range of values of the parameters $\alpha$ and $k$.

\begin{theorem}
	\label{th:LBbis_PoA_sum_m3}
	If $\alpha \ge kn$ and $k \ge 2$, the PoA of \SumNCG is $\Omega(n^{\frac{1}{2k-2}})$.
\end{theorem}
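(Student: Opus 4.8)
The plan is to reuse the dense, high-girth graph from the proof of Lemma~\ref{lemma:lb_poa_max_dense} and to show that, for the much larger edge cost $\alpha \ge kn$, the very same graph is an equilibrium for \SumNCG. Concretely, I would take the Lazebnik--Ustimenko--Woldar graph \cite{LUW95} of girth $g=2k+2$: a connected $q$-regular graph on $n$ vertices with $\Omega(n^{1+\frac{1}{2k-2}})$ edges (here $g-4=2k-2$), with $q\ge 3$. Such a graph exists in the relevant regime $k=O(\log n)$; for larger $k$ the target bound $n^{\frac{1}{2k-2}}$ degrades to $\Omega(1)$ and the statement is trivial. Because the girth exceeds $2k$, the view $H$ of every player $u$ is exactly a tree of height $k$ rooted at $u$, whose leaves are the $q(q-1)^{k-1}$ vertices of $F$ at distance exactly $k$; I distribute the edges so that each one is owned by a single endpoint (the precise assignment will be immaterial).

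To verify the LKE condition I would split the deviations of a player $u$ according to the two regimes of Proposition~\ref{prop:sumncg_best_response}. First, any deviation that strictly increases the number of bought edges is harmless: a single extra edge costs $\alpha\ge kn$, whereas the total routing cost that $u$ can ever save is at most $k\cdot|\beta_{G,k}(u)|\le kn$, since every visible vertex lies within distance $k$. Second, for deviations that do not increase the distance of any vertex of $F$, Proposition~\ref{prop:sumncg_best_response} tells us the worst-case network is $H$ itself, so it is enough to reason inside the tree.

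The crux, and the step I expect to be the main obstacle, is ruling out deviations that lower the building cost by dropping edges. Here the locality of the model is decisive: since $H$ is a tree, the subtree hanging off a neighbor $v_i$ of $u$ can be reached only through an edge incident to $u$ entering that subtree, so removing $(u,v_i)$ pushes every leaf of that subtree beyond distance $k$ unless $u$ buys a new edge re-entering it. Moreover, as $q\ge 3$, a single re-entry edge $(u,z)$ keeps all $(q-1)^{k-1}$ leaves of $v_i$'s subtree within distance $k$ only when $z=v_i$, i.e.\ the move is void (any deeper $z$ is too far from the leaves in a sibling branch). Consequently, every deviation that does not increase an $F$-vertex must buy at least as many edges as it drops, so its building cost cannot decrease, and its routing savings are once more capped by $kn\le\alpha$. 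Combining the two regimes shows that $u$ has no improving move, hence the graph is an LKE.

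Finally I would compute the price of anarchy. Each edge is paid for exactly once, so the social cost of this equilibrium is at least its building cost $\alpha\cdot\Omega(n^{1+\frac{1}{2k-2}})$, while the social optimum is the spanning star of cost $\Theta(\alpha n+n^2)=\Theta(\alpha n)$, using $\alpha\ge kn\ge 2n$ so that the $\alpha n$ term dominates. The ratio is therefore $\Omega\!\big(\frac{\alpha\,n^{1+\frac{1}{2k-2}}}{\alpha n}\big)=\Omega\!\big(n^{\frac{1}{2k-2}}\big)$, which is the claimed bound.
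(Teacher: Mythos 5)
Your proposal is correct and follows essentially the same route as the paper's own proof: it reuses the high-girth $q$-regular construction of Lemma~\ref{lemma:lb_poa_max_dense}, exploits that each player's view is a tree of height $k$ so that (by Proposition~\ref{prop:sumncg_best_response}) dropped edges must be re-bought toward the subtree roots — the paper phrases this as "her neighbors are the medians of the corresponding subtrees" — and uses $\alpha \ge kn$ to kill any deviation that adds edges, before comparing the building cost against the star optimum. Your write-up is in fact more detailed than the paper's terse argument, but there is no methodological difference.
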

\begin{proof}
	We use the same construction of Lemma \ref{lemma:lb_poa_max_dense}.
	Remind that the view of each vertex $v$ is a tree of height $k$ with $q(q-1)^{i-1}$ vertices on level $i$. Therefore $v$ has to buy at least $q$ edges. Moreover, if she buys exactly $q$ edges, then she cannot improve her cost as her neighbors are the medians of the corresponding subtrees.

	As $\alpha \ge kn$, $v$ cannot improve her cost by increasing the number of bought edges. The claim follows.
\end{proof}

Finally, we have:
\begin{theorem}
If $k > 1 + 2\sqrt{\alpha}$, then in every equilibrium graph each player sees the whole graph, thus the set of LKEs coincides with the set of NEs.
\end{theorem}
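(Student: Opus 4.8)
The plan is to argue by contradiction: I assume that in some LKE $G$ there is a player $u$ whose $k$-neighborhood is not all of $V$, and I exhibit a single edge purchase that strictly decreases $u$'s cost, contradicting equilibrium. Since $G$ is connected and $u$ does not see everything, there is a vertex at distance strictly larger than $k$ from $u$; taking a shortest path to it and truncating it yields a shortest path $u = z_0, z_1, \dots, z_k$ of length exactly $k$, all of whose vertices lie in $u$'s view $H$ (each $z_i$ is at distance $i \le k$). In particular every $z_m$ is an admissible target for a new edge, since the strategy space consists of the subsets of the $k$-neighborhood.

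The key move is for $u$ to buy the single edge $(u, z_m)$ for a well-chosen midpoint index $m$. Crucially, adding an edge can only \emph{shorten} distances, so no vertex of $F$ (the distance-$k$ boundary) moves farther away; hence this move is exactly of the kind covered by Proposition~\ref{prop:sumncg_best_response}, and the worst-case realizable network coincides with $H$ itself. Thus $u$ may evaluate the move as in her view: in $H$ the purchase increases the building cost by $\alpha$ and, for every index $i \ge m$, reduces $d(u, z_i)$ from $i$ to at most $1 + (i - m)$, i.e.\ by at least $m - 1$. Counting the $k - m + 1$ vertices $z_m, \dots, z_k$, the routing cost therefore drops by at least $(m-1)(k-m+1)$.

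It then remains to choose $m$ so that this saving strictly exceeds $\alpha$. Writing $a = m - 1$, the bound reads $a(k - a)$, which over integers is maximized near $a = k/2$ and is at least $(k^2 - 1)/4$ (this minimum over parities is attained for odd $k$; for even $k$ one gets $k^2/4$, with $m = k/2 + 1 \le k$ since $k \ge 2$). The hypothesis $k > 1 + 2\sqrt{\alpha}$ gives $k^2 > (1 + 2\sqrt{\alpha})^2 > 1 + 4\alpha$, hence $(k^2-1)/4 > \alpha$, so the routing saving strictly beats the extra building cost $\alpha$ and the move is strictly improving: the desired contradiction. Consequently, in every LKE each player's $k$-neighborhood is all of $V$, so her view is $G$, her strategy space is the full power set $2^{V \setminus \{u\}}$, and (her view being the whole graph) her worst-case cost equals her actual cost; the LKE inequality $\Delta(\bar\sigma_u, \sigma_u) \ge 0$ then becomes exactly the Nash condition $C_u(\bar\sigma) \le C_u(\bar\sigma_{-u}, \sigma_u)$, and the two equilibrium sets coincide.

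I expect the main obstacle to be the tightness of the counting: a naive count that omits the target vertex $z_m$ yields only $(m-1)(k-m)$, which falls just short of $\alpha$ for even $k$, so the argument hinges on including $z_m$ (and, implicitly, the further vertices $z_i$ with $i$ slightly below $m$ that also get closer) so that the saving matches the threshold $k > 1 + 2\sqrt{\alpha}$ for every parity of $k$.
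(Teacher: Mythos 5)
Your proof is correct and follows essentially the same route as the paper's: exhibit a shortest path of length exactly $k$ inside the player's view, buy a single shortcut edge along it, and show the routing saving of roughly $k^2/4$ exceeds $\alpha$ whenever $k > 1 + 2\sqrt{\alpha}$, contradicting equilibrium. The only cosmetic difference is that you attach the new edge to a midpoint $z_m$ and count savings of $m-1$ on each of $z_m,\dots,z_k$, whereas the paper attaches it to the far endpoint $v$ and counts the savings on the last $\lfloor k/2 \rfloor$ path vertices; both countings give the same bound, and your explicit appeal to Proposition~\ref{prop:sumncg_best_response} (edge additions never push $F$-vertices farther, so the worst case is the view itself) is a point the paper leaves implicit.
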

\begin{proof}
	Let $G$ be an equilibrium with diameter at least $k$ (otherwise the claim is trivially true), and let $u$ and $v$ be two vertices such that $d_G(u,v)=k$.
	By buying the edge $(u,v)$, the player $u$ could decrease the cost needed to reach the last $\left\lfloor \frac{k}{2} \right\rfloor$ vertices along the shortest path between $u$ and $v$ by at least $\frac{k^2}{4}-\frac{2}{4}$.
	As $G$ is an equilibrium we must have $\frac{k^2}{4}-\frac{2}{4} \le \alpha$, which implies $k \le 2\sqrt{\alpha} + 1$.
\end{proof}

The previous results are summarized in Figure~\ref{fig:results-sum}. Notice that it remains open to establish a lower bound to the PoA for values of $k$ between $\Theta(\sqrt[3]{\alpha})$ and $\Theta(\sqrt{\alpha})$. Moreover, we plan in the future to study the corresponding upper bounds.

\begin{figure}[!tb]
\centering
\includegraphics[scale=1]{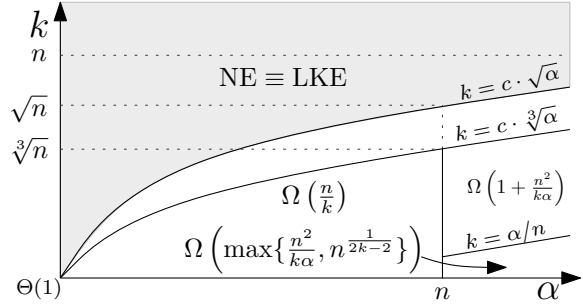}
\caption{Lower  bounds to the PoA for \SumNCG. The behavior of the PoA in the region between the two curves is open. Notice that the gray region is the set of pairs $(\alpha, k)$ such that, in every LKE, players actually have a full knowledge of the network, and so LKEs coincide with NEs.}
\label{fig:results-sum}
\end{figure}

\section{Experimental evaluation}
\label{sec:experiments}

In this section, we present an experimental study whose aim is to complement our theoretical analysis by providing insights on several features of the game.
Indeed, our asymptotic bounds depends on heterogeneous parameters, and so an empirical evidence of their tightness is required. In particular, our study concerns the behavior of best-response dynamics in \MaxNCG, by focusing on the convergence time (i.e., the number of steps needed to form a stable network) and on the structural properties of the resulting equilibria. This way, on one hand our goal is to check whether the theoretical existence of  better- and best-responce cycles is actually frequent, and on the other hand we essentially aim to verify whether stable networks have a social cost which follows the trend suggested by the analysis of the PoA we have provided.
As we will show, the answer to the first question is in the negative, while as far as the second aspect is concerned, we will provide an evidence that our analysis is actually quite accurate (at least when $\alpha$ is not too small).

\subsection{Experiments}

In our experiments, we simulated the best-response dynamics of the players for the \MaxNCG model.
The initial configuration consists of a network randomly sampled from a certain class of graphs, as we will describe in the following.
The players play in turns, following a round-robin policy, i.e., in each round we consider all the players, one at time. When a player is considered, we compute a best-response strategy according to her local knowledge of the network, and whenever this strategy is strictly better than the current one we update the network. Then, we move to the next player. We continue this process until we attain an equilibrium, i.e., we find a round where no player is able to improve her cost.
Moreover, since the convergence is not guaranteed, whenever the execution time exceeds a given threshold, we check if the last strategy profile of the current round already appeared as the last strategy profile of any previous round. In this case, since of the round-robin policy, we can conclude that the best-response dynamics admits a cycle, hence we stop the simulation as we know that no equilibrium will ever be reached by the players.

For each starting network, the above experiment is repeated with different values of the parameters $\alpha$ and $k$ taken from the set $\{ 0.025, 0.05, 0.1, 0.2, 0.3, 0.4, 0.5, 0.7, 1, 1.5, 2, 3, 5, 7, 10 \}$ and from the set $\{ 2, 3, 4, 5, 6, 7, 10, 15, 20, 25, 30, 1000 \}$, respectively. The case $k=1000$ corresponds  to the classical version of the game where every player has no restriction on her view.

In order to compensate for the variability in measurements, for each pair of the parameters $\alpha$ and $k$ we ran $20$ distinct experiments starting from different random graphs belonging to the same class. Overall, we simulated about $36\,000$ different dynamics and, after each round, we collected several different features of the current network such as: diameter, social cost, maximum/average degree, minimum/maximum/average number of bought edges, minimum/maximum/average number of vertices in the view of the players, along with others. Moreover, for each run we also collected global statistics as the number of rounds needed to reach an equilibrium (if any), and the total number of strategy changes performed by the players.

In Figures~\ref{fig:viewsize}-\ref{fig:convergence}, we will show how the mean values of those features (taken over the $20$ different starting networks) vary as a function of either $\alpha$, $k$ or $n$. We also report the corresponding $95\%$ confidence intervals, as long as they do not affect readability.

\subsection{Input classes of graphs}

We considered the following classes of graphs:

\begin{itemize}
	\item \emph{Random trees}: for a given number $n$ of vertices, we picked a tree uniformly at random from the set of all possible trees on $n$ vertices. We determined the ownership of each edge $(u,v)$ of the tree by choosing between $u$ and $v$ with a fair coin toss. The above was repeated for all the values of $n$ in  $\{ 20, 30, 50, 70, 100, 200 \}$. In Table~\ref{table:tree}, we report some statistics of the generated trees for the different values of $n$.

\begin{table}
	\centering
	\tbl{Statistics for the random trees used in the experimental evaluation.
	In each row, $20$ random trees with the same number $n$ of nodes are considered. The value of $n$ is reported in the first column. The remaining columns contain the average statistics over the corresponding trees along with their $95\%$ confidence intervals. 
	}{
	\begin{tabular}{|d{0}|d{2}|d{2}|d{2}|}
		\hline
		n & \multicolumn{1}{c|}{\mbox{Diameter}} & \multicolumn{1}{c|}{\mbox{Max. degree}} & \multicolumn{1}{c|}{\mbox{Max. Bought Edges}} \\ \hline
		20  & 10.65 \pm 0,76 & 4.00 \pm 0,26 & 2.75 \pm 0,34 \\ \hline
		30  & 13.90 \pm 1,18 & 4.30 \pm 0,31 & 3.15 \pm 0,31 \\ \hline
		50  & 19.55 \pm 1,48 & 4.60 \pm 0,35 & 3.30 \pm 0,31 \\ \hline
		70  & 22.10 \pm 1,57 & 5.05 \pm 0,39 & 3.50 \pm 0,32 \\ \hline
		100 & 25.15 \pm 1,95 & 5.35 \pm 0,35 & 3.45 \pm 0,28 \\ \hline
		200 & 43.20 \pm 3,95 & 5.30 \pm 0,31 & 3.85 \pm 0,31 \\ \hline	
	\end{tabular}}
	\label{table:tree}
\end{table}
	
	\item \emph{Erd\H{o}s-Rényi random graphs}: we generated random graphs according to the classical $G(n,p)$ model \cite{ER59} in which a graph of $n$ vertices is built by adding independently each (undirected) edge with a probability of $p$. The parameters $n$ and $p$ were chosen so that the resulting graph was likely to be connected. Any remaining unconnected graph was discarded and regenerated from scratch.  We generated graphs with both $100$ and $200$ vertices and, for each $n$, we set the values of $p$ in order to produce three graphs with different densities. As for trees, the owner of each edge was chosen uniformly at random between its endpoints. A summary of the resulting graphs can be found in Table~\ref{table:erdos}.

\begin{table}
	\centering
	\tbl{Statistics for the different Erd\H{o}s-Rényi random graphs used in the experimental evaluation.
	In each row, a different set of the parameters $n$ and $p$ is considered (first two columns). For each of them, $20$ different graphs have been randomly sampled. The remaining columns report the average statistics over these graphs along with their $95\%$ confidence intervals.}{ 
	\begin{tabular}{|d{0}|d{3}|d{2}|d{2}|d{2}|d{2}|}
		\hline
		n & p & \multicolumn{1}{c|}{\mbox{Edges}} & \multicolumn{1}{c|}{\mbox{Diameter}} & \multicolumn{1}{c|}{\mbox{Max. degree}} & \multicolumn{1}{c|}{\mbox{Max. Bought Edges}} \\ \hline
		100 & 0,060  & 301.10  \pm \ \ 7,51  & 5.30 \pm 0,22  & 12.50 \pm 0,67  & 7.90  \pm 0,43  \\ \hline
		100 & 0,100  & 494.40  \pm 10,86 & 4.00 \pm 0,00  & 18.45 \pm 0,82  & 11.40 \pm 0,69  \\ \hline
		100 & 0,200  & 984.35  \pm 12,23 & 3.00 \pm 0,00  & 29.90 \pm 1,26  & 18.25 \pm 0,57  \\ \hline
		200 & 0,035  & 698.25  \pm 11,22 & 5.25 \pm 0,21  & 15.35 \pm 0,61  & 9.20  \pm 0,42  \\ \hline
		200 & 0,050  & 992.45  \pm 10,18 & 4.20 \pm 0,19  & 19.30 \pm 0,55  & 12.50 \pm 0,65  \\ \hline
		200 & 0,100  & 2005.55 \pm 12,87 & 3.00 \pm 0,00 & 32.80  \pm 1,11  & 18.95 \pm 0,54  \\ \hline
	\end{tabular}}
	\label{table:erdos}
\end{table}
	
\end{itemize}

\subsection{Computing a best-response strategy}

As we pointed out in the theoretical part, the problem of computing a best response for a player is \np-hard. In order to deal with nontrivial values of $n$, we reduced the problem of computing a best response to the problem of computing (a variant of) the minimum dominating set of a suitable graph, and then we used the Gurobi solver \cite{gurobi} on classical integer linear programming formulation of minimum dominating set.
We now sketch the idea behind our approach.

Let $u$ be the player for which we want to compute a best-response strategy. First of all, from Proposition~\ref{prop:maxncg_best_response} we can just compute the best move of $u$ with respect to her view $H$. Now, we remove $u$ from the graph $H$, and we guess the eccentricity $h$ of $u$ in the graph resulting from a best-response move. If $H'$ denotes the $(h-1)$-th power\footnote{The $h$-th power of an undirected graph $G$ is a graph on the same vertex-set of $G$ where the edge $(u,v)$ exists iff the distance between $u$ and $v$ in $G$ is at most $h$.} of the graph $H \setminus \{u\}$, then it is easy to see that the original problem is equivalent to that of finding a minimum dominating set of $H'$ in which certain vertices are constrained to be included in the solution, namely those who bought an edge towards $u$ in the current strategy profile.

\subsection{Experimental results}

Here we discuss some interesting features of the experimental results concerning \MaxNCG. 

\paragraph{Knowledge of the network}

We start by arguing on how the size of the network known by players at equilibrium varies as a function of $k$.
This is both interesting by itself, and it will also be instrumental for further discussions regarding the quality of equilibria.

\begin{figure}[t]
	\includegraphics[width=0.5\textwidth]{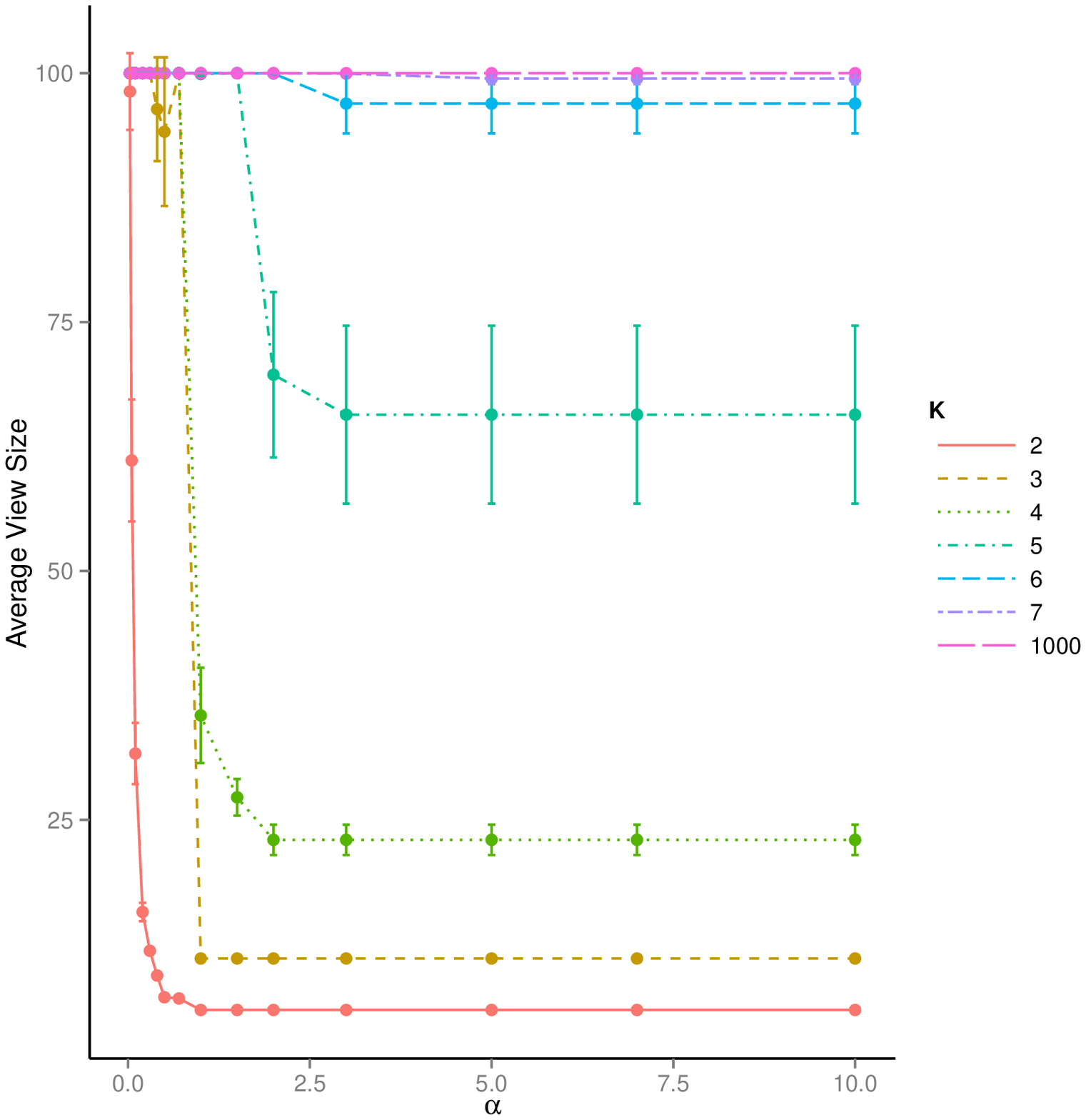}
	\includegraphics[width=0.5\textwidth]{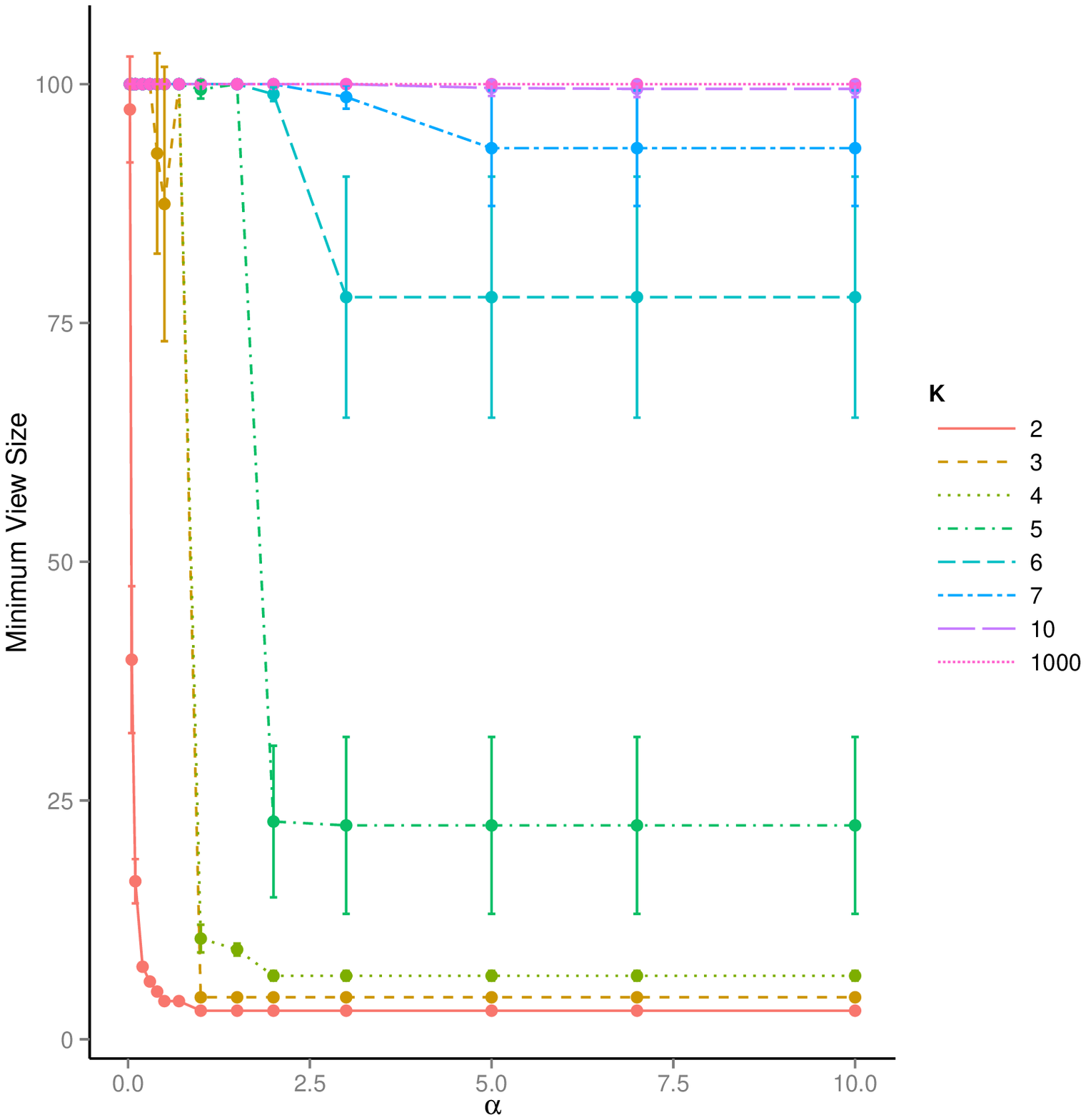}
	\caption{Minimum and average number of vertices in the players' view on stable networks as a function of $\alpha$ for the various values of $k$. Points correspond to mean values over $20$ different trees with $100$ vertices, while error bars represent $95\%$ confidence intervals. For the sake of readability, lines that coincide with the one for $k=1000$ are not depicted.}
	\label{fig:viewsize}
\end{figure}

Let us consider Figure~\ref{fig:viewsize}, where both the mean and the minimum number of vertices in the players' view is reported. The actual value of each point is the average over the different equilibria obtained from $20$ trees with $100$ vertices each (using the same set of parameters).
The same measurements on the other classes of graphs exhibit essentially the same behavior and are not reported.

Clearly, the view of a player decreases as $\alpha$ increases, and rapidly grows as $k$ becomes larger. Intuitively, the former is due to the fact that whenever $\alpha$ is small, players are more prone to buy a large number of edges. Regarding the latter, we observe that for the interesting threshold value of $k=7$, we have that, on average, a player knows more than $99$ vertices. Moreover, the minimum size is also pretty high, since even the player who knows the least portion of the network is able to see more than $93$ vertices (on average).

\paragraph{Quality of equilibria}

We now discuss how our theoretical bounds on the PoA compare to the experimental evaluation of the quality of equilibria, i.e., the ratio between the social cost of the attained equilibrium and the social optimum.

\begin{figure}[t]
	\includegraphics[width=0.5\textwidth]{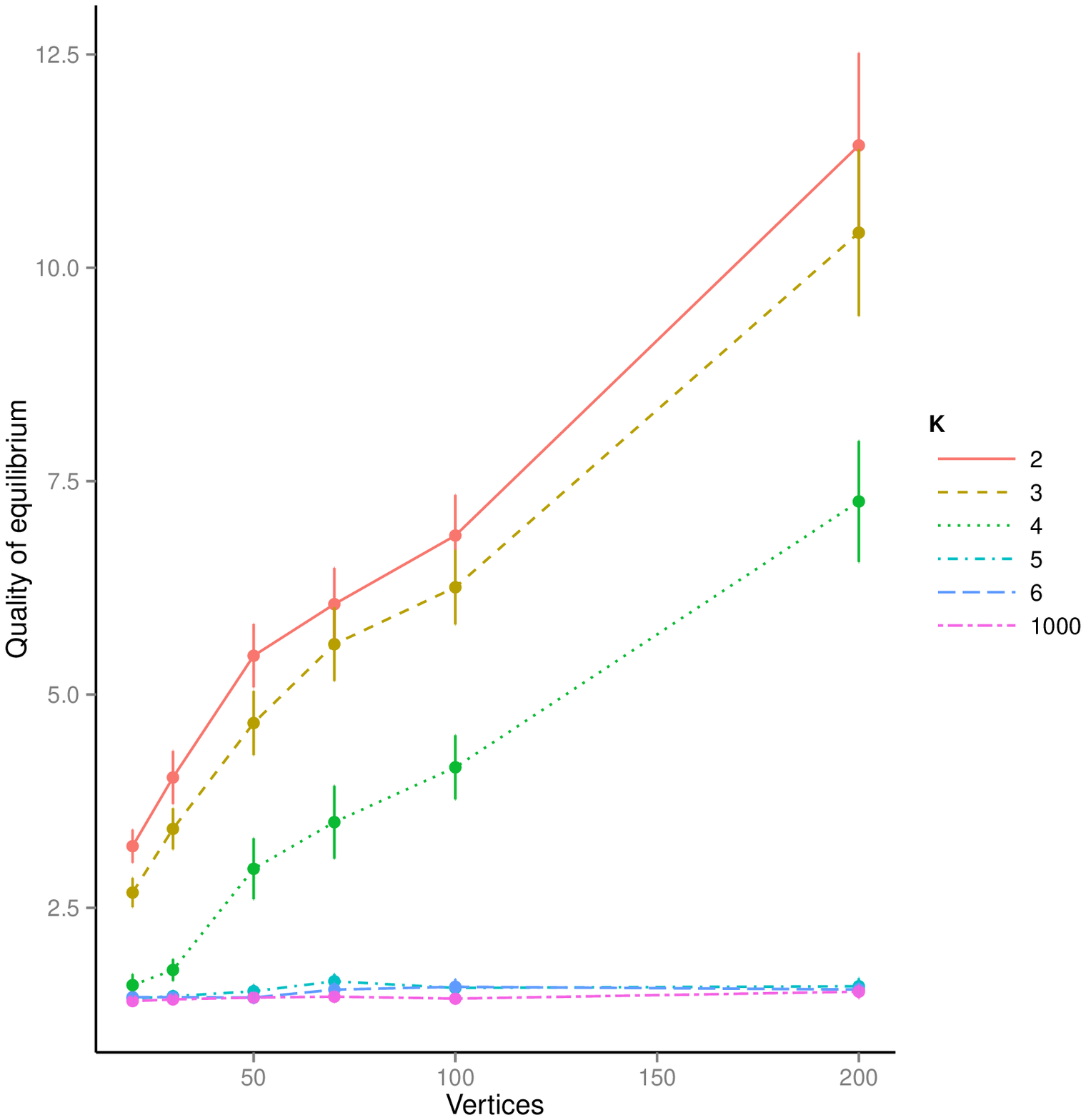}
	\includegraphics[width=0.5\textwidth]{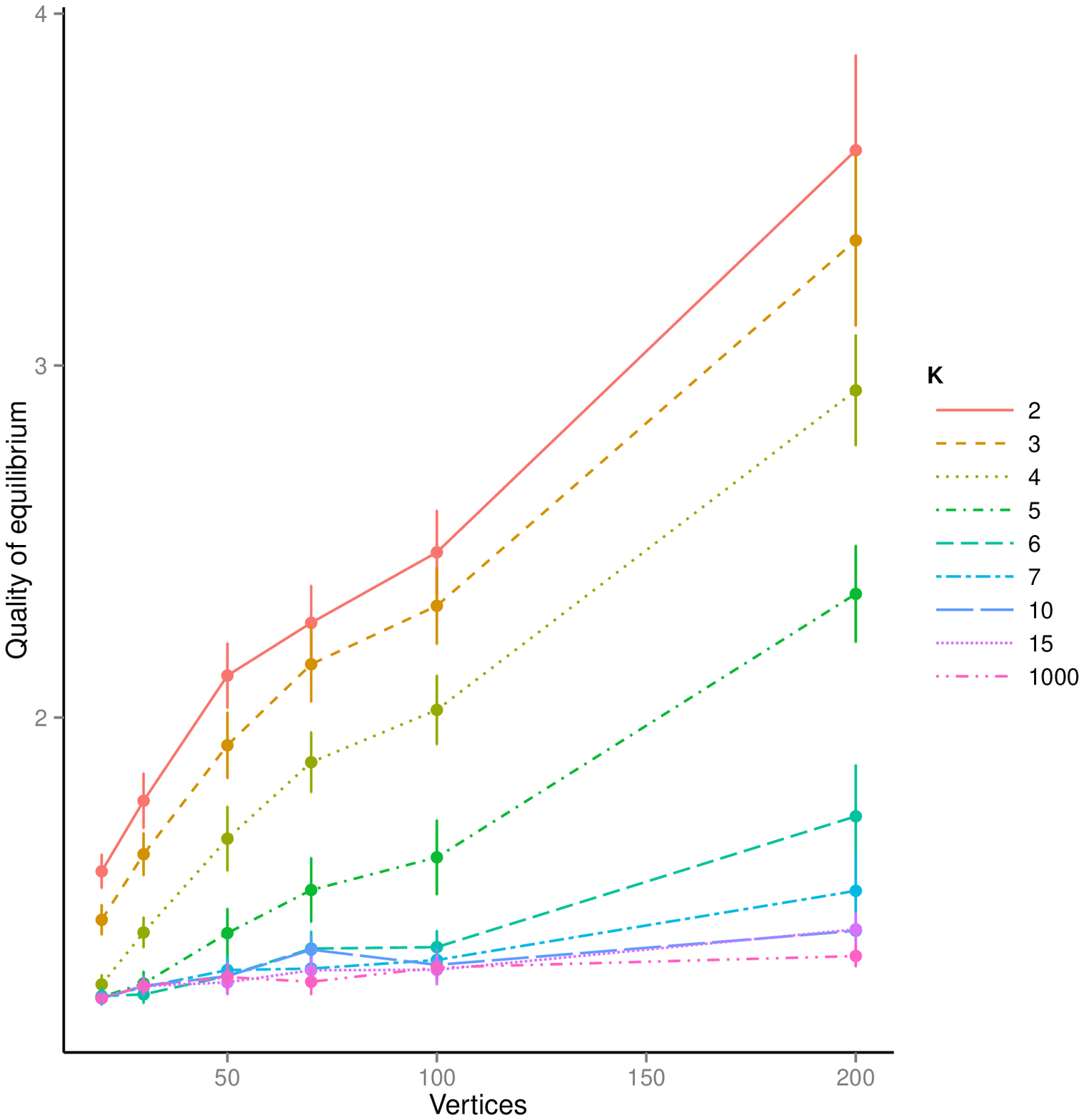}
	\caption{Quality of the stable networks as a function of $n$ for various values of $k$ on random trees. The left picture refers to $\alpha=1$ while the right one refers to $\alpha=10$. Points correspond to mean values over $20$ different trees, while error bars represent $95\%$ confidence intervals. For the sake of readability, lines that essentially coincide with the one for $k=1000$ are not depicted. Notice that for small values of $k$ the quality degrades linearly while for larger values of $k$ the PoA is almost constant.}
	\label{fig:poa}
\end{figure}

First of all, notice that our bounds show that the PoA decreases as $\alpha$ and/or $k$ increase. Moreover, for fixed values of the parameters, the PoA is $\Theta(n)$ whenever $\alpha \ge 2$.
This trend can be easily recognized in Figure~\ref{fig:poa} (b), where the quality of the equilibria on experimented trees is plotted against the number of vertices for $\alpha=10$. Notice that as soon as $k$ exceeds $6$, the quality of equilibria drastically improves. This is not surprising since, as already discussed, the players have a (almost) full-knowledge of the stable networks, and hence the (almost) constant bounds to the PoA given for the classical version of \MaxNCG hold.

Let us now focus on smaller values of $\alpha$, for example $\alpha=1$ as shown in Figure~\ref{fig:poa} (a). The general trend of the quality of equilibria is similar, except for the fact that $k \ge 5$ is now sufficient to yield full-knowledge equilibria. By contrast, our theoretical upper bound for smaller values of $k$ of
$O\bigg( n^{\frac{2}{\alpha}} + \min \{\frac{n\alpha}{k^2},  \frac{nk}{\alpha 2^{\Theta(\log^2 \frac{k}{\alpha})}} \} \bigg)$ is now $O(n^{\frac{2}{\alpha}} + n) = O(n^2)$.
This comes from the first term of the sum, which is related to the density of the equilibria. In our experiments, however, the resulting stable networks happened to be sparse, so that the usage cost prevails instead. This suggests that our density bounds might be too coarse, and this is somewhat supported by the corresponding lower bound of $\Omega\bigg( \max\{ \frac{n}{\alpha 2^{\Theta(\log^2 \frac{k}{\alpha})}}, n^\frac{1}{2k-2} \}\bigg)$, which now simplifies to $\Omega(n)$ whenever the density term is dominated by the usage cost term.

\begin{figure}[t]
	\includegraphics[width=0.5\textwidth]{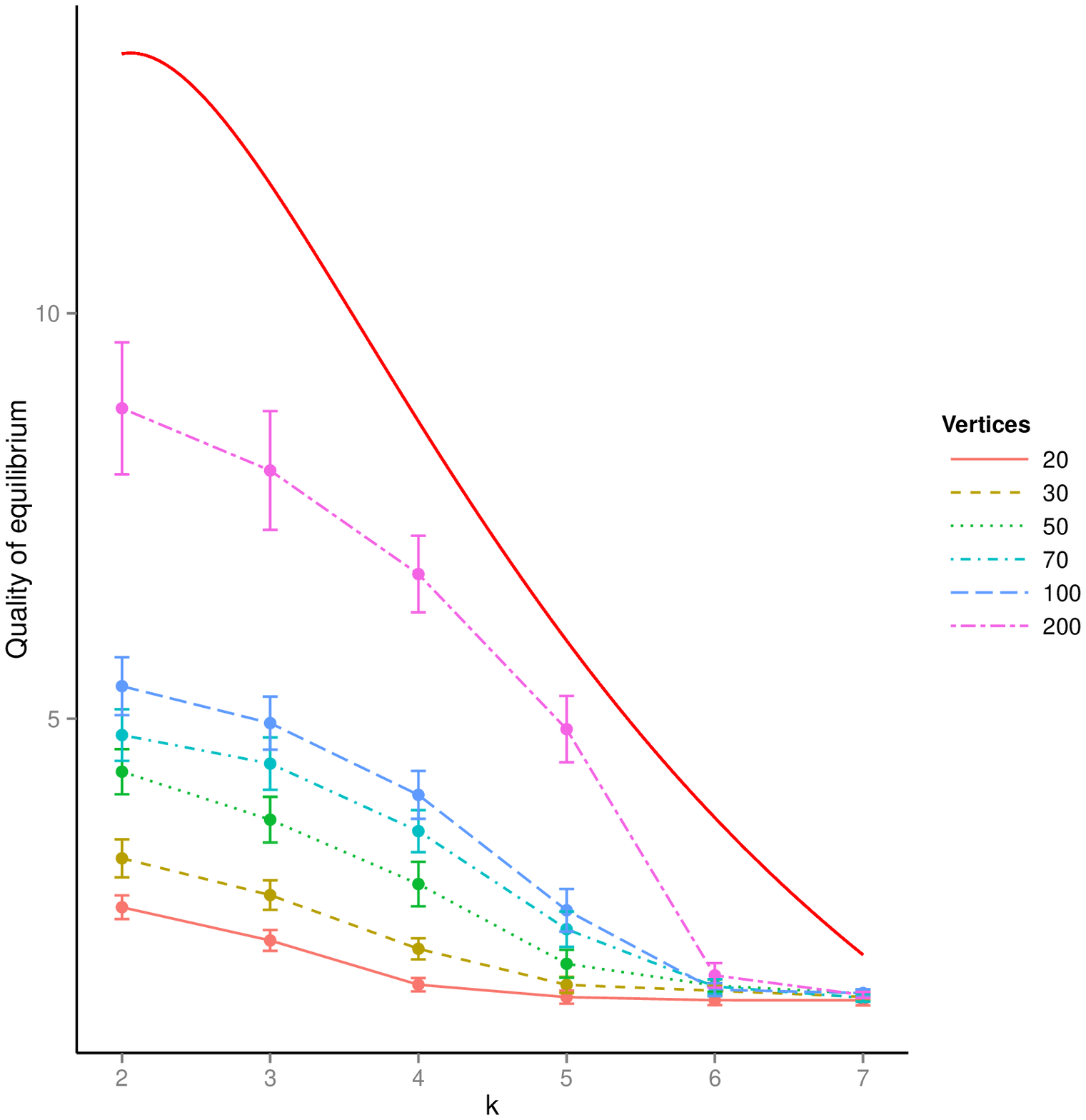}
	\includegraphics[width=0.5\textwidth]{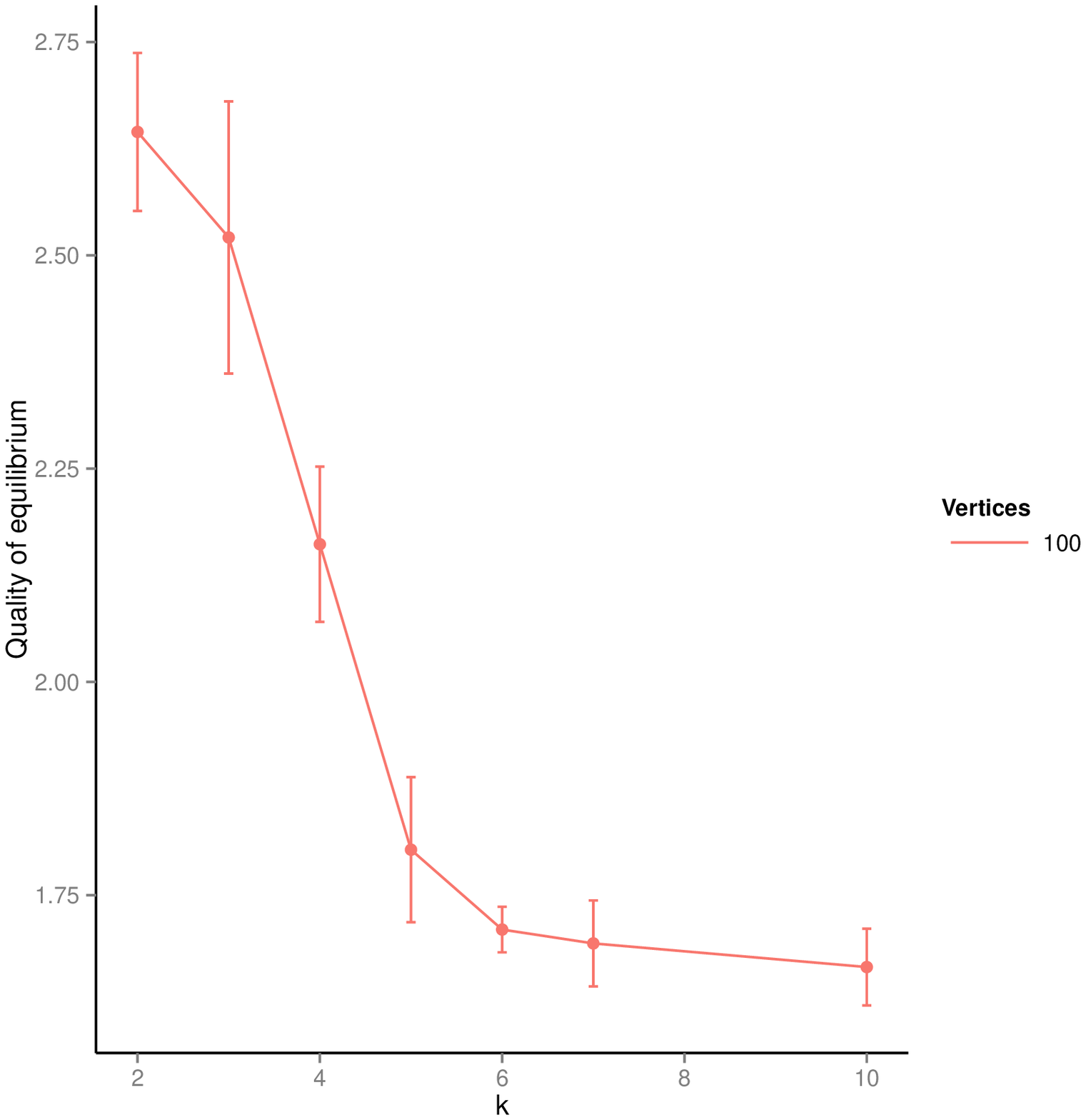}
	\caption{Quality of the stable networks as a function of $k$ for various values of $n$, and $\alpha=2$. The bold red line represents the trend of our theoretical upper bound. The left picture refers to random trees while the right one refers to random graphs with $n=100$ and $p=0.2$. Points correspond to mean values over $20$ different graphs, while error bars represent $95\%$ confidence intervals.}
	\label{fig:poa-k}
\end{figure}

We now examine in more detail how the quality of equilibria decreases as a function of $k$. Notice that, our theoretical upper bound to the PoA of $O\bigg( n^{\frac{2}{\alpha}} + \min \{\frac{n\alpha}{k^2},  \frac{nk}{\alpha 2^{\Theta(\log^2 \frac{k}{\alpha})}} \} \bigg)$, reduces to $f(k)=O(\frac{k}{2^{\log^2 k}})$, once $\alpha \ge 2$ and $n$ are fixed to be constant. In Figure~\ref{fig:poa-k} we report the measured quality of equilibria as a function of $k$ on both random trees (left) and graphs (right), for different values of $n$, and $\alpha=2$. Moreover, the trend of $f(k)$ is shown in red as a benchmark. For different values of $\alpha$ the trend is very similar, but the actual quality of equilibria scales down as expected.

\paragraph{Fairness of equilibria}

\begin{figure}[t]
	\includegraphics[width=0.5\textwidth]{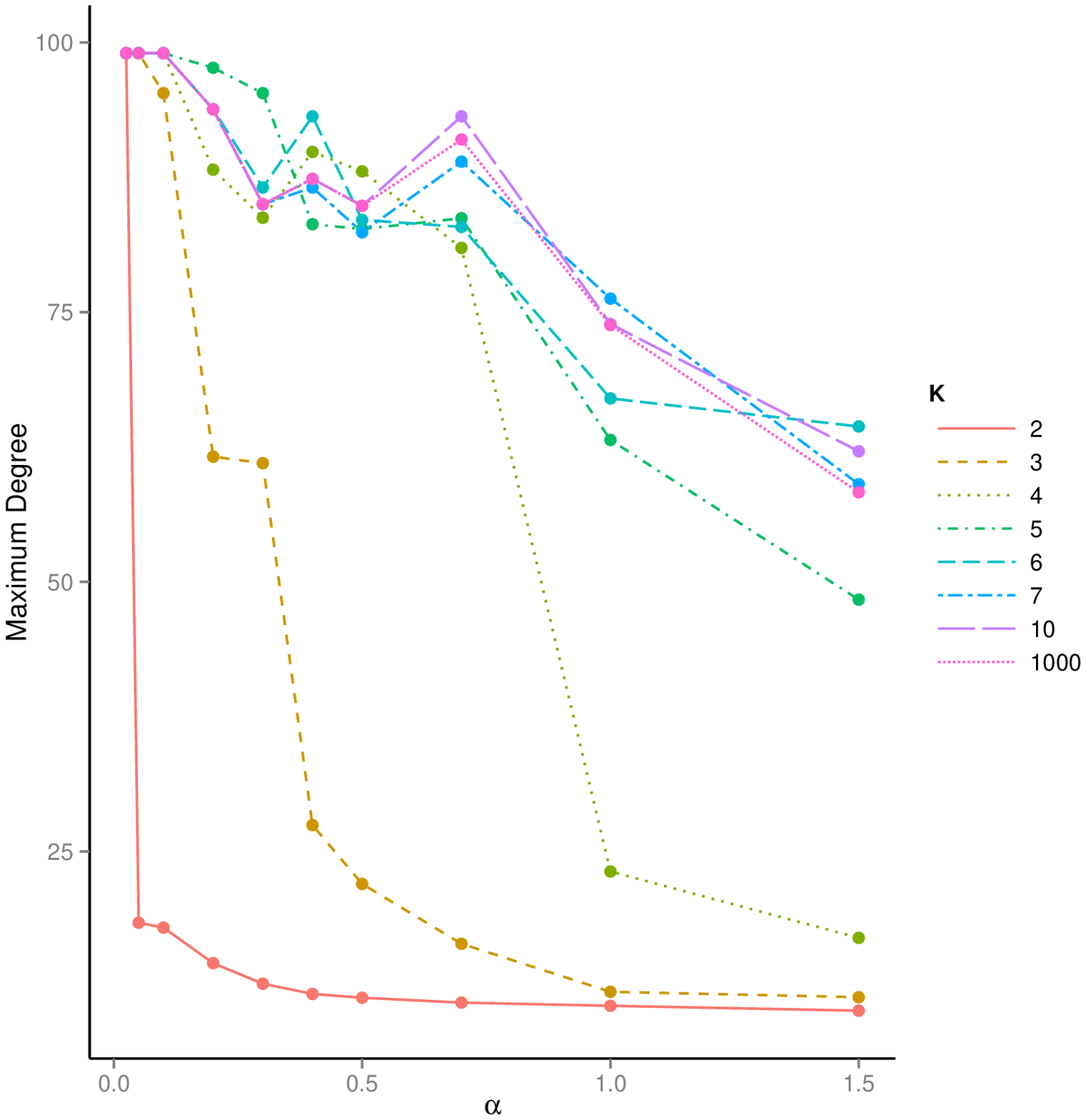}
	\includegraphics[width=0.5\textwidth]{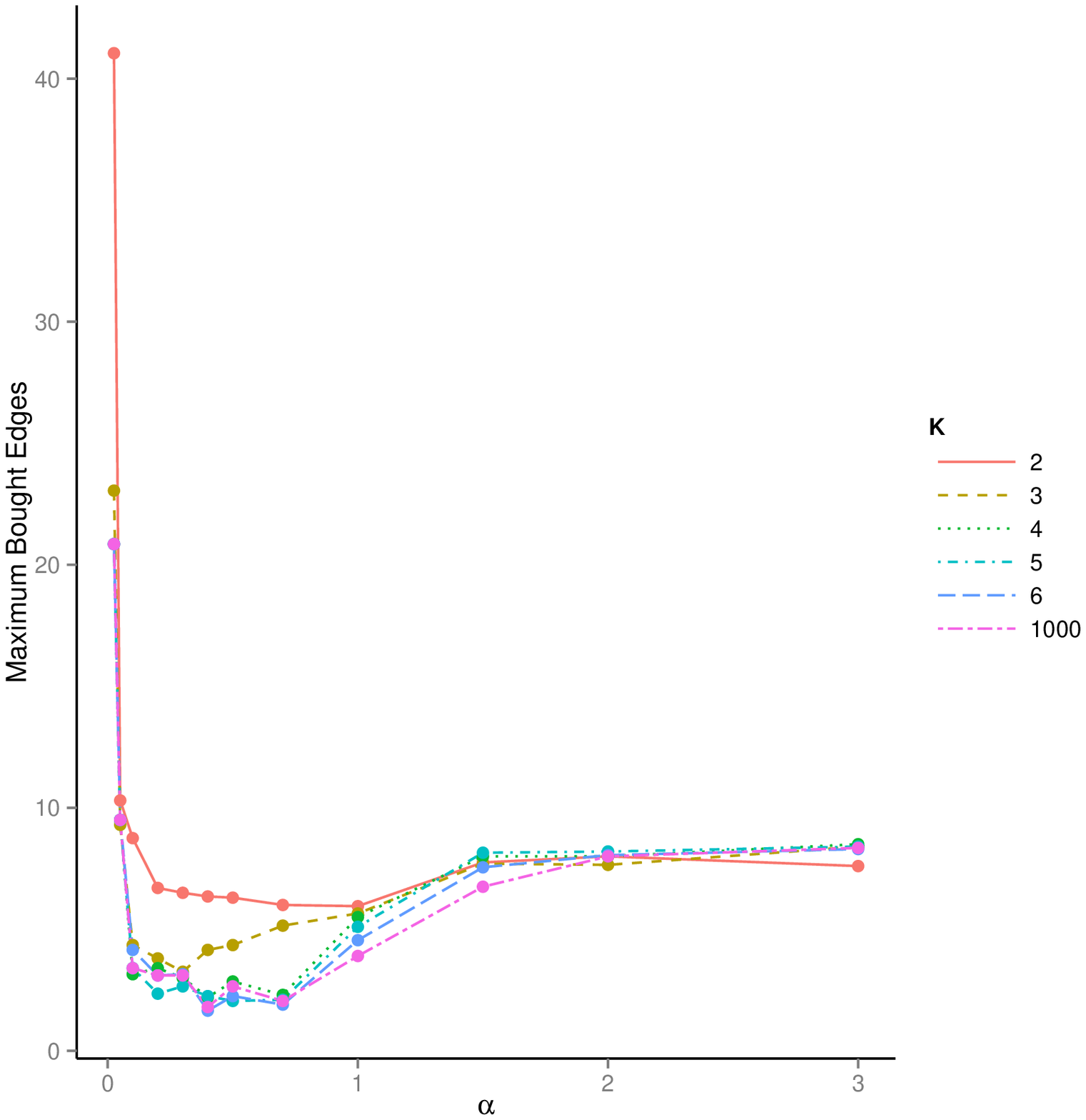}
	\caption{Maximum degree (left) and maximum number of bought edges (right) as a function of $\alpha$ for various values of $k$. Points correspond to mean values over $20$ different random graphs with $100$ vertices and $p=0.1$. For the sake of readability, lines that essentially coincide with the one for $k=1000$ are not depicted.}
	\label{fig:maxdeg}
\end{figure}

\begin{figure}[t]
	\centering
	\includegraphics[width=0.5\textwidth]{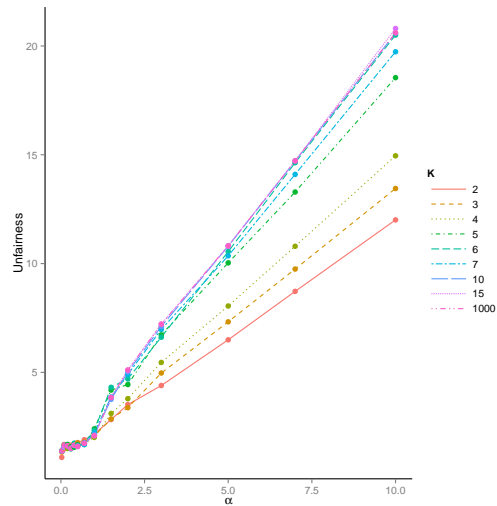}
	\caption{Unfairness ratio, i.e., the ratio between the highest and the lowest of players' costs, as a function of $\alpha$ for various values of $k$. Points correspond to mean values over $20$ different random graphs with $100$ vertices and $p=0.1$. For the sake of readability, lines that coincide with the one for $k=1000$ are not depicted. Notice small values of $k$ yield more fair equilibria. }
	\label{fig:fairness}
\end{figure}

We also collected statistics about the maximum degree of stable networks and the maximum number of edges bought by the players (see Figure~\ref{fig:maxdeg}, concerned with random graphs of $100$ nodes and with $p=0.1$). It is worth to notice that for $k \ge 4$ and small values of $\alpha$, we have that the maximum degree is more than $80$, while a player does not buy more than $9$ edges (for every values of $\alpha$ and $k$). This has a nice consequence in terms of fairness of stable networks, as shown in Figure~\ref{fig:fairness}, which decreases as $k$ increases. This would suggest that restricting the view of the players  could help to converge towards stable networks where players' costs do not differ too much.

\paragraph{Convergence time}

\begin{figure}[t]
	\includegraphics[width=0.5\textwidth]{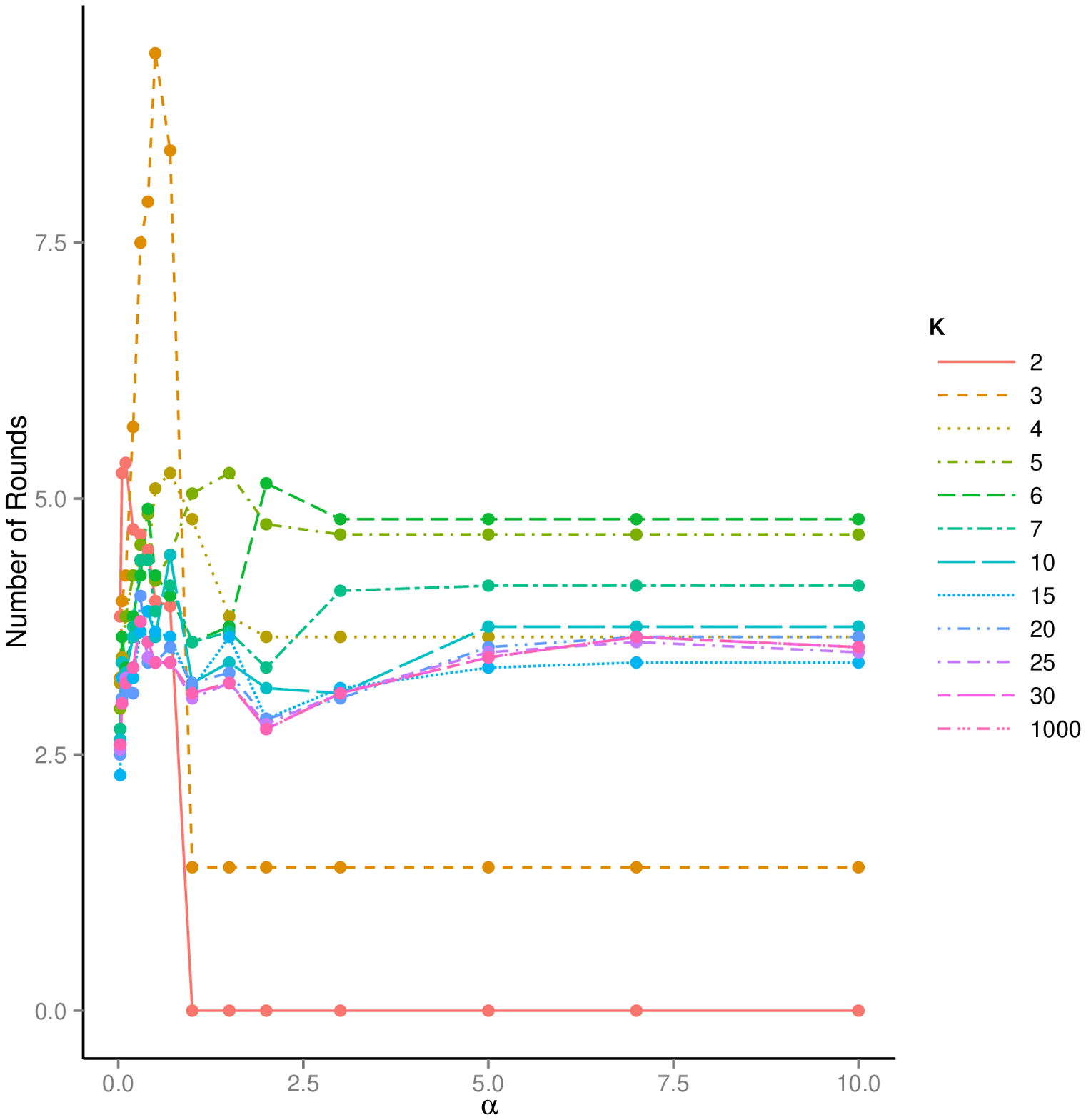}
	\includegraphics[width=0.5\textwidth]{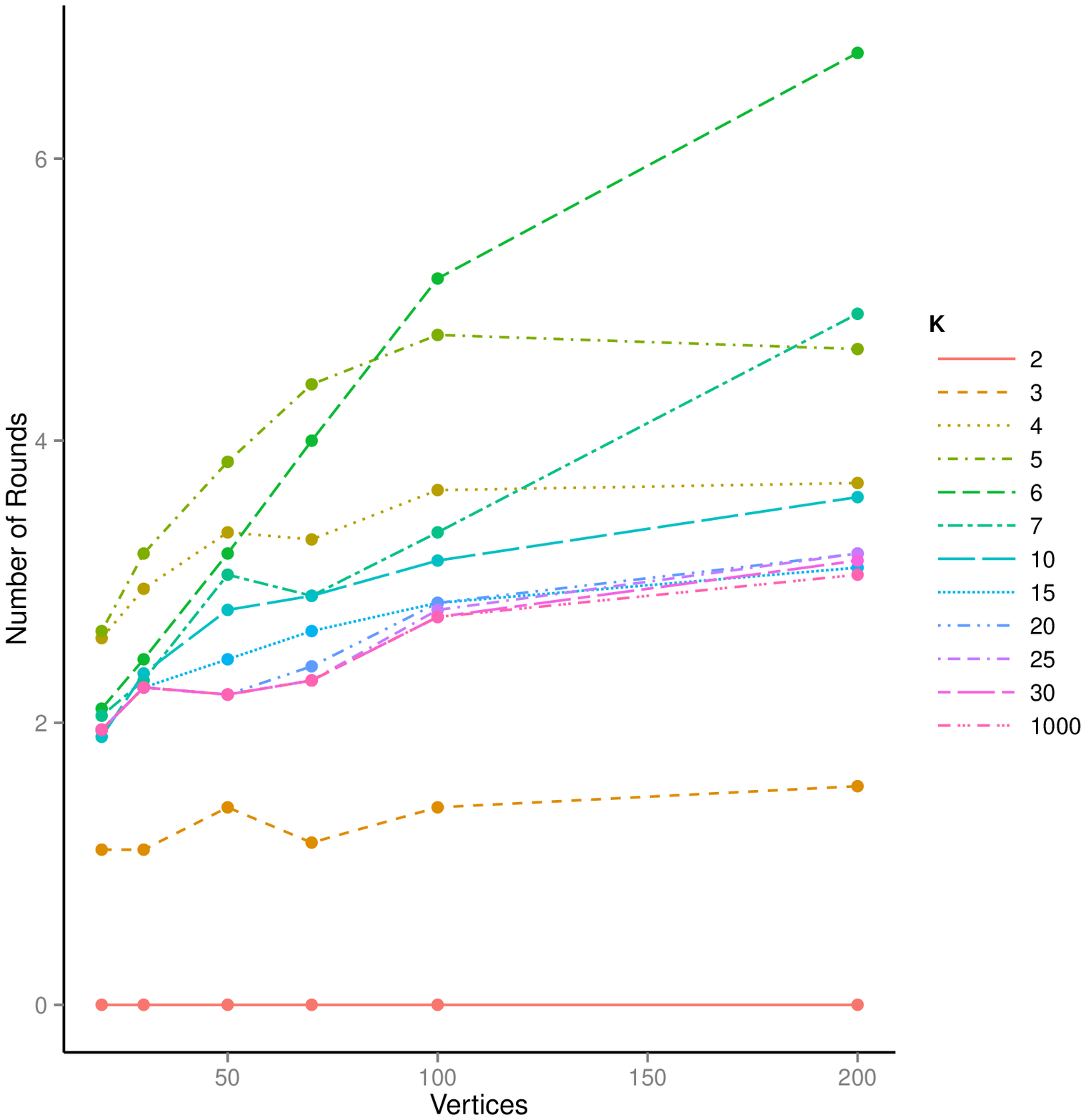}
	\caption{Number of rounds needed to converge to a stable network as a function of $\alpha$ with $n=100$ (left) and as a function of $n$ with $\alpha=2$ (right) for various values of $k$. Points correspond to mean values over $20$ different random trees.}
	\label{fig:convergence}
\end{figure}

To conclude this section let us focus on the convergence time, i.e., the number of rounds needed to actually reach an equilibrium (if any). As we pointed out in the technical part, the convergence of a best-response dynamics is not guaranteed. In practice, however, it seems very common: we simulated about $36\,000$ best-response dynamics, and only encountered best-response cycles in $5$ of them.
In all the other cases convergence appeared to be pretty fast, as shown in Figure~\ref{fig:convergence} for random trees. In fact, starting from both random trees and Erd\H{o}s-Rényi graphs, for almost every combination of $\alpha$ and $k$, in more than $95\%$ of the times, at most $7$ rounds are enough to converge to a stable network. Finally, the total number of rounds increases with $n$ as one would expect, although this happens quite slowly for almost every choice of the parameters.

\section{Conclusions}
\label{sec:conclusions}

In this paper we have studied the game-theoretic and computational implications of a limited players' view (more precisely, confined to a prescribed distance radius from each player) in the autonomous formation of a (large) network. In this scenario, we developed a systematic analysis on the PoA of the two classic variants of the game, namely \textsc{MaxNCG} and \textsc{SumNCG}, along with an extensive experimental study, which, for the computational feasibility reasons that have been explained in the paper, were limited to the former game, though.

Concerning our future research, on one hand we plan to investigate some of the issues that we left (partially) open, in particular the PoA space for the \SumNCG needs to be explored in more detail. On the other hand, we believe that our incomplete-knowledge approach deserves to be extended in several directions. As a first step in this regard, in \cite{BGLP14} we have considered three local-knowledge models usually adopted in {\sc Network Discovery}, i.e., the optimization problem of reconstructing the topology of an unknown network through a minimum number of queries at its nodes. For these models, we provided exhaustive answers to the canonical algorithmic game theoretic questions w.r.t. our LKE concept.
Besides studying new models, however, we feel that our own model has still few intriguing issues that should be investigated. Above all, it would be interesting to relax our worst-case approach, and analyze a NCG under a Bayesian perspective. Finally, we feel that the local-knowledge model could be extended to other (network) game-theoretic settings, given that the global knowledge is a recurring, still critical, assumption.

\section*{Acknowledgments}
We wish to thank Stefano Smriglio for useful discussions concerning the experimental section of the paper.

\bibliographystyle{ACM-Reference-Format-Journals}
\bibliography{bibliography}

\received{Month Year}{Month Year}{Month Year}

\end{document}